\documentclass[american,aps,pra,reprint, superscriptaddress]{revtex4-1}
\usepackage[unicode=true,pdfusetitle, bookmarks=true,bookmarksnumbered=false,bookmarksopen=false, breaklinks=false,pdfborder={0 0 0},backref=false,colorlinks=false] {hyperref}
\hypersetup{ colorlinks,linkcolor=myurlcolor,citecolor=myurlcolor,urlcolor=myurlcolor}
\usepackage{graphics,epstopdf,graphicx, amsthm, amsmath, amssymb, times, braket, colortbl, color, bm, framed, cleveref, mathrsfs}
\usepackage{textcase}
\usepackage[up]{subfigure}
\usepackage{tikz}
\definecolor{myurlcolor}{rgb}{0,0,0.7}

\newcommand{\tinyspace}{\mspace{1mu}}

\newcommand{\proj}[1]{| #1\rangle\!\langle #1 |}

\newcommand{\iinner}[2]{\langle #1 | #2\rangle}

\newcommand{\Pa}[1]{\left(#1\right)}
\newcommand{\Br}[1]{\left[#1\right]}
\newcommand{\abs}[1]{\left\lvert\tinyspace #1 \tinyspace\right\rvert}
\newcommand{\norm}[1]{\left\lVert #1 \right\rVert}

\theoremstyle{plain}
\newtheorem{thm}{Theorem}
\newtheorem{lem}[thm]{Lemma}
% r
% \newtheorem{thm}{Theoem}[section]
\newtheorem{prop}[thm]{Proposition}
\newtheorem{cor}[thm]{Corollary}

\newcommand*{\myproofname}{Proof}

\def\ot{\otimes}
\def\complex{\mathbb{C}}
\def\real{\mathbb{R}}

\def\cI{\mathcal{I}}

\def \dif {\mathrm{d}}

\def\rS{\mathrm{S}}

\makeatother

\begin{document}

 \author{Kaifeng Bu}
 \email{bkf@zju.edn.cn}
 \affiliation{School of Mathematical Sciences, Zhejiang University, Hangzhou 310027, PR~China}
\author{Swati}
 \email{swati.ssingh03@gmail.com}
 \affiliation{Amity Institute of Applied Sciences, Amity University, Noida, 211019, India}
 \author{Uttam Singh}
 \email{uttamsingh@hri.res.in}
 \affiliation{Harish-Chandra Research Institute, Allahabad, 211019, India}
  \author{Junde Wu}
 \email{wjd@zju.edn.cn}
 \affiliation{School of Mathematical Sciences, Zhejiang University, Hangzhou 310027, PR~China}

\title{Coherence breaking channels and coherence sudden death}

\begin{abstract}
Quantum noise is ubiquitous to quantum systems as they incessantly interact with their surroundings and results in degrading useful resources such as coherence for single quantum systems and quantum correlations for multipartite systems. Given the importance of these resources in various quantum information processing protocols, it is of utmost importance to characterize how deteriorating is a particular noise scenario (quantum channel) in reference to a certain resource? Here we develop a theory of coherence breaking channels for single quantum systems. Any quantum channel on a single quantum system will be called a coherence breaking channel if it is an incoherent channel and maps any state to an incoherent state. We explicitly and exhaustively characterize these coherence breaking channels. Moreover, we define the coherence breaking indices for incoherent quantum channels and present various examples to elucidate this concept. We further introduce the concept of coherence sudden death under noisy evolutions and make an explicit connection of the phenomenon of coherence sudden death with the coherence breaking channels and the coherence breaking indices together with various suggestive examples. Furthermore, for higher dimensional Hilbert spaces, we establish the typicality of the dynamics of coherence under any incoherent quantum channel exploiting the concentration of measure phenomenon.
\end{abstract}

\maketitle

% %================================%
\section{Introduction}
% %================================%
Quantum coherence and entanglement have been corner stones of the quantum information theory both with discrete quantum systems and continuous variable quantum systems \cite{Schrodinger1935, Sudarshan63}. Though the concept of quantum entanglement requires quantum systems with at least two parties, coherence can be defined for a single quantum system \cite{Ivan2013}. However, these two notions are closely related. For example, entanglement is nonlocal nonclassicality for continuous variable quantum systems \cite{Ivan2013} while for discrete quantum one can convert coherence into entanglement \cite{Alex15}. Recent developments in the fields of quantum biology \cite{Plenio2008,Lloyd2011,Levi14} and the theory of quantum thermodynamics \cite{Aspuru13,Rudolph214,Lostaglio2015,Brandao2015,Narasimhachar2015, Piotr2015, Misra2016} urge for better understanding of quantum coherence and therefore, the resource theories of quantum coherence are developed in recent years \cite{Baumgratz2014, Yadin2015, Aberg14, Bai2015, Bromley2015, Cheng2015, Chitambar2016, ChitambarA2015, Chitambar2015, Du2015, Girolami14, Hu2016, Mani2015, Marvian14, MondalA2015, Napoli2016, Peng2015, Piani2016, Swapan2016, Fan15, Uttam2015, UttamA2015, UttamS2015, Singh2016, Zhang2015, Buc2015, BuD2016, Bu2015, AlexB2015, StreltsovA2015, Xi2015, Yao2015, Winter2016}.

Quantum systems with coherence are resourceful in quantum thermodynamics and quantum biology, however, the resourcefulness of a quantum system gets degraded over the time due to its uncontrollable constant interactions with the surrounding environment. These interactions of a quantum system with the surrounding environments give rise to various quantum noise models (noisy quantum channels). The characterization of these noisy channels and their effect on various physical resources are of huge practical value and therefore various special kinds of noisy channels have been considered \cite{Nielsen10}. For example, in the context of entanglement as a resource, entanglement breaking channels have been characterized completely \cite{Horodecki2003,Ruskai2003} and similarly, nonclassicality-breaking channels \cite{Ivan2013,Sabapathy2015} have been analysed in detail in the context of nonclassicality as a resource for continuous variable quantum systems. The entanglement breaking channels and the nonclassicality breaking channels make any input state separable or classical, respectively. The characterizations of the entanglement breaking channels  and the nonclassicality breaking channels are given in Refs. \cite{Horodecki2003} and \cite{Sabapathy2015}, respectively.

In this work, we focus on the effect of noisy quantum channels on coherence of single quantum systems. To address this problem, we first define coherence breaking channels (CBCs) which output only incoherent states for every input state and completely characterize the form of such channels. We then introduce selective coherence breaking channels (SCBCs) and prove that the characterization of these two classes of channels is equivalent. We also comment on the possible connection between the entanglement breaking channels and the coherence breaking channels. Moreover, we provide the inter-relation between the coherence breaking channels and other relevant incoherent  operations such as the strictly incoherent operations (SIOs) \cite{Yadin2015} and the dephasing covariant operations (DIOs) \cite{Chitambar2016}. We then define the coherence breaking indices of incoherent quantum channels. The coherence breaking index $n(\Phi)$ of an incoherent quantum channel $\Phi$ is the minimum number of iterations of $\Phi$ that are required in order to make $\Phi$ a coherence breaking channel, i.e., $n(\Phi) = \min\{ n: \Phi^n \in \mathcal{S}_{\mathrm{cbc}}\}$. Here, $\mathcal{S}_{\mathrm{cbc}}$ is the set of all coherence breaking channels. Further, we introduce the notion of coherence sudden death during incoherent evolutions of a quantum system and importantly characterize the coherence sudden death using the coherence breaking channels and the coherence breaking indices. We provide various suggestive examples to elucidate the concept of the coherence sudden death. Moreover, we show the concentration effect for the coherence of evolved state under any incoherent quantum channel starting from a random pure state chosen according to the Haar distribution for higher dimensional Hilbert spaces. This result is based on the extremely powerful result known as the L\'evy's lemma and establishes the universality of the coherence dynamics under incoherent evolutions.

This work is organized as follows. We start with a brief introduction of quantum channels and the resource theory of coherence in Sec. \ref{sec:prelims}. In Sec. \ref{sec:cbt} we define two kinds of coherence breaking channels and give  the characterizations of both the channels. We link the coherence breaking channels with SIOs and DIOs in Sec. \ref{sec:rel}. Sec. \ref{sec:dis} is devoted to the coherence breaking indices of incoherent quantum channels. In Sec. \ref{sec:coh-sud} we discuss the notion of coherence sudden death and its relation to the coherence breaking channels and  the coherence breaking indices. Moreover, we establish the universality (typicality) of the coherence dynamics in higher dimensional Hilbert spaces. Finally, we conclude in Sec. \ref{sec:con} with a brief overview of the results obtained in this work.

\section{Preliminaries}
\label{sec:prelims}
Here we present the relevant basic tools and concepts that are required for presenting our main results.

\noindent
{\it Quantum channels.--} A quantum channel is a linear completely positive and trace preserving (CPTP) map which appears naturally in the description of open quantum systems and plays an important role in quantum information theory \cite{Nielsen10}. According to the Kraus representation theorem, any linear map $\Phi$ is a CPTP map if and only if it can be represented by a set of Kraus operators $\set{K_i(\Phi)}_{i=1}^{N}$ as follows.
\begin{align}
\Phi[\rho]:=\sum_{i=1}^{N}K_i(\Phi)\rho K_i^\dagger(\Phi)
\end{align}
with $\sum_nK^\dag_nK_n=\mathbb{I}$. In this work we will require another very useful characterization of quantum channels which is the Choi-Jamio{\l}kowski isomorphism \cite{Jamiolkowski1972, Aolita2015}.

\noindent
{\it Choi-Jamio{\l}kowski isomorphism.--}Qualitatively, it states that a channel acting on a single party $S$ and the corresponding bipartite mixed state acting on $SA$ are informationally equivalent. Here $A$ is some ancilla system with the same dimension $d$ as of the system. More precisely, Choi-Jamio{\l}kowski isomorphism states that a channel $\Phi$ acting on a single quantum system is completely positive if and only if $\Phi\otimes \mathbb{I}$ applied to the maximally entangled state $\ket{\beta}^{SA}=\frac{1}{\sqrt{d}}\sum_{i=0}^{d-1}\ket{ii}^{SA}$ yields a positive semidefinite operator $\rho_\Phi$, i.e., $\rho_\Phi=\Phi\otimes \mathbb{I}[\ket{\beta}\bra{\beta}^{SA}]\geq 0$ with $\mathrm{Tr}_S[\rho_\Phi]=\frac{1}{d}$. Conversely, the isomorphism implies that for every positive semidefinite operator $\rho^{SA}$ there exists a unique channel $\Phi_\rho$ acting on $S$ such that $\rho^{SA} = \Phi_\rho\otimes\mathbb{I}[\ket{\beta}\bra{\beta}^{SA}]$.

\noindent
{\it Resource theories of quantum coherence.--} A resource theory comprises two basic elements: one is the set of allowed (free) operations and other being the set of allowed (free) states. The set of allowed operations is governed by the physical situations at hand. For example, in the resource theory of entanglement the allowed operations are the local operations and classical communication (LOCC) as it is not possible to implement global operations on two parties that are separated and located far away from each other. Similarly, the allowed operations in the other known resource theories such as the resource theory of thermodynamics and the resource theory of reference frames are obtained based on the relevant physical situations. However, there is still no general consensus on the set of allowed operations in the resource theory of coherence and we have resource theories of coherence based on incoherent operations and symmetric operations \cite{Gour2008, Marvian14, Baumgratz2014, AlexB2015, Chitambar2016}.

In this work, we consider the resource theory of coherence based on incoherent operations \cite{Baumgratz2014}. It is important to note here that the measures of coherence obtained using this resource theory are proved to be operationally meaningful \cite{UttamA2015, Winter2016} and play a crucial role in establishing quantitatively the emergence of entanglement from coherence \cite{Alex15}. The measures of coherence as obtained in this resource include $l_1$ norm of coherence and the relative entropy of coherence. For a $d$ dimensional quantum system in a state $\rho$ and a fixed reference basis $\{\ket{i}\}$, the $l_1$ norm of coherence $C_{l_1}(\rho)$ and the relative entropy of coherence $C_{r}(\rho)$ are respectively, defined as
\begin{align}
&C_{l_1}(\rho) = \sum_{\substack{{i,j=0}\\{i\neq j}}}^{d-1} |\bra{i}\rho\ket{j}| \mathrm{~and}\\
&C_r(\rho)=S(\rho^{(d)})-S(\rho),
\end{align}
where $\rS(\rho)=-\mathrm{Tr}{[\rho\log\rho]}$ is the von Neumann entropy of $\rho$ and $\rho^{(d)}$ is the diagonal part of $\rho$ in basis $\{\ket{i}\}$. We emphasize that the notion of coherence is intrinsically basis dependent and we only consider quantum systems with finite dimensional Hilbert spaces.

Given a fixed reference basis, say $\{\ket{i}\}$, any state which is diagonal in the reference basis is called an incoherent state. Let $\mathcal{I}$ be the set of all incoherent states. Then, an operation $\Phi$ is called an incoherent operation (IO) if the set of Kraus operators $\{K_n\}$ of $\Phi$ is such that $K_n\mathcal{I} K^{\dag}_n\subset \mathcal{I}$ for each $n$.

\section{Coherence breaking channels}
\label{sec:cbt}
We define coherence breaking channels and provide their exhaustive characterization in this section.
\subsection{Selective coherence breaking channels (SCBCs)}
A quantum channel $\Phi$ with Kraus operators $\set{K_n}$ is called a selective coherence breaking channel (SCBC) if $\Phi$ is an incoherent channel and for any state $\rho$, $K_n\rho K^\dag_n$ is an incoherent state for any $n$. Let the set of all SCBCs be denoted by $\mathcal{S}_{\mathrm{scbc}}$. It is easy to see that $\mathcal{S}_{\mathrm{scbc}}$ is convex. The theorem below characterizes SCBCs.
\begin{thm}
\label{th:scbt}
The following statements are equivalent:

\smallskip
\noindent
(i) A quantum channel $\Phi$ with Kraus decomposition $\{K_n\}$ is a selective coherence breaking channel.

\smallskip
\noindent
(ii) For any maximally coherent state $\ket{\psi}$, $K_n\proj{\psi}K^\dag_n\in \cI$ for any n.

\smallskip
\noindent
(iii) The Kraus operators $\set{K_n}$ of $\Phi$ can be written as $K_n=\ket{i_n}\bra{\phi_n}$ for any $n$, where $\ket{i_n}$ is an element of the reference basis $\set{\ket{i}}^{d-1}_{i=0}$ and $\sum_n\proj{\phi_n}=\mathbb{I}$.

\smallskip
\noindent
(iv) The action of $\Phi$ is given as $\Phi(\rho)=\sum_{i}\proj{i}\mathrm{Tr}(\rho F_i)$, where $\{F_i\}$ is a  set of positive semi-definite operators and $\sum_i F_i=\mathbb{I}$.
\end{thm}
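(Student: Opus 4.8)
The plan is to establish the four statements as a cyclic chain of implications, $(i)\Rightarrow(ii)\Rightarrow(iii)\Rightarrow(iv)\Rightarrow(i)$, so that each follows from the previous one. Two of the links are immediate. For $(i)\Rightarrow(ii)$, a maximally coherent state $\proj{\psi}$ is in particular a state, so the defining SCBC property in $(i)$---that $K_n\rho K_n^\dagger$ is incoherent for every state $\rho$---applies verbatim to $\rho=\proj{\psi}$, giving $(ii)$. For $(iii)\Rightarrow(iv)$, if $K_n=\ket{i_n}\bra{\phi_n}$ then $K_n\rho K_n^\dagger=\Innerm{\phi_n}{\rho}{\phi_n}\proj{i_n}=\mathrm{Tr}(\rho\proj{\phi_n})\proj{i_n}$; summing over $n$ and collecting the Kraus labels that share a common output vector $\ket{i}$, i.e. setting $F_i=\sum_{n:\,i_n=i}\proj{\phi_n}$, yields $\Phi(\rho)=\sum_i\proj{i}\,\mathrm{Tr}(\rho F_i)$, with each $F_i\ge 0$ and $\sum_iF_i=\sum_n\proj{\phi_n}=\mathbb{I}$ by the completeness relation $\sum_nK_n^\dagger K_n=\mathbb{I}$.

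The heart of the argument is $(ii)\Rightarrow(iii)$. I would first observe that $K_n\proj{\psi}K_n^\dagger=(K_n\ket{\psi})(K_n\ket{\psi})^\dagger$ is a rank-one positive operator, and such an operator is diagonal (incoherent) if and only if the vector $K_n\ket{\psi}$ is proportional to a single reference basis vector. Thus $(ii)$ says precisely that $K_n$ maps every maximally coherent state onto some coordinate axis. I would then partition the maximally coherent states $\ket{\psi}=\frac{1}{\sqrt d}\sum_je^{i\theta_j}\ket{j}$, parametrized by the phase torus, according to which axis $\ket{i}$ the image lands on: the $i$-th piece is cut out by the vanishing of the analytic functions $\theta\mapsto\bra{i'}K_n\ket{\psi}$ for all $i'\ne i$. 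Each such function is either identically zero (when row $i'$ of $K_n$ vanishes) or a nontrivial trigonometric polynomial, whose zero set has measure zero on the torus. Since these pieces cover the whole torus, which has positive measure, at least one piece must be the entire torus; equivalently $K_n$ can have at most one nonzero row, say row $i_n$. Because the maximally coherent states span $\mathbb{C}^d$, this forces $\mathrm{range}(K_n)\subseteq\mathbb{C}\ket{i_n}$, i.e. $K_n=\ket{i_n}\bra{\phi_n}$ with $\ket{\phi_n}=K_n^\dagger\ket{i_n}$; the completeness relation again gives $\sum_n\proj{\phi_n}=\mathbb{I}$, which is $(iii)$.

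To close the cycle with $(iv)\Rightarrow(i)$, I would read a Kraus decomposition off the measure-and-prepare form: diagonalizing each POVM element $F_i=\sum_\alpha\proj{\phi_{i,\alpha}}$ (eigenvalues absorbed into the vectors) and setting $K_{i,\alpha}=\ket{i}\bra{\phi_{i,\alpha}}$ reproduces $\Phi$ and satisfies $K_{i,\alpha}\rho K_{i,\alpha}^\dagger=\Innerm{\phi_{i,\alpha}}{\rho}{\phi_{i,\alpha}}\proj{i}\in\cI$ for every $\rho$, so this decomposition exhibits $\Phi$ as an SCBC. Here one should keep in mind that $(i)$--$(iii)$ are properties of a chosen Kraus family while $(iv)$ is intrinsic to $\Phi$, so the cycle is really characterizing the class of channels that admit \emph{some} SCBC decomposition. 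The main obstacle is the step $(ii)\Rightarrow(iii)$: the subtlety is that a priori the target axis could depend on the input state, and ruling this out requires the genericity (measure-zero, analyticity) input above rather than a purely algebraic manipulation; once the single-axis conclusion is in hand, the rest is bookkeeping with the completeness relation.
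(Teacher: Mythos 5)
Your proof is correct, and its overall architecture coincides with the paper's: the same cycle $(i)\Rightarrow(ii)\Rightarrow(iii)\Rightarrow(iv)\Rightarrow(i)$, with $(i)\Rightarrow(ii)$ immediate, $(iii)\Rightarrow(iv)$ by grouping the Kraus operators sharing a common output axis into $F_i=\sum_{n:\,i_n=i}\proj{\phi_n}$, and $(iv)\Rightarrow(i)$ by spectrally decomposing each $F_i$ into rank-one Kraus operators $\ket{i}\bra{\phi^i_k}$. The genuine difference lies in the key step $(ii)\Rightarrow(iii)$, where both arguments must show that each $K_n$ has at most one nonzero row. The paper reduces this to a pairwise statement, Lemma \ref{lem:noze}: for any two nonzero vectors $\ket{\alpha},\ket{\beta}$ there exists a maximally coherent state non-orthogonal to both; applied to two putative nonzero rows of $K_n$, it produces a maximally coherent $\ket{\psi}$ for which $K_n\ket{\psi}$ has two nonzero components, contradicting $(ii)$. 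That lemma is proved by an elementary continuity/phase-perturbation argument starting from the Fourier basis of maximally coherent states, with an explicit case analysis. You instead run a genericity argument directly on the phase torus: the sets on which $K_n\ket{\psi_\theta}$ lands on a fixed axis are intersections of zero sets of the trigonometric polynomials $\theta\mapsto\bra{i'}K_n\ket{\psi_\theta}$, a nontrivial such polynomial has a null zero set, and finitely many null sets cannot cover the torus, so one axis set must be everything, i.e.\ all rows of $K_n$ but one vanish identically (the spanning of $\complex^d$ by maximally coherent states then finishes the identification $K_n=\ket{i_n}\bra{\phi_n}$). Your route is shorter and dispenses with the paper's case analysis, at the mild cost of invoking measure theory/real-analyticity, whereas the paper's route is more elementary and exhibits an explicit witness state. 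Both treatments handle in the same way the point you rightly flag, namely that $(i)$--$(iii)$ are properties of a chosen Kraus family while $(iv)$ is intrinsic to $\Phi$, by closing the cycle with a specific measure-and-prepare decomposition.
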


\begin{proof}
The implication $(i)\Rightarrow (ii)$: It follows directly from the definition of $\mathcal{S}_{\mathrm{scbc}}$. The implication $(ii)\Rightarrow (iii)$: Suppose  for any maximally coherent pure state $\ket{\psi}$, $K_n\proj{\psi}K^\dag_n\in \cI$ for any $n$. Since $K_n$ can be written as $K_n=\sum_i\ket{i}\bra{\phi^n_i}$ and $K_n\ket{\psi}$ is incoherent, then by Lemma \ref{lem:noze} (see \cref{append:lemmas}), there exists only one index $i$ such that $\ket{\phi^n_i}\neq 0$, that is $K_n$ can be written as $K_n=\ket{i_n}\bra{\phi_n}$. As $\sum_nK^\dag_nK_n=\mathbb{I}$, then we have $\sum_n\proj{\phi_n}=\mathbb{I}$. The implication $(iii)\Rightarrow (iv)$: Since $K_n=\ket{i_n}\bra{\phi_n}$ for any $n$, then $\Phi(\rho)=\sum_nK_n \rho K^\dag_n=\sum_n\ket{i_n}\bra{\phi_n}\rho\ket{\phi_n}\bra{i_n}=\sum_i\proj{i}\mathrm{Tr}(\rho F_i)$, where $F_i$ is the sum of some (unnormalized)  pure states $\proj{\phi_n}$. As  $\sum_n\proj{\phi_n}=\mathbb{I}$, then we have $\sum_iF_i=\mathbb{I}$. The implication $(iv)\Rightarrow (i)$: As each $F_i$ can be written as $F_i=\sum_k\lambda_{ik}\proj{\phi^i_{k}}$, then $\Phi$ has the  Kraus representation $\set{K_{ik}}$ with $K_{ik}=\sqrt{\lambda_{ik}}\ket{i}\bra{\phi^i_{k}}$. Then, $\Phi$ is an incoherent operation and for any state $\rho$, $K_{ik}\rho K^\dag_{ik}$ is incoherent.
\end{proof}

It is important to note that the composition of any incoherent channel with a selective coherence breaking channel is again a selective coherence breaking channel, i.e., if $\Phi\in \mathcal{S}_{\mathrm{scbc}}$, then for any incoherent operation $\Omega$, $\Omega\circ\Phi$ and $\Phi\circ\Omega$ also belong to the set $\mathcal{S}_{\mathrm{scbc}}$. Let us recall that if a quantum channel $\Phi$ can be written in the following form
\begin{eqnarray}
\label{eq:Hov}
\Phi(\rho)=\sum_k Q_k \mathrm{Tr}(\rho P_k),
\end{eqnarray}
where each $Q_k$ is a density matrix and the $\set{P_k}$ forms a positive operator valued measure (POVM), then we say that the quantum channel $\Phi$ has the Holevo form \cite{Holevo1999}. Moreover, if each density matrix $Q_k=\proj{k}$ is a one-dimensional projection and $\sum_kP_k=\mathbb{I}$, then $\Phi$ is called a \emph{quantum-classical} (QC) channel. It is easy to see that a selective coherence breaking channel is a QC channel (see Theorem \ref{th:scbt}).

\subsection{Coherence breaking channels (CBCs)}
In the definition of SCBCs, we required that for any Kraus operator $K_n$ of a quantum channel, $K_n\rho K^\dag_n$ is an incoherent state for any state $\rho$. However, we can also define another kind of noisy channel, namely, the coherence breaking channel which requires weaker constraints on Kraus elements compared to the case of SCBCs. A quantum channel $\Phi$ is called a coherence breaking channel (CBC) if for any incoherent channel $\Phi$, $\Phi(\rho)$ is an incoherent state for any state $\rho$. Let the set of all CBCs be denoted by $\mathcal{S}_{\mathrm{cbc}}$. It can be seen easily that $\mathcal{S}_{\mathrm{cbc}}$ is convex and $\mathcal{S}_{\mathrm{scbc}}\subset \mathcal{S}_{\mathrm{cbc}}$. To characterize CBCs, let us start from the simplest case, namely, the quantum channels on qubit states. We know that any qubit state $\rho$ can be written as
\begin{eqnarray*}
\rho=\frac{\mathbb{I}+\vec{r}\cdot \vec{\sigma}}{2},
\end{eqnarray*}
where $|\vec{r}|\leq 1$. The action of a qubit quantum channel $\Phi$ is
completely characterized by a $3\times 3$ real matrix $M$ and
a $3$-dimensional vector $\vec{n}$ such that
\begin{eqnarray*}
\Phi\left(\frac{\mathbb{I}+\vec{r}\cdot \vec{\sigma}}{2}\right)
=\frac{\mathbb{I}+\left(M\vec{r}+\vec{n}\right)\cdot \vec{\sigma}}{2}.
\end{eqnarray*}
Now we can also use $(M, \vec{n})$ to denote a qubit quantum channel. Therefore, we can easily obtain the characterization of a qubit CBC as follows.

\begin{prop}
In qubit case, a quantum channel $\Phi$ represented by $(M,\vec{n})$ is coherence breaking  {\it if and only if} $M$ and $\vec{n})$ have the following forms:
\begin{equation*}
M= \Br{\begin{array}{ccc}
0 & 0 & 0  \\
0 & 0 & 0\\
M_{31} & M_{32} & M_{33}
\end{array}} \mathrm{~and~}
\vec{n}=\Br{\begin{array}{ccc}
0 \\
0 \\
n_z
\end{array}}.
\end{equation*}

\end{prop}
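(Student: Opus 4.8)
The plan is to translate the coherence-breaking requirement into a condition on Bloch vectors and then exploit the fact that an affine function vanishing on the whole Bloch ball must have vanishing coefficients. First I would recall that for a qubit the state $\rho = (\mathbb{I} + \vec{r}\cdot\vec{\sigma})/2$ is incoherent, i.e. diagonal in the reference basis $\set{\ket{0},\ket{1}}$, exactly when its off-diagonal matrix entry $\tfrac{1}{2}(r_x - i r_y)$ vanishes, which (as $r_x,r_y$ are real) happens if and only if $r_x = r_y = 0$. Since $\Phi$ sends $\rho$ to the state whose Bloch vector is $M\vec{r}+\vec{n}$, the channel $\Phi$ is coherence breaking precisely when the first two components of $M\vec{r}+\vec{n}$ vanish for every admissible input, i.e. for every $\vec{r}$ with $\abs{\vec{r}}\le 1$.

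For the ``only if'' direction I would write $M_1,M_2$ for the first two rows of $M$ and read off the two scalar constraints $M_1\cdot\vec{r}+n_x=0$ and $M_2\cdot\vec{r}+n_y=0$, required to hold for all $\vec{r}$ in the closed unit ball. Evaluating at $\vec{r}=0$ (the maximally mixed state is admissible) immediately forces $n_x=n_y=0$. Then evaluating at the three unit Bloch vectors $\vec{r}=\hat{e}_j$, each a legitimate pure-state Bloch vector, forces $M_{1j}=M_{2j}=0$ for $j=1,2,3$; hence the first two rows of $M$ vanish and $\vec{n}=(0,0,n_z)^{\mathsf{T}}$, which is the asserted form. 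The converse is immediate: if $M$ and $\vec{n}$ have the stated form, the first two components of $M\vec{r}+\vec{n}$ are identically zero, so $\Phi(\rho)$ is diagonal for every input state and $\Phi$ is coherence breaking.

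No step here is genuinely difficult; the only point deserving a moment's care is that the admissible Bloch vectors fill the closed unit ball, a set with nonempty interior, so an affine function vanishing on it must vanish identically, killing both its linear and its constant part. I would also note that, in contrast to the SCBC characterization in Theorem \ref{th:scbt}, here $\Phi$ is assumed from the outset to be a valid CPTP map, so the complete-positivity and trace-preservation constraints on $(M,\vec{n})$ need not be revisited and play no role in the equivalence.
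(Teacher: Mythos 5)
Your proposal is correct and follows essentially the same route as the paper: both reduce coherence breaking to the requirement that the first two components of the affine Bloch-ball map $\vec{r}\mapsto M\vec{r}+\vec{n}$ vanish identically, and both note that complete positivity imposes separate constraints not needed for the equivalence. The only difference is that you spell out the elementary step the paper leaves implicit (evaluating at $\vec{r}=0$ and at the unit vectors $\hat{e}_j$ to kill $\vec{n}$'s first two components and $M$'s first two rows), which is a welcome but minor elaboration.
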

\begin{proof}
Qubit channel $\Phi$ is coherence breaking if and only if the first and second components of the vector $M\vec{r}+\vec{n}$
are $0$ for any vector $\vec{r}$, which means $M\vec{r}+\vec{n}=(0, 0, *)^T$. Thus, matrix $M$ and vector $\vec{n}$ must be of the following form:
\begin{equation*}
M= \Br{\begin{array}{ccc}
0 & 0 & 0  \\
0 & 0 & 0\\
M_{31} & M_{32} & M_{33}
\end{array}}\mathrm{~and~}
\vec{n}=\Br{\begin{array}{ccc}
0 \\
0 \\
n_z
\end{array}}.
\end{equation*}
Here we have not considered other restrictions on $\Phi$ like the complete positivity. In fact, the necessary and sufficient conditions for $(M,\vec{n})$ to be a CPTP map can be found in Refs. \cite{Ruskai2002,King2001}.
\end{proof}

In general case, the characterization of CBCs is given by the following theorem.

\begin{thm}\label{thm:cbt}
The following statements are equivalent.

\smallskip
\noindent
(i) $\Phi\in \mathcal{S}_{\mathrm{cbc}}$.

\smallskip
\noindent
(ii) $\Phi$ is an incoherent channel and for any maximally coherent state $\ket{\psi}$, $\Phi(\proj{\psi})\in \cI$.

\smallskip
\noindent
(iii) $\Phi$ is an incoherent channel and $\Phi(\ket{i}\bra{j})$ is diagonal for any two incoherent basis states $\ket{i}$ and $\ket{j}$.

\smallskip
\noindent
(iv) The action of $\Phi$ is given as $\Phi(\rho)=\sum_{i}\proj{i}\mathrm{Tr}(\rho F_i)$, where $\{F_i\}$  a set of positive semi-definite operators and $\sum_i F_i=\mathbb{I}$.
\end{thm}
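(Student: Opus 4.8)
The plan is to prove the four statements equivalent by closing the cycle $(i)\Rightarrow(ii)\Rightarrow(iii)\Rightarrow(iv)\Rightarrow(i)$. The real payoff of reaching $(iv)$ is that it is word-for-word the characterization $(iv)$ of Theorem~\ref{th:scbt}, so closing the loop will simultaneously show that $\mathcal{S}_{\mathrm{cbc}}$ and $\mathcal{S}_{\mathrm{scbc}}$ describe the same channel action and that every CBC is in fact a QC (Holevo) channel.

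The implication $(i)\Rightarrow(ii)$ is immediate: a coherence breaking channel is by definition an incoherent channel that sends every state into $\cI$, so in particular $\Phi(\proj{\psi})\in\cI$ for every maximally coherent $\ket{\psi}$. The substantive step is $(ii)\Rightarrow(iii)$. I would parametrize a general maximally coherent state as $\ket{\psi}=\frac{1}{\sqrt{d}}\sum_{k}e^{i\theta_k}\ket{k}$, so that $\proj{\psi}=\frac{1}{d}\sum_{k,l}e^{i(\theta_k-\theta_l)}\out{k}{l}$ and, by linearity, $\Phi(\proj{\psi})=\frac{1}{d}\sum_{k,l}e^{i(\theta_k-\theta_l)}\Phi(\out{k}{l})$. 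Fixing an off-diagonal position $(a,b)$ with $a\neq b$ and writing $A^{(kl)}_{ab}=\bra{a}\Phi(\out{k}{l})\ket{b}$, the hypothesis $\Phi(\proj{\psi})\in\cI$ forces $\sum_{k,l}e^{i(\theta_k-\theta_l)}A^{(kl)}_{ab}=0$ for \emph{all} phase vectors $(\theta_k)$. Because $\Phi$ is incoherent, each $\Phi(\proj{k})$ is already diagonal, so the $k=l$ terms drop out; the remaining functions $(\theta_k)\mapsto e^{i(\theta_k-\theta_l)}$ indexed by ordered pairs $k\neq l$ are distinct characters on the torus and hence linearly independent, forcing every coefficient $A^{(kl)}_{ab}$ to vanish. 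This says exactly that $\Phi(\out{k}{l})$ is diagonal for all $k,l$, which is $(iii)$.

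For $(iii)\Rightarrow(iv)$ I would expand an arbitrary $\rho=\sum_{k,l}\rho_{kl}\out{k}{l}$ and use linearity to write $\Phi(\rho)=\sum_{k,l}\rho_{kl}\Phi(\out{k}{l})$, a sum of diagonal operators and therefore diagonal. Its $i$th diagonal entry $\bra{i}\Phi(\rho)\ket{i}$ is a linear functional of $\rho$, hence equals $\mathrm{Tr}(\rho F_i)$ for a unique $F_i$; complete positivity makes each such diagonal entry nonnegative on every state, forcing $F_i\geq 0$, and trace preservation gives $\sum_i F_i=\mathbb{I}$. This is precisely the form $(iv)$, namely $\Phi(\rho)=\sum_i\proj{i}\mathrm{Tr}(\rho F_i)$. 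Finally $(iv)\Rightarrow(i)$ is identical to the corresponding step of Theorem~\ref{th:scbt}: the form $(iv)$ makes $\Phi(\rho)$ manifestly diagonal for every $\rho$, and it admits the incoherent Kraus decomposition $K_{ik}=\sqrt{\lambda_{ik}}\out{i}{\phi^i_k}$ read off from the spectral decompositions $F_i=\sum_k\lambda_{ik}\proj{\phi^i_k}$, so $\Phi\in\mathcal{S}_{\mathrm{cbc}}$.

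I expect the only genuine obstacle to be the phase argument in $(ii)\Rightarrow(iii)$: one must extract diagonality of each individual $\Phi(\out{k}{l})$ from a single scalar condition that holds only for maximally coherent inputs, and the clean way to do this is the linear independence of the characters $e^{i(\theta_k-\theta_l)}$ (equivalently, a discrete Fourier inversion using phases $\theta_k=2\pi m k/d$). Everything else is routine bookkeeping with linearity, positivity, and trace preservation, and the last implication is inherited directly from the already-proved SCBC theorem.
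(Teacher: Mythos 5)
Your proof is correct, and while your cycle $(i)\Rightarrow(ii)\Rightarrow(iii)\Rightarrow(iv)\Rightarrow(i)$ matches the paper's, the two middle implications are argued genuinely differently. For $(ii)\Rightarrow(iii)$ you invoke linear independence of the distinct torus characters $e^{i(\theta_k-\theta_l)}$, whereas the paper proves the same vanishing statement by hand in its Lemma~\ref{lem:equ0} (splitting the coefficient matrix into Hermitian parts and substituting special phase choices $\theta_i\in\{0,\pi\}$); these are equivalent in content and yours is shorter, though note that your parenthetical fallback of a one-parameter discrete Fourier inversion with $\theta_k=2\pi mk/d$ would \emph{not} suffice on its own---it only separates coefficients according to the residue of $k-l$ modulo $d$, so the full multi-parameter character independence you state first is what actually carries the step. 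The real divergence is $(iii)\Rightarrow(iv)$: the paper takes a detour through entanglement theory---it shows the Choi matrix $\mathbb{I}\otimes\Phi(\proj{\beta})$ is separable, invokes the Horodecki--Shor--Ruskai theorem to obtain the Holevo form $\Phi(\rho)=\sum_k Q_k\mathrm{Tr}(\rho P_k)$, and then compresses that form to $\sum_i\proj{i}\mathrm{Tr}(\rho F_i)$ using diagonality of the outputs---while you argue directly: each diagonal entry $\bra{i}\Phi(\rho)\ket{i}$ is a linear functional of $\rho$, hence equals $\mathrm{Tr}(\rho F_i)$ for a unique $F_i$, which is Hermitian because $\Phi$ preserves Hermiticity, positive semidefinite by evaluating on pure states, and satisfies $\sum_iF_i=\mathbb{I}$ by trace preservation. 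Your route is more elementary and self-contained, avoiding external entanglement-breaking machinery entirely; what the paper's route buys is the explicit byproduct that every coherence breaking channel is entanglement breaking, an inclusion the authors deliberately want on record (it underlies their Fig.~\ref{fig1} and the chain $\mathcal{S}_{\mathrm{cbc}}\subsetneq\mathcal{S}_{\mathrm{qc}}\subsetneq\mathcal{S}_{\mathrm{ebt}}$). Both proofs close the loop identically, reading off the incoherent Kraus operators $K_{ik}=\sqrt{\lambda_{ik}}\out{i}{\phi^i_k}$ from the spectral decompositions of the $F_i$, exactly as in Theorem~\ref{th:scbt}.
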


\begin{proof}
The implication $(i)\Rightarrow (ii)$: This follows from the definition of $\mathcal{S}_{\mathrm{cbc}}$. The implication $(ii)\Rightarrow (iii)$:  Any maximally coherent pure state $\ket{\psi}$ in $d$ dimensional Hilbert space can be written as $\ket{\psi}=\frac{1}{\sqrt{d}}\sum^{d-1}_{j=0}e^{i\theta_j}\ket{j}$, where $\set{\ket{j}}^{d-1}_{j=0}$ is the reference basis. Then
\begin{eqnarray*}
\Phi(\proj{\psi})&=&\frac{1}{d}\sum_{i,j}e^{i(\theta_i-\theta_j)}\Phi(\ket{i}\bra{j})\\
&=&\frac{1}{d}\left[\sum^{d-1}_{i=0}\Phi(\proj{i})+\sum^{d-1}_{\substack{i,j=0\\ i\neq j}}e^{i(\theta_i-\theta_j)}\Phi(\ket{i}\bra{j})\right].
\end{eqnarray*}
For any $i$ and $j$, $\Phi(\ket{i}\bra{j})$ can be written as a matrix $[\Phi^{i,j}_{u,v}]\in\complex^{d\times d}$ in the reference basis. Since $\Phi$ is an incoherent channel, then $\Phi(\proj{i})$ is an incoherent state which means that $[\Phi^{i,i}_{uv}]$ is diagonal for any $i$. Moreover, $\Phi(\proj{\psi})$ is diagonal, thus $\sum^{d-1}_{\substack{i,j=0\\i\neq j}}e^{i(\theta_i-\theta_j)}[\Phi^{i,j}_{uv}]$ is diagonal. That is, for any fixed $u$ and $v$ with $u\neq v$, $
%\begin{eqnarray*}
\sum^{d-1}_{\substack{i,j=0\\i\neq j}}e^{i(\theta_i-\theta_j)}\Phi^{i,j}_{uv}=0$
%\end{eqnarray*}
for any $(\theta_0, \ldots, \theta_{d-1})$. From Lemma \ref{lem:equ0} (see \cref{append:lemmas}),  $\Phi^{i,j}_{uv}=0$ for any $i\neq j$ with any fixed $u\neq v$. Thus, $[\Phi^{i,j}_{uv}]$ is diagonal, i.e., $\Phi(\ket{i}\bra{j})$ is diagonal for any $i$ and $j$. %% $\Phi(\ket{i}\bra{j})$ is not a state if $i\neq j$.

The implication $(iii)\Rightarrow (iv)$: Suppose that $\Phi$ is an incoherent channel and $\Phi(\ket{i}\bra{j})$ is diagonal for any $0\leq i,j\leq d-1$. Then $\Phi(\rho)$ is an incoherent state for any state $\rho$.  First, we prove that $\Phi$ is an entanglement breaking channel. Based on Ref. \cite{Horodecki2003}, we only need to prove $\mathbb{I}\otimes \Phi(\proj{\beta})$ is a separable state, where $\ket{\beta}=\frac{1}{\sqrt{d}}\sum^{d-1}_{i=0}\ket{ii}$. Without loss of generality, $
%\begin{eqnarray*}
\Phi(\ket{i}\bra{j})=\sum_kd^{(i, j)}_k\proj{k}$.
%\end{eqnarray*}
Hence,
\begin{eqnarray*}
\mathbb{I}\otimes \Phi(\proj{\beta})&=&\frac{1}{d}\sum_{i,j}\ket{i}\bra{j}\otimes \Phi(\ket{i}\bra{j})\\
&=&\frac{1}{d}\sum_{i,j}\sum_k\ket{i}\bra{j}\otimes d^{(i, j)}_k\proj{k}\\
&=&\frac{1}{d}\sum_k\left(\sum_{i, j}d^{(i, j)}_k\ket{i}\bra{j}\right)\otimes \proj{k}.
\end{eqnarray*}
%where $\ket{\beta}=\frac{1}{\sqrt{d}}\sum^{d-1}_{i=0}\ket{ii}$.
Since $\Phi$ is a quantum channel (CPTP map), from Choi-Jamio{\l}kowski isomorphism $\mathbb{I}\otimes \Phi(\proj{\beta})$ is positive semi-definite. Therefore, $\mathbb{I}\otimes \proj{k}[\mathbb{I}\otimes \Phi(\proj{\beta})]\mathbb{I}\otimes \proj{k}$ is a positive operator, i.e., $\sum_{i, j}d^{(i, j)}_k\ket{i}\bra{j}$ is positive for any $k$. Then $\mathbb{I}\otimes \Phi(\proj{\beta})$ can be written as $\sum_k\lambda_k \rho_k\otimes \proj{k}$, i.e., $\mathbb{I}\otimes \Phi(\proj{\beta})$ is a separable state. Therefore, $\Phi$ is an entanglement breaking channel. Naturally, since entanglement is a form of coherence \cite{Alex15}, then if a quantum channel is coherence breaking, it must be entanglement breaking too and the above calculation shows this explicitly.

Recall that a quantum channel is entanglement breaking if and only if it can be written as the Holevo
form (see Eq. \eqref{eq:Hov}) \cite{Horodecki2003}. Therefore, $\Phi$ can be written as
\begin{eqnarray*}
\Phi(\rho)=\sum_k Q_k \mathrm{Tr}(\rho P_k).
\end{eqnarray*}
As $\Phi(\rho)$ is an incoherent state for any state $\rho$, then $\Phi(\rho)$ is a diagonal state. Thus,
it can also be written as
\begin{eqnarray*}
\Phi(\rho)&=&\sum_{k,i} \bra{i}Q_k\ket{i} \mathrm{Tr}(\rho P_k)\proj{i}\\
&=&\sum_i\proj{i}(\sum_k\bra{i}Q_k\ket{i} \mathrm{Tr}(\rho P_k))\\
&=&\sum_i\proj{i}(\mathrm{Tr}\rho\otimes\proj{i}(\sum_k P_k\otimes Q_k))\\
&=&\sum_i\proj{i}\mathrm{Tr}(\rho F_i),
\end{eqnarray*}
where $F_i=\mathrm{Tr}_2 \left(\left(I\otimes\proj{i}\right)\left(\sum_k P_k\otimes Q_k\right)\right)$ and $\mathrm{Tr}_2$ denotes the partial trace on the second system. Since $\Phi$ is a CPTP map, $\sum_iF_i=\mathbb{I}$.

The implication $(iv)\Rightarrow (i)$: As each $F_i$ can be written as $F_i=\sum_k\lambda_{ik}\proj{\phi^i_{k}}$, $\Phi$ has the  Kraus representation $\set{K_{ik}}$ with $K_{ik}=\sqrt{\lambda_{ik}}\ket{i}\bra{\phi^i_{k}}$. Then, $\Phi$ is an incoherent operation and for any state $\rho$, $\Phi(\rho)$ is incoherent. Moreover, $K_{ik}\rho K^\dag_{ik}$ is incoherent, which means $\mathcal{S}_{\mathrm{cbc}}\subset \mathcal{S}_{\mathrm{scbc}}$.
\end{proof}

Let us compare the characterizations of these two kinds of coherence breaking channels. It is amazing that Theorems $\ref{th:scbt}$ and $\ref{thm:cbt}$ show that  $\mathcal{S}_{\mathrm{scbc}}=\mathcal{S}_{\mathrm{cbc}}$. Moreover, $\mathcal{S}_{\mathrm{scbc}}=\mathcal{S}_{\mathrm{cbc}}\subsetneq \mathcal{S}_{\mathrm{qc}} \subsetneq \mathcal{S}_{\mathrm{ebt}}$ (see also Fig.\ref{fig1}). Here $\mathcal{S}_{\mathrm{qc}}$ is the set of quantum-classical channels and $\mathcal{S}_{\mathrm{ebt}}$ is the set of entanglement breaking channels.

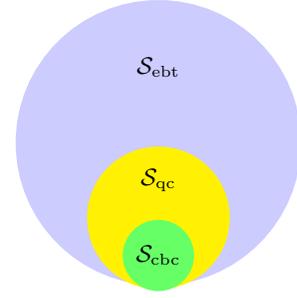
\begin{figure}
  \centerline{
    \begin{tikzpicture}[thick]
    \tikzstyle{operator} = [draw,fill=white,minimum size=1.5em]
   \tikzstyle{phase1} = [fill=blue!20,shape=circle,minimum size=12em,inner sep=0pt]
    \tikzstyle{phase2} = [fill=yellow,shape=circle,minimum size=6em,inner sep=0pt]
    \tikzstyle{phase3} = [fill=green!60,shape=circle,minimum size=3em,inner sep=0pt]
    \tikzstyle{surround} = [fill=blue!10,thick,draw=black,rounded corners=2mm]
    \tikzstyle{block} = [rectangle, draw, fill=white,
    text width=3em, text centered, , minimum height=6em]
    \node[phase1] (p1) at (5,3) {~};
    \node[phase2] (p2) at (5,2) {~};
    \node[phase3] (p3) at (5,1.5) {~};
    \node at (5,4) (e1){$\mathcal{S}_{\mathrm{ebt}}$};
    \node at (5,2.5) (e2){$\mathcal{S}_{\mathrm{qc}}$};
  \node at (5,1.5) (e3){$\mathcal{S}_{\mathrm{cbc}}$};
    \end{tikzpicture}
  }
  \caption{The relationship between the sets of entanglement breaking channels (EBTs), quantum-classical channels (QCs) and coherence breaking channels (CBCs).}
  \label{fig1}
\end{figure}

\section{Inter-relations of SIO\texorpdfstring{\MakeLowercase{s}}{}, DIO\texorpdfstring{\MakeLowercase{s}}{} and CBC\texorpdfstring{\MakeLowercase{s}}{}}
\label{sec:rel}
A special kind of incoherent operation called strictly incoherent operation (SIO), has been proposed recently \cite{Chitambar2016}. Any quantum operation $\Phi$ is a SIO if and only if  it can be represented by a set of Kraus operators $\set{M_i}$ with $M_i=\sum^{d-1}_{j=0} d_{ij}\ket{\pi_i(j)}\bra{j}$ \cite{Chitambar2016}. Let the set of all SIOs be denoted by $\mathcal{S}_{\mathrm{sio}}$.

\begin{prop}
\label{prop:sio}
Any quantum operation $\Phi$ belongs to both $\mathcal{S}_{\mathrm{sio}}$ and $\mathcal{S}_{\mathrm{cbc}}$ if and only if it can be represented by Kraus operators $K_{ij}$ of the form: $K_{ij}=d_{ij}\ket{\pi_i(j)}\bra{j}$, where $\pi_i$ is a permutation and $j\in\set{0,\ldots,d-1}$.
\end{prop}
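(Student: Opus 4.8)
The plan is to prove both directions by combining the two structural characterizations already available: the form of SIO Kraus operators $M_i=\sum_{j}d_{ij}\out{\pi_i(j)}{j}$ and the CBC characterization from Theorem~\ref{thm:cbt}, specifically the fact that $\mathcal{S}_{\mathrm{cbc}}=\mathcal{S}_{\mathrm{scbc}}$, so that a channel is a CBC precisely when it admits Kraus operators of the rank-one form $\out{i_n}{\phi_n}$ with $\sum_n\proj{\phi_n}=\mathbb{I}$ (Theorem~\ref{th:scbt}(iii)). The claimed Kraus form $K_{ij}=d_{ij}\out{\pi_i(j)}{j}$ is manifestly of \emph{both} types simultaneously, so the ``if'' direction will be the easy one, while the ``only if'' direction requires showing that membership in both classes forces each Kraus operator into this rigid shape.

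For the ``if'' direction I would simply verify that a channel with Kraus operators $K_{ij}=d_{ij}\out{\pi_i(j)}{j}$ lies in both sets. Membership in $\mathcal{S}_{\mathrm{sio}}$ is immediate since grouping the operators by the index $i$ recovers exactly the SIO form $M_i=\sum_j d_{ij}\out{\pi_i(j)}{j}$. Membership in $\mathcal{S}_{\mathrm{cbc}}$ follows because each $K_{ij}$ is already of the rank-one shape $\out{i'}{\phi}$ with $\ket{\phi}=\overline{d_{ij}}\ket{j}$ proportional to a reference basis vector, so by the equivalence $(iii)\Leftrightarrow(i)$ in Theorem~\ref{th:scbt} (together with $\mathcal{S}_{\mathrm{scbc}}=\mathcal{S}_{\mathrm{cbc}}$) the channel is coherence breaking; one only needs to note the completeness relation $\sum_{ij}K_{ij}^\dagger K_{ij}=\mathbb{I}$ holds as part of the hypothesis that $\Phi$ is a genuine channel.

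For the ``only if'' direction, I would start from an arbitrary $\Phi\in\mathcal{S}_{\mathrm{sio}}\cap\mathcal{S}_{\mathrm{cbc}}$ and pick the SIO Kraus representation $M_i=\sum_{j}d_{ij}\out{\pi_i(j)}{j}$. The key observation is that in this representation each $M_i$ is itself a valid Kraus operator of a CBC, because for an SIO the individual Kraus operators act within the incoherent-state structure; more precisely, since $\Phi$ is a CBC, by $\mathcal{S}_{\mathrm{cbc}}=\mathcal{S}_{\mathrm{scbc}}$ each $M_i$ must send every state to an incoherent state, i.e.\ $M_i\proj{\psi}M_i^\dagger\in\cI$ for every maximally coherent $\ket{\psi}$. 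Applying Theorem~\ref{th:scbt}$(ii)\Rightarrow(iii)$ to the operator $M_i$ then forces $M_i$ to be rank one, $M_i=\out{i_n}{\phi_n}$. But $M_i=\sum_j d_{ij}\out{\pi_i(j)}{j}$ has at most one nonzero term only if all but one of the coefficients $d_{ij}$ vanish; combined with $\pi_i$ being a permutation, this means I must refine the index set so that each retained Kraus operator carries exactly one $j$. Relabeling the surviving pairs $(i,j)$ gives operators of the form $d_{ij}\out{\pi_i(j)}{j}$ as required, and the permutation property of $\pi_i$ is inherited from the original SIO structure.

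The main obstacle I anticipate is the bookkeeping in the ``only if'' direction when reconciling the two Kraus representations: Theorem~\ref{th:scbt} guarantees \emph{some} rank-one Kraus decomposition exists for a CBC, but the SIO structure is tied to a \emph{specific} decomposition $\{M_i\}$, and Kraus decompositions are only unique up to unitary mixing. I therefore need to argue carefully that the rank-one constraint can be imposed on the given SIO representation rather than on an inequivalent one — the cleanest route is to apply the criterion Theorem~\ref{th:scbt}$(ii)$ directly to the fixed operators $M_i$ (since that criterion is a condition on each Kraus operator individually and holds for \emph{any} Kraus representation of a channel in $\mathcal{S}_{\mathrm{scbc}}=\mathcal{S}_{\mathrm{cbc}}$), thereby sidestepping the non-uniqueness issue. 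Verifying that this individual-operator criterion is genuinely representation-independent for CBCs is the subtle point that the proof must address.
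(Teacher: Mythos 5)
Your ``if'' direction is fine, but the ``only if'' direction rests on a claim that is false --- and it is exactly the claim you flagged as the subtle point needing verification: that the operator-by-operator criterion of Theorem~\ref{th:scbt}(ii) ``holds for any Kraus representation of a channel in $\mathcal{S}_{\mathrm{scbc}}=\mathcal{S}_{\mathrm{cbc}}$.'' It does not. The equality $\mathcal{S}_{\mathrm{scbc}}=\mathcal{S}_{\mathrm{cbc}}$ only says that a CBC admits \emph{some} Kraus representation whose individual operators break coherence; it is not a property inherited by every representation, nor even by every SIO representation. Concretely, take the completely dephasing qubit channel $\Delta(\rho)=\proj{0}\rho\proj{0}+\proj{1}\rho\proj{1}$, which lies in $\mathcal{S}_{\mathrm{sio}}\cap\mathcal{S}_{\mathrm{cbc}}$. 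It also admits the Kraus representation $M_0=\mathbb{I}/\sqrt{2}$, $M_1=\sigma_z/\sqrt{2}$, and this is a perfectly valid \emph{SIO} representation: both operators have the form $\sum_j d_{ij}\ket{\pi_i(j)}\bra{j}$ with $\pi_i$ the identity permutation and $d_{ij}=\pm 1/\sqrt{2}$. Yet $M_0\proj{\psi}M_0^\dagger=\tfrac{1}{2}\proj{\psi}\notin\cI$ for any maximally coherent $\ket{\psi}$, and neither $M_0$ nor $M_1$ is rank one. So both your key claim (representation independence of the selective criterion) and the intermediate conclusion you draw from it (that each $M_i$ of the given SIO representation is forced into the rank-one form $\ket{i_n}\bra{\phi_n}$) are false; the argument cannot be repaired by a more careful proof of representation independence, because the statement to be proved is untrue.

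The way out --- and this is what the paper does --- is to work with the total action of the channel rather than with its individual Kraus operators, and to remember that the proposition only asserts that \emph{some} Kraus representation of the required form exists. From the given SIO representation write $\Phi(\rho)=\sum_{j,k}\rho_{jk}P^{(j,k)}$ with $P^{(j,k)}=\sum_i d_{ij}\bar{d}_{ik}\ket{\pi_i(j)}\bra{\pi_i(k)}$. Since each $\pi_i$ is a permutation, $P^{(j,k)}$ has vanishing diagonal whenever $j\neq k$; applying the CBC property to the full output $\Phi(\proj{\psi})$ for all maximally coherent $\ket{\psi}$ and invoking Lemma~\ref{lem:equ0} (the same lemma used in Theorem~\ref{thm:cbt}) then forces $P^{(j,k)}=0$ for all $j\neq k$. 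Hence $\Phi(\rho)=\sum_k\rho_{kk}\sum_i|d_{ik}|^2\proj{\pi_i(k)}$, which is precisely the action of the channel with Kraus operators $K_{ij}=d_{ij}\ket{\pi_i(j)}\bra{j}$. This \emph{new} representation --- generally different from the $\{M_i\}$ you started with, as the dephasing example shows --- is the one whose existence the proposition claims.
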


\begin{proof}
The sufficiency part follows directly from the definition of $\mathcal{S}_{\mathrm{sio}}$ and the characterization of CBCs as given in Theorem \ref{thm:cbt}. We only need to prove the necessary part of the proposition. Since any SIO can be represented by Kraus operators $\set{M_i}$ with $M_i=\sum^{d-1}_{j=0} d_{ij}\ket{\pi_i(j)}\bra{j}$. Thus
\begin{eqnarray*}
\Phi(\rho)&=&\sum_iM_i\rho M^\dag_i\\
&=&\sum_i \Pa{\sum_jd_{ij}\ket{\pi_i(j)}\bra{j}}
\rho \Pa{\sum_k\bar{d}_{ik}\ket{k}\bra{\pi_i(k)}}\\
&=&\sum_{j,k}\rho_{jk}\sum_id_{ij}\bar{d}_{ik}
\ket{\pi_i(j)}\bra{\pi_i(k)}
\end{eqnarray*}
Besides, $\Phi$ is also a coherence breaking operation. Therefore, $\Phi(\proj{\psi})$ is an incoherent state for any maximally coherent state $\ket{\psi}=\frac{1}{\sqrt{d}}\sum_{j=0}^{d-1}e^{i\theta_j} \ket{j}$, which implies
\begin{align}\label{eq:inco1}
\Phi(\proj{\psi})&=\frac{1}{d}\sum_{j,k}e^{i(\theta_j-\theta_k)}\sum_id_{ij}\bar{d}_{ik}\ket{\pi_i(j)}\bra{\pi_i(k)}\nonumber\\
&=\frac{1}{d}\sum_{j,k}e^{i(\theta_j-\theta_k)}P^{(j,k)}\in \mathcal{I},
\end{align}
where $P^{(j,k)}:=\sum_id_{ij}\bar{d}_{ik} \ket{\pi_i(j)}\bra{\pi_i(k)}$. Then it is easy to see that $P^{(k,k)}$ is diagonal but the diagonal part of $P^{(j,k)}$ is always zero for any $j\neq k$, i.e., $P^{(j,k)}_{rr}=0$ for any $r\in \set{0,\ldots,d-1}$. Moreover, Eq.\eqref{eq:inco1} means  that for any fixed $r,s\in \set{0,\ldots,d-1}$ with $r\neq s$, $\sum_{j,k}e^{i(\theta_j-\theta_k)}P^{(j,k)}_{rs}=0$ for any $\set{\theta_i}^{d-1}_{i=0}$ with $\theta_i\in \real$, where $P^{(k,k)}_{rs}=0$ as $P^{(k,k)}$ is diagonal. By Lemma \ref{lem:equ0} (see \cref{append:lemmas}), we have $P^{(j,k)}_{rs}=0$ for any $j,k$ when $r\neq s$. Hence $P^{(j,k)}=0$ for any  $j\neq k$. Therefore,
\begin{eqnarray*}
\Phi(\rho)&=&\sum_{j,k}\rho_{jk}P^{(j,k)} =\sum_{k}\rho_{kk}P^{(k,k)}\\
&=&\sum_{k}\rho_{kk}\sum_i|d_{ik}|^2\proj{\pi_i(k)}\\
&=&\sum_{ij}K_{ij}\rho K^\dag_{ij},
\end{eqnarray*}
with $K_{ij}=d_{ij}\ket{\pi_i(j)}\bra{j}.$
\end{proof}

The above proposition shows that the operations that belong to both $\mathcal{S}_{\mathrm{cbc}}$ and $\mathcal{S}_{\mathrm{sio}}$ are trivial. These operations have action on any state (up to permutations) as follows: $\Delta(\cdot)=\sum_i\bra{i}\cdot\ket{i}\ket{i}\bra{i}$. Thus, we can use the difference between CBCs and SIOs to show the difference between IOs and SIOs, especially in some operational task.
It has been proved that in the task of incoherent teleportation \cite{StreltsovA2015}, it is possible to implement the perfect incoherent teleportation of an unknown state of one qubit with the help of one singlet and two bits of classical communication. But if we restrict the IOs to the SIOs, such a task may be unachievable.
\begin{prop}
Perfect strictly incoherent teleportation of an unknown state of one qubit is not possible with one singlet and two bits of classical communication.
\end{prop}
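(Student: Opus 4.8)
The plan is to argue by contradiction. Suppose an arbitrary qubit $\ket{\psi}=a\ket{0}+b\ket{1}$ could be teleported perfectly using a shared maximally entangled resource $\ket{\beta}_{A'B'}=\frac{1}{\sqrt{2}}\Pa{\ket{00}+\ket{11}}$ (the singlet differs from $\ket{\beta}$ only by a local strictly incoherent unitary, which can be absorbed into Bob's correction), a strictly incoherent instrument on Alice's registers $AA'$, two bits of incoherent classical communication, and a strictly incoherent correction on Bob's register $B'$. First I would record that, by the SIO characterization recalled above, every Kraus operator associated with Alice's outcome $m$ has the form $N_m=\sum_{a,a'}c^m_{a,a'}\out{\pi_m(a,a')}{a,a'}$ with $\pi_m$ a permutation of the computational basis of $AA'$, and that Alice retains only the classical label $m$ while her quantum registers are traced out. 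The goal is then to follow the single off-diagonal amplitude $a\bar{b}$, which carries all the coherence of the input, and to show it can never reach $B'$.

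The decisive step is the evaluation of Bob's unnormalised conditional state $\tau_m=\mathrm{Tr}_{AA'}\Br{\Pa{N_m\ot\I}\proj{\psi}_A\ot\proj{\beta}_{A'B'}\Pa{N_m^\dagger\ot\I}}$. Writing $\Pa{N_m\ot\I}\ket{\psi}_A\ket{\beta}_{A'B'}=\frac{1}{\sqrt{2}}\sum_i\ket{\xi^m_i}_{AA'}\ket{i}_{B'}$ with $\ket{\xi^m_i}=a\,c^m_{0,i}\ket{\pi_m(0,i)}+b\,c^m_{1,i}\ket{\pi_m(1,i)}$ gives $\tau_m=\frac{1}{2}\sum_{i,j}\iinner{\xi^m_j}{\xi^m_i}\out{i}{j}_{B'}$, so everything hinges on the overlaps $\iinner{\xi^m_j}{\xi^m_i}$. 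Here I expect the main obstacle, and also the heart of the matter: because $\pi_m$ is a bijection, $\iinner{\pi_m(x,j)}{\pi_m(y,i)}=\delta_{xy}\,\delta_{ij}$, so the two mixed terms proportional to $a\bar{b}$ and $\bar{a}b$ (which could arise only from $x\neq y$) are annihilated and only $i=j$ survives. One is left with $\tau_m=\frac{1}{2}\sum_i\Pa{\abs{a}^2\abs{c^m_{0,i}}^2+\abs{b}^2\abs{c^m_{1,i}}^2}\proj{i}$, which is \emph{diagonal} and depends on $\ket{\psi}$ only through the populations $\abs{a}^2,\abs{b}^2$: the permutation rigidity of SIO has erased the input coherence the moment Alice's registers are discarded.

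To finish, I would invoke that Bob's correction $C_m$ is itself an SIO and hence maps incoherent states to incoherent states, so $C_m(\tau_m)\in\cI$ for every outcome $m$; the teleported output $\sum_m C_m(\tau_m)$ is then a convex mixture of incoherent states and is therefore incoherent for \emph{every} input $\ket{\psi}$. Taking the maximally coherent input $\ket{+}=\frac{1}{\sqrt{2}}\Pa{\ket{0}+\ket{1}}$, perfect teleportation would demand the coherent output $\proj{+}\notin\cI$, a contradiction. I would close by noting the robustness of the argument: permitting several Kraus operators per outcome merely sums diagonal contributions, any further strictly incoherent pre- or post-processing can be absorbed into $N_m$ or $C_m$, and the restriction to two forward classical bits keeps $B'$ the sole output register. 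The contrast with the incoherent case is instructive: a general IO Kraus operator need not be a weighted permutation, so its cross terms need not cancel, which is precisely why coherence can be relayed under IO but not under SIO.
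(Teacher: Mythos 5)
Your proof is correct and takes essentially the same route as the paper's: apply a strictly incoherent Kraus operator on Alice's registers, observe that Bob's unnormalised conditional state is forced to be diagonal (your $\tau_m=\frac{1}{2}\sum_i\left(\abs{a}^2\abs{c^m_{0,i}}^2+\abs{b}^2\abs{c^m_{1,i}}^2\right)\proj{i}$ is exactly the paper's $\rho_B$ up to notation), and conclude via the fact that incoherent corrections cannot create coherence. Your write-up is merely more explicit about why the permutation structure of SIO Kraus operators annihilates the $a\bar{b}$ cross terms, a step the paper leaves implicit.
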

\begin{proof}
Recall that in the standard teleportation protocol, the initial state of the whole system is given by
\begin{eqnarray*}
\ket{\Psi}=\ket{\gamma}_{A'}\ot \ket{\phi^+}_{AB},
\end{eqnarray*}
where $\ket{\phi^+}_{AB}=(\ket{00}+\ket{11})/\sqrt{2}$ is a maximally entangled state and $\ket{\gamma}_{A'}=\alpha\ket{0}+\beta\ket{1}$ is the unknown state.
Then we show that the perfect teleportation is not possible by the strictly incoherent measurements on Alice's and Bob's systems and the classical communication between them. For any strictly incoherent measurement $K_A=\sum^1_{i,j=0}d_{ij}\ket{\pi(ij)}\bra{ij}$ on Alice's system with $\pi$ being a permutation, then after such a measurement the reduced state (unnormalized) on Bob's system is given by
\begin{eqnarray*}
\rho_B=(\abs{\alpha d_{00}}^2+\abs{\beta d_{10}}^2)\proj{0}+(\abs{\alpha d_{01}}^2+\abs{\beta d_{11}}^2)\proj{1}.
\end{eqnarray*}
It is obvious that $\rho_B$ can not be transformed to $\proj{\gamma}$ by strictly incoherent operations since coherence cannot increase under incoherent operations.
\end{proof}

Also, another set of operations has been proposed while pursuing for an operationally meaningful resource theory of coherence. These are called dephasing covariant incoherent operations (DIOs) \cite{Chitambar2016}. An operation $\Phi$ is called a dephasing-covariant incoherent operation if
\begin{eqnarray}
\Br{\Delta,\Phi}=0,
\end{eqnarray}
where  $\Delta(\rho):=\sum_{i} \bra{i}\rho \ket{i}\ket{i}\bra{i}$. Let the set of all DIOs be denoted by $\mathcal{S}_{\mathrm{dio}}$. Next, we find the relation between DIOs and CBCs as follows.

\begin{prop}
\label{prop:dio}
Any quantum operation $\Phi$ belongs to both $\mathcal{S}_{\mathrm{dio}}$ and $\mathcal{S}_{\mathrm{cbc}}$ if and only if it can be represented by Kraus operators $K_{ij}$ of the form: $K_{ij}=\sqrt{p_{ij}}\ket{i}\bra{j}$, where $i,j\in\set{0,\ldots,d-1}$, $p_{ij}\geq0$ and $\sum^{d-1}_{i=0}p_{ij}=1$.
\end{prop}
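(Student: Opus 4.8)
The plan is to combine the structural characterization of CBCs from Theorem~\ref{thm:cbt} with the commutation condition defining DIOs, and to show that being a DIO forces each POVM element in the Holevo--CBC form to be diagonal. As usual the sufficiency direction is a direct verification, so the substance lies in the necessity direction.

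For \emph{sufficiency}, suppose $\Phi$ has Kraus operators $K_{ij}=\sqrt{p_{ij}}\ket{i}\bra{j}$ with $p_{ij}\geq 0$ and $\sum_i p_{ij}=1$. First I would note that $\sum_{ij}K_{ij}^\dag K_{ij}=\sum_j\Pa{\sum_i p_{ij}}\proj{j}=\mathbb{I}$, so $\Phi$ is CPTP, and that $F_i:=\sum_j p_{ij}\proj{j}$ is positive semi-definite with $\sum_i F_i=\mathbb{I}$; hence $\Phi(\rho)=\sum_i\proj{i}\mathrm{Tr}(\rho F_i)$ and $\Phi\in\mathcal{S}_{\mathrm{cbc}}$ by Theorem~\ref{thm:cbt}. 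To get the DIO property I would compute $\Phi(\rho)=\sum_{ij}p_{ij}\bra{j}\rho\ket{j}\proj{i}$ and observe that the output is diagonal and depends only on the diagonal entries of the input. The first fact gives $\Delta\circ\Phi=\Phi$, the second gives $\Phi\circ\Delta=\Phi$, so $\Br{\Delta,\Phi}=0$ and $\Phi\in\mathcal{S}_{\mathrm{dio}}$.

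For \emph{necessity}, suppose $\Phi\in\mathcal{S}_{\mathrm{dio}}\cap\mathcal{S}_{\mathrm{cbc}}$. By Theorem~\ref{thm:cbt} I may write $\Phi(\rho)=\sum_i\proj{i}\mathrm{Tr}(\rho F_i)$ for some positive semi-definite $\{F_i\}$ with $\sum_i F_i=\mathbb{I}$. The key step is to extract diagonality of the $F_i$ from the DIO condition. Since the output of any CBC is already diagonal, $\Delta\circ\Phi=\Phi$; combined with $\Delta\circ\Phi=\Phi\circ\Delta$ this yields $\Phi=\Phi\circ\Delta$, i.e.\ $\Phi$ annihilates every off-diagonal input. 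Evaluating on $\ket{j}\bra{k}$ with $j\neq k$ gives $0=\Phi(\ket{j}\bra{k})=\sum_i\proj{i}\bra{k}F_i\ket{j}$, whence $\bra{k}F_i\ket{j}=0$ for all $i$ and all $j\neq k$; by Hermiticity every off-diagonal entry of each $F_i$ vanishes, so each $F_i$ is diagonal. Writing $F_i=\sum_j p_{ij}\proj{j}$ with $p_{ij}=\bra{j}F_i\ket{j}\geq 0$, the condition $\sum_i F_i=\mathbb{I}$ becomes $\sum_i p_{ij}=1$, and the operators $K_{ij}=\sqrt{p_{ij}}\ket{i}\bra{j}$ furnish the claimed Kraus decomposition.

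The main obstacle---really the one conceptual point---is recognizing that for a CBC the dephasing map acts trivially on the output, so that the two-sided commutation $\Br{\Delta,\Phi}=0$ collapses to the single requirement $\Phi\circ\Delta=\Phi$; once this is seen, testing $\Phi$ on off-diagonal basis operators cleanly forces each $F_i$ to be diagonal. I would double-check only the positivity bookkeeping: that $p_{ij}\geq 0$ follows from the $F_i$ being positive semi-definite (diagonal entries of a PSD matrix are nonnegative), and that normalization of the $p_{ij}$ over $i$ is exactly equivalent to trace preservation.
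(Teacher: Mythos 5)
Your proof is correct and follows essentially the same route as the paper: invoke Theorem~\ref{thm:cbt} to write $\Phi(\rho)=\sum_i\proj{i}\mathrm{Tr}(\rho F_i)$, use the DIO condition evaluated on the off-diagonal matrix units $\ket{j}\bra{k}$ ($j\neq k$) to force $\bra{k}F_i\ket{j}=0$ and hence each $F_i$ diagonal, and read off the Kraus operators $K_{ij}=\sqrt{p_{ij}}\ket{i}\bra{j}$. The only cosmetic difference is that you derive the key fact $\Phi\circ\Delta=\Phi$ directly from $\Br{\Delta,\Phi}=0$ together with the diagonality of CBC outputs, whereas the paper cites the equivalent property $\Delta(\Phi(\ket{j}\bra{k}))=0$ of DIOs from the literature.
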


\begin{proof}
The sufficiency part follows directly from the definition of DIOs and Theorem \ref{thm:cbt}. The necessary part of the proposition can be proved as follows. We know that any CBC $\Phi$ can be expressed as
\begin{eqnarray*}
\Phi(\rho)=\sum_{i}\proj{i}\mathrm{Tr}(\rho F_i),
\end{eqnarray*}
where $\set{F_i}$ is a set of positive semi-definite operators and $\sum_i F_i=\mathbb{I}$. Moreover, $\Phi\in \mathcal{S}_{\mathrm{dio}}$ implies that
$\Delta(\Phi(\ket{j}\bra{k}))=0$  for any $j\neq k$  \cite{Chitambar2016} and we have
$\mathrm{Tr}(\ket{j}\bra{k} F_i)=0$ for any $j\neq k$, which means that $F_i$ is diagonal for any $i$ and can be written as $F_i=\sum^{d-1}_{j=0}p_{ij}\ket{j}\bra{j}$ with $p_{ij}\geq0$. Since $\sum_iF_i=\mathbb{I}$, then $\sum_ip_{ij}=1$ for any $j$. Therefore, such $\Phi$ can be expressed by the Kraus operators of the form $K_{ij}=\sqrt{p_{ij}}\ket{i}\bra{j}$.
\end{proof}
Moreover, from Propositions \ref{prop:sio} and \ref{prop:dio}, it is easy to see that $\mathcal{S}_{\mathrm{sio}}\cap \mathcal{S}_{\mathrm{cbc}}=\mathcal{S}_{\mathrm{dio}}\cap \mathcal{S}_{\mathrm{cbc}}$.

\section{Coherence breaking indices}\label{sec:dis}
In this section we discuss the iterative behaviour of quantum channels on the system of interest. In particular, we elaborate on how many iterations of a given incoherent quantum channel are needed in order for it to completely destroy the coherence of any input state or turn it into a coherence breaking channel? The minimum number of iterations of a given incoherent quantum channel is termed as {\it coherence breaking index} of the same channel. The coherence breaking indices of incoherent quantum channels can be considered as their relative figure of merit in terms of their decohering powers. There naturally appear quantum systems in various practical scenarios whose noise can be considered as a single elementary process iterated step by step in time. With an experimentally well-grounded assumption where these elementary steps are completely independent with each other, the action of the noise becomes a stroboscopic Markov process and can be modelled by an $n$-fold iteration of a given quantum channel \cite{Lami2015}. This justifies the consideration of coherence breaking indices from an experimental viewpoint.

%\subsection{Coherence breaking indices}
Let $\Phi$ be an incoherent quantum channel. The {\it coherence breaking index} $n(\Phi)$ of $\Phi$ is defined as
%\begin{widetext}
\begin{eqnarray}
n(\Phi)&=&\min\set{n\geq 1: \Phi^n\in \mathcal{S}_{\mathrm{cbc}}}.
%N_U(\Phi)&=&\min\set{n\geq1: \forall U_1,\ldots,U_{n-1}\in \mathrm{ICU}, \Phi U_1\Phi\ldots\Phi U_{n-1}\Phi\in \mathrm{CBC}},\\
%N(\Phi)&=&\min\set{n\geq1: \forall \Psi_1,\ldots,\Psi_{n-1}\in \mathrm{ICPTP}, \Phi \Psi_1\Phi\ldots\Phi \Psi_{n-1}\Phi\in \mathrm{CBC}},
\end{eqnarray}
%\end{widetext}
It is easy to see that if $U$ is an incoherent unitary operator, then
\begin{eqnarray*}
n(U\Phi U^\dag)=n(\Phi).
%N_U(U\Phi V)=N_U(\Phi), N(U\Phi V)=N(\Phi).
\end{eqnarray*}
For a quantum channel $\Phi$, if $n(\Phi)=\infty$, i.e., for any finite $n$, $\Phi^n$ is not a coherence breaking channel, one can term it as a coherence saving channel as it never destroys coherence completely. In the following we consider some examples of incoherent quantum channels and calculate their coherence breaking indices.

\smallskip
\noindent
\emph{Example 1.}
Consider an incoherent qubit quantum channel $\Phi$ characterized by $(M,\vec{n})$ with
\begin{equation*}
M= \Br{\begin{array}{ccc}
0 & \alpha & 0  \\
0 & 0 & 0\\
0 & 0 & 0
\end{array}},
\end{equation*}
where $\alpha$ is a real number and $\vec{n}=(0, 0, 0)^T$. For $\abs{\alpha}\leq 1$, $\Phi$ represented by $(M,\vec{n})$ is a CPTP map \cite{Ruskai2002, King2001}. The channel $(M,\vec{n})$ is then an incoherent channel but not a coherence breaking channel. Note that if a qubit quantum channel $\Phi$ is characterized by $(M,\vec{n})$, then iterated channel $\Phi^n$ is characterized by $\left(M^n,(\sum_{k=0}^{n-1}M^k)\vec{n}\right)$. Thus, it is easy to see that $\Phi^2$ is a coherence breaking channel and hence $n(\Phi)=2$. Let us consider another less trivial example of an incoherent qubit quantum channel $\Phi$ characterized by $(M,\vec{n})$ with
\begin{equation*}
M= \Br{\begin{array}{ccc}
0 & \alpha & 0  \\
0 & 0 & 0\\
\beta & 0 & 0
\end{array}},
\end{equation*}
where $\alpha$ and $\beta$ are real numbers and $\vec{n}=(0, 0, n_z)^T$. Again, we can choose $\alpha, \beta, n_z$ appropriately such that $(M,\vec{n})$ is a CPTP map.  The channel $(M,\vec{n})$ is then an incoherent channel but not a coherence breaking channel. Again, $n(\Phi)=2$.

\smallskip
\noindent
\emph{Example 2.}
Consider generalized amplitude damping channels \cite{Lami2015} on qubit systems as given by $D_{p,t}[a\ket{0}\bra{0} +  b\ket{0}\bra{1}+  b^*\ket{1}\bra{0} + c\ket{1}\bra{1}]= [pa+t(1-p)(a+c)]\ket{0}\bra{0} +  \sqrt{p}b\ket{0}\bra{1}+  \sqrt{p}b^*\ket{1}\bra{0} + [-pa+(1-t+pt)(a+c)]\ket{1}\bra{1}$.
%\begin{widetext}
%\begin{equation}
%D_{p,t}\Br{
%\begin{array}{ccc}
%a & b \\
%b^* & c
%\end{array}}
%=\Br{
%\begin{array}{ccc}
%pa+t(1-p)(a+c) & \sqrt{p}b \\
%\sqrt{p}b^* & -pa+[1-(1-p)t](a+c)
%\end{array}}.
%\end{equation}
%\end{widetext}
The representation $(M,\vec{n})$ of the qubit channel $D_{p,t}$ is given by
\begin{equation}
M= \Br{\begin{array}{ccc}
\sqrt{p} & 0 & 0  \\
0 & \sqrt{p} & 0\\
0 & 0 & p
\end{array}}\mathrm{~and~}
\vec{n}=(1-p)(2t-1)\Br{\begin{array}{ccc}
0 \\
0 \\
1
\end{array}}.
\end{equation}
Moreover, the representation of $D^n_{p,t}$ is given by $(\tilde{M},\vec{\tilde{n}})$, where
\begin{align*}
&\tilde{M}= \Br{\begin{array}{ccc}
\sqrt{p^n} & 0 & 0  \\
0 & \sqrt{p^n} & 0\\
0 & 0 & p^n
\end{array}}\mathrm{~and~}\\
&\vec{\tilde{n}}=\sum_{k=0}^{n-1}p^k\vec{n}=(1-p^n)(2t-1)\Br{\begin{array}{ccc}
0 \\
0 \\
1
\end{array}}.
\end{align*}
Thus, we have $D^n_{p,t} = D_{p^n,t}$. It means that the coherence breaking index $n(D_{p,t})$ of $D_{p,t}$ is not finite.
%Thus, it is easy to see that $D_{p,t}$ is coherence saving when $0<p<1$.

\section{ Coherence sudden death and universality of the dynamics of coherence}
\label{sec:coh-sud}
Consider a dynamical evolution of a single quantum system in a state $\ket{\psi}$ under some quantum channel $\Phi$. The phenomenon of vanishing of coherence of $\ket{\psi}$ in some finite time is termed as sudden death of coherence. If the coherence of $\ket{\psi}$ does not vanish in some finite time or vanishes asymptotically then this phenomenon is termed as no sudden death of coherence. Moreover, in the case stroboscopic Markovian processes where the evolution of a quantum system in a state $\ket{\psi}$ is modelled by an $n$-fold iterations of an elementary channel $\Phi$, if the coherence of $\ket{\psi}$ vanishes in $n_0$ iterations of $\Phi$ with $n_0<n$, then we say that such stroboscopic Markovian processes lead to the coherence sudden death. It is important to note that the phenomenon of coherence sudden death is both initial state and channel dependent. However, in the case of qubit states and for a specific measure of coherence, namely the $l_1$ norm of coherence \cite{Baumgratz2014}, we show that coherence sudden death is only channel dependent irrespective of the initial state. To achieve this we first state the factorization relation for the $l_1$ norm of coherence obtained in the evolution equation of coherence \cite{Hu2016}.

In a $d$-dimensional Hilbert space, any quantum state can be represented as
\begin{align}
 \rho=\frac{1}{d}\mathbb{I}_d + \frac{1}{2}\vec{x}\cdot\vec{\Lambda},
\end{align}
where $\vec{x}=(x_1,\ldots,x_{d^2-1})$, $\vec{\Lambda}=(\Lambda_1,\ldots,\Lambda_{d^2-1})$, $x_i=\mathrm{Tr}[\rho \Lambda_i]$ with $\Lambda_i$ being the generators of  $su(d)$ \cite{Hioe1981,Mahler98,Byrd2003,Kimura03}. The vector $\vec{x}$ can be written as $\vec{x}=\chi \vec{n}$, where $\vec{n}=(n_1,\ldots,n_{d^2-1})$ is a unit vector in $\mathbb{R}^{d^2-1}$ and $|\chi|\leq \sqrt{2(d-1)/d}$. Now the factorization relation of Ref. \cite{Hu2016} can be stated as follows.

\begin{lem}[\cite{Hu2016}]\label{lem:Hu}
Let us consider a quantum operation $\Phi$ with $\Phi\left(\frac{\mathbb{I}}{d}\right)$ is diagonal, then for any quantum state $ \rho=\frac{1}{d}\mathbb{I}_d + \frac{1}{2}\chi\vec{n}\cdot\vec{X}$
\begin{align}
 C_{l_1}\left( \Phi(\rho) \right) = C_{l_1}\left( \rho \right) C_{l_1}\left( \Phi(\rho_P) \right),
\end{align}
where $ \rho_P=\frac{1}{d}\mathbb{I}_d + \frac{1}{2}\chi_P \vec{n}\cdot\vec{X}$ is called the probe state and $\chi_P = 1/\sum_{r=1}^{(d^2-d)/2}(n_{2r-1}^2+n_{2r}^2)^{1/2}$.
\end{lem}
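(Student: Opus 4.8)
The plan is to reduce both $C_{l_1}(\Phi(\rho))$ and $C_{l_1}(\Phi(\rho_P))$ to a common ``shape factor'' that depends only on the direction $\vec{n}$, exploiting the structure of the $su(d)$ generators together with the hypothesis that $\Phi(\mathbb{I}_d/d)$ is diagonal. First I would fix the standard Hermitian basis of $su(d)$: for each unordered pair $(j,k)$ with $j<k$ a symmetric generator $\ket{j}\bra{k}+\ket{k}\bra{j}$ and an antisymmetric generator $-i(\ket{j}\bra{k}-\ket{k}\bra{j})$, together with the $d-1$ diagonal (Cartan) generators. Ordering the $(d^2-d)/2$ off-diagonal pairs by an index $r$ so that $\Lambda_{2r-1},\Lambda_{2r}$ are the symmetric and antisymmetric generators attached to the $r$-th index pair $(j_r,k_r)$, a direct computation gives $\bra{j_r}\rho\ket{k_r}=\tfrac12(x_{2r-1}-ix_{2r})$, and hence $\abs{\bra{j_r}\rho\ket{k_r}}=\tfrac12\sqrt{x_{2r-1}^2+x_{2r}^2}$. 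Summing the modulus over all $j\neq k$ (each unordered pair counted twice by Hermiticity) collapses to
\[
C_{l_1}(\rho)=\sum_{r=1}^{(d^2-d)/2}\sqrt{x_{2r-1}^2+x_{2r}^2}=\abs{\chi}\sum_{r=1}^{(d^2-d)/2}\sqrt{n_{2r-1}^2+n_{2r}^2},
\]
using $\vec{x}=\chi\vec{n}$; in particular $\chi_P$ is defined precisely so that $C_{l_1}(\rho_P)=1$.

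Next I would use the hypothesis on $\Phi$. By linearity $\Phi(\rho)=\Phi(\mathbb{I}_d/d)+\tfrac12\chi\sum_i n_i\,\Phi(\Lambda_i)$, and the crucial observation is that for $j\neq k$ the contribution $\bra{j}\Phi(\mathbb{I}_d/d)\ket{k}$ vanishes because $\Phi(\mathbb{I}_d/d)$ is diagonal. Therefore every off-diagonal entry of $\Phi(\rho)$ is \emph{exactly} linear in $\chi$ with no constant offset, namely $\bra{j}\Phi(\rho)\ket{k}=\tfrac12\chi\,f_{jk}(\vec{n})$, where $f_{jk}(\vec{n}):=\sum_i n_i\bra{j}\Phi(\Lambda_i)\ket{k}$ depends only on the direction $\vec{n}$ and on $\Phi$, not on the magnitude $\chi$. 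Taking moduli and summing yields $C_{l_1}(\Phi(\rho))=\abs{\chi}\,S$ with $S:=\tfrac12\sum_{j\neq k}\abs{f_{jk}(\vec{n})}$. Applying the identical computation to $\rho_P$, which shares the direction $\vec{n}$ but carries magnitude $\chi_P$, gives $C_{l_1}(\Phi(\rho_P))=\chi_P\,S$ with the very same $S$.

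Finally I would combine the two: dividing (the case $S=0$ makes both sides vanish trivially) gives $C_{l_1}(\Phi(\rho))=(\abs{\chi}/\chi_P)\,C_{l_1}(\Phi(\rho_P))$, and the explicit formula of the first step identifies $\abs{\chi}/\chi_P=\abs{\chi}\sum_r\sqrt{n_{2r-1}^2+n_{2r}^2}=C_{l_1}(\rho)$, which is exactly the asserted factorization. The conceptual heart of the argument, and the step I expect to demand the most care, is the second paragraph: the factorization works only because diagonality of $\Phi(\mathbb{I}_d/d)$ removes every constant term from the off-diagonal entries, so that the nonlinear map $z\mapsto\abs{z}$ acts on a common direction-dependent vector scaled by the single real parameter $\chi$. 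The remaining labour is bookkeeping---pinning down the generator conventions so that the off-diagonal $l_1$ norm genuinely collapses to the pairwise sum $\sum_r\sqrt{n_{2r-1}^2+n_{2r}^2}$ appearing in the definition of $\chi_P$.
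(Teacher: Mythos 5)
Your proof is correct, but there is an important point of context: the paper does not prove this lemma at all --- it is imported verbatim from Ref.~\cite{Hu2016} as a cited result --- so there is no in-paper argument to compare yours against; what you have produced is a valid self-contained reconstruction of the result being cited. The two pillars of your argument both check out under the paper's conventions ($\mathrm{Tr}[\Lambda_i\Lambda_j]=2\delta_{ij}$, $x_i=\mathrm{Tr}[\rho\Lambda_i]$): first, $\bra{j_r}\rho\ket{k_r}=\tfrac12(x_{2r-1}-ix_{2r})$ gives $C_{l_1}(\rho)=\abs{\chi}\sum_{r}(n_{2r-1}^2+n_{2r}^2)^{1/2}$, so $\chi_P$ is exactly the magnitude that normalizes the probe state to $C_{l_1}(\rho_P)=1$ and hence $C_{l_1}(\rho)=\abs{\chi}/\chi_P$; second, diagonality of $\Phi(\mathbb{I}_d/d)$ ensures every off-diagonal entry of $\Phi(\rho)$ is $\tfrac12\chi f_{jk}(\vec{n})$ with no constant offset, where $f_{jk}(\vec{n})=\sum_i n_i\bra{j}\Phi(\Lambda_i)\ket{k}$ is shared by $\rho$ and $\rho_P$ (correctly including the contributions of the diagonal generators, since $\Phi$ need not be incoherent), whence $C_{l_1}(\Phi(\rho))=\abs{\chi}S$ and $C_{l_1}(\Phi(\rho_P))=\chi_P S$ with the same $S$, and the factorization follows. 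Two pedantic caveats, both attributable to the lemma's statement rather than to your argument: (i) when all off-diagonal components of $\vec{n}$ vanish, $\chi_P$ is undefined, so the lemma implicitly assumes $C_{l_1}(\rho)>0$ (your $S=0$ remark handles the complementary degeneracy); (ii) $\rho_P$ need not be positive semidefinite for large $\chi_P$, but your derivation never invokes positivity --- only linearity of $\Phi$ and the definition of $C_{l_1}$ as a sum of off-diagonal moduli --- so the identity holds for $\rho_P$ as a Hermitian matrix, which is all that is needed.
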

It is easy to see that incoherent operations satisfy the condition of the above Lemma, therefore, for incoherent operations, the above equality holds. Moreover, the above lemma can be simplified for the qubit cases and we have the following proposition.
\begin{prop}
\label{prop-Hu}
If $\Phi$ is an incoherent operation on a single qubit system, then for any qubit state $\rho$ there exists a maximally coherent state $\ket{\psi}$ such that
\begin{align}
 C_{l_1}\left( \Phi(\rho) \right) = C_{l_1}\left( \rho \right) C_{l_1}\left( \Phi(\ket{\psi}\bra{\psi}) \right).
\end{align}
\end{prop}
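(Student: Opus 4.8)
The plan is to work entirely in the Bloch (affine) representation $(M,\vec n)$ already introduced for qubit channels, since for a single qubit the $l_1$ norm of coherence has a transparent geometric meaning. Writing $\rho=\tfrac12(\mathbb{I}+\vec r\cdot\vec\sigma)$, the single off-diagonal entry is $\rho_{01}=\tfrac12(r_1-ir_2)$, so $C_{l_1}(\rho)=2\abs{\rho_{01}}=\sqrt{r_1^2+r_2^2}$ is exactly the Euclidean length of the equatorial ($x$--$y$) part of the Bloch vector. Thus the whole proposition reduces to understanding how $\Phi$ rescales this equatorial length, and the existence of $\ket{\psi}$ amounts to identifying a single maximally coherent state that captures that rescaling.

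First I would use the incoherence of $\Phi$ to constrain $(M,\vec n)$. The incoherent qubit states are precisely those whose Bloch vector lies on the $z$-axis, i.e.\ $r_1=r_2=0$. Since $\Phi$ maps incoherent states to incoherent states, evaluating $\Phi$ on $\rho=\mathbb{I}/2$ (Bloch vector $\vec 0$) forces the first two components of $\vec n$ to vanish, and evaluating on $\rho=\proj{0}$ (Bloch vector $(0,0,1)$) then forces $M_{13}=M_{23}=0$. Consequently the two equatorial components of the output Bloch vector $M\vec r+\vec n$ are obtained from the equatorial components $(r_1,r_2)$ of the input by the fixed $2\times2$ upper-left block $N$ of $M$ alone, with no dependence on $r_3$ and no affine offset.

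The factorization then follows from homogeneity of the Euclidean norm. Writing the equatorial input as $(r_1,r_2)=\sqrt{r_1^2+r_2^2}\,(\cos\phi,\sin\phi)$, I obtain $C_{l_1}(\Phi(\rho))=\norm{N(r_1,r_2)^T}=\sqrt{r_1^2+r_2^2}\,\norm{N(\cos\phi,\sin\phi)^T}=C_{l_1}(\rho)\,\norm{N(\cos\phi,\sin\phi)^T}$. It remains to recognize the directional factor: choosing $\ket{\psi}$ to be the maximally coherent state with Bloch vector $(\cos\phi,\sin\phi,0)$, i.e.\ the equatorial unit vector sharing the azimuthal angle $\phi$ of $\rho$, makes the equatorial part of $\Phi(\proj{\psi})$ equal to $N(\cos\phi,\sin\phi)^T$, whence $C_{l_1}(\Phi(\proj{\psi}))=\norm{N(\cos\phi,\sin\phi)^T}$ is exactly the required factor.

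I expect the only genuine obstacle to be the decoupling step in the second paragraph, namely establishing that the equatorial components of the output are a state-independent \emph{linear} image of the equatorial components of the input, with no mixing in of $r_3$ and no constant shift; once this is in place, everything else is homogeneity of the norm. The degenerate case $r_1=r_2=0$ needs only a one-line remark: then $C_{l_1}(\rho)=0$ and $C_{l_1}(\Phi(\rho))=0$ because $\Phi$ is incoherent, so the identity holds for any choice of $\ket{\psi}$. This argument is precisely the qubit specialization of Lemma \ref{lem:Hu}; the mild additional point here is that the relevant probe direction is realized by a genuine physical maximally coherent state, so the probe-state factor $C_{l_1}(\Phi(\rho_P))$ may be replaced by $C_{l_1}(\Phi(\proj{\psi}))$.
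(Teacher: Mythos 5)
Your proof is correct and is essentially the paper's own argument transcribed into Bloch coordinates: the incoherence constraints you derive on $(M,\vec n)$ are exactly the paper's observation that $\Phi(\proj{0})$ and $\Phi(\proj{1})$ are diagonal, your homogeneity step is the paper's factoring out of $|x+iy|$, and your equatorial probe state with matching azimuth $\phi$ is precisely the paper's choice $\ket{\psi}=\tfrac{1}{\sqrt{2}}(\ket{0}+e^{-i\theta}\ket{1})$ with matching off-diagonal phase. The only cosmetic difference is that the paper computes with the matrix elements $\bra{1}\Phi(\ket{0}\bra{1})\ket{0}$ and $\bra{1}\Phi(\ket{1}\bra{0})\ket{0}$ directly, while you package the same information into the upper-left $2\times 2$ block $N$ of $M$; both arguments are complete, including your explicit handling of the degenerate case $r_1=r_2=0$.
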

\begin{proof}
Since any qubit state $\rho$ can be written as $\rho=\frac{1}{2}[(1+z)\ket{0}\bra{0} + (x+iy)\ket{0}\bra{1}+(x-iy)\ket{1}\bra{0}+(1-z)\ket{1}\bra{1}]$ and $\Phi$ is an incoherent operation, then $C_{l_1}\left( \rho \right) = |x+iy|$ and
\begin{align}
  &C_{l_1}\left( \Phi(\rho) \right)\nonumber\\
  &= 2\left| \frac{x+iy}{2}\bra{1}\Phi(\ket{0}\bra{1})\ket{0} +  \frac{x-iy}{2}\bra{1}\Phi(\ket{1}\bra{0})\ket{0} \right|\nonumber\\
  &=|x+iy|\left|e^{i\theta} \bra{1}\Phi(\ket{0}\bra{1})\ket{0} + e^{-i\theta}\bra{1}\Phi(\ket{1}\bra{0})\ket{0} \right|,
\end{align}
where $x+iy=|x+iy|e^{i\theta}$ and $x-iy=|x+iy|e^{-i\theta}$. Now, taking $\ket{\psi}=\frac{1}{\sqrt{2}}(\ket{0}+e^{-i\theta}\ket{1})$ as the maximally coherent state, we have
\begin{align}
\label{eq:factor}
 C_{l_1}\left( \Phi(\rho) \right) = C_{l_1}\left( \rho \right) C_{l_1}\left( \Phi(\ket{\psi}\bra{\psi}) \right).
\end{align}
This concludes the proof of the proposition.
\end{proof}

\begin{figure}
\includegraphics[width=75mm]{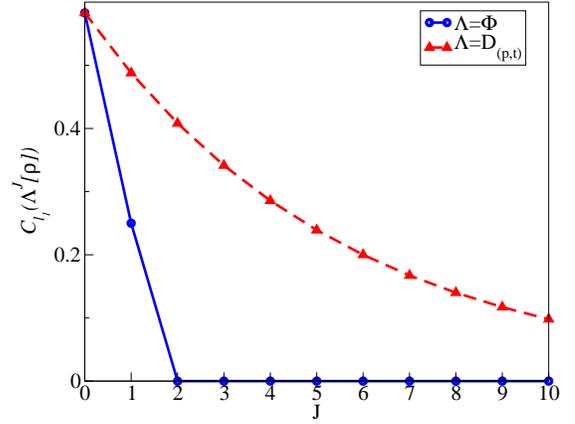}
\caption{Consider a qubit state $\rho = \frac{1}{2}\left( \mathbb{I}+\vec{r}\cdot\vec{\sigma} \right)$ with $\vec{r} = (0.3,0.5,0.2)^T$. Also, take $J=10$ in the stroboscopic Markovian processs of both the examples $3$ and $4$. In example $3$, take $\alpha=0.5$.  We have $C_{l_1}(\rho) = 0.5830$, $C_{l_1}(\Phi[\rho]) = 0.25$ and $C_{l_1}(\Phi^J[\rho]) = 0$ for $J\geq2$. The solid line in above figure plots $C_{l_1}(\Phi^J[\rho])$ as function of $J$ and shows the coherence sudden death at the second iteration of the channel. In example $4$, take $p=0.7$ and $t=1$. Here $C_{l_1}(D_{p,t}^J[\rho]) = p^{\frac{J}{2}}C_{l_1}(\rho)$. The dotted line in above figure plots $C_{l_1}(D_{p,t}^J[\rho])$ as function of $J$ and shows no sudden death of coherence as the coherence is nonzero at every iteration of the channel $D_{p,t}$.}
\label{fig:sudden-death}
\end{figure}

The above proposition implies that the knowledge of the initial coherence of a quantum system and the action of incoherent operation $\Phi$ on maximally coherent state are enough to determine the evolution of coherence. However, we remark that the maximally coherent state appearing in the above proposition is initial state dependent. The phenomenon of coherence sudden death in these cases is independent of the initial state of quantum system and is essentially endowed to the channel. It is important to note that if a channel $\Phi$ is coherence breaking which from Theorem \ref{thm:cbt} means that $C_{l_1}\left( \Phi(\ket{\psi}\bra{\psi}) \right)=0$, it necessarily implies coherence sudden death and vice-versa for any initial state.

For a stroboscopic Markovian process represented by $\Phi^J$, i.e., by $J$ iterations of the channel $\Phi$, Eq. (\ref{eq:factor}) becomes
\begin{align}
\label{eq:strob}
 C_{l_1}\left( \Phi^J(\rho) \right) = C_{l_1}\left( \rho \right) C_{l_1}\left( \Phi^J(\ket{\psi}\bra{\psi}) \right).
\end{align}
The above equation can be reinterpreted as follows. If coherence breaking index of a channel $\Phi$ is $n(\Phi)$, then after $n(\Phi)$ iterations of the channel $\Phi$, the channel $\Phi$ becomes a coherence breaking channel. This implies that for a channel $\Phi$ with coherence breaking index $n(\Phi)$, in the corresponding stroboscopic Markovian process with $J$ iterations of the channel $\Phi$, the evolution will lead to the coherence sudden death if $n(\Phi)<J$. Next, we present a few examples of incoherent evolutions that can lead to the coherence sudden death.

\smallskip
\noindent
\emph{Example 3.}
Consider a stroboscopic Markovian process given by $\Phi^J$ with $J\geq3$, where $\Phi$ is a qubit quantum channel characterized by $(M,\vec{n})$ with
\begin{equation*}
M= \Br{\begin{array}{ccc}
0 & \alpha & 0  \\
0 & 0 & 0\\
0 & 0 & 0
\end{array}},
\end{equation*}
and $\vec{n}=(0, 0, 0)^T$. Here $\alpha ~(\abs{\alpha}\leq 1)$ is a real number. We obtained earlier that the channel $\Phi$ has coherence breaking index $n(\Phi)=2$. Thus, the above stroboscopic Markovian process leads to the coherence sudden death for any input state (see also Fig. \ref{fig:sudden-death}).

\smallskip
\noindent
\emph{Example 4.}
Consider a stroboscopic Markovian process given by $D_{p,t}^J$, where $D_{p,t}$ is a generalized amplitude damping qubit channel (see Sec. \ref{sec:dis}). The coherence breaking index of the generalized amplitude damping $n(D_{p,t})$ is not finite, therefore, such a stroboscopic Markovian process will never lead to coherence sudden death (see also Fig. \ref{fig:sudden-death}).

\smallskip
\noindent
\emph{Typicality of evolution of coherence for higher dimensional systems.--} It is known that pure quantum states in higher dimensional Hilbert spaces show concentration of measure phenomenon for various physical properties (for example, see Refs. \cite{Hayden2006, Popescu2006, Hayden2008, Tiersch2009, Datta2010, Muller2011, Tiersch2013, UttamS2015, Zhang2015, BuD2016}). Similarly, here we aim at finding the typical properties of the evolution of coherence under incoherent quantum channels. Consider a quantum system in a state $\rho(0)=\ket{\psi}\bra{\psi}$ evolving under an incoherent quantum channel $\Lambda_t$ such that $\rho(t) = \Lambda_t[\rho(0)]$. Specifically, we want to describe the dynamics of pure states $\ket{\psi}$ chosen uniformly at random from the Haar measure under some quantum channel $\Lambda_t$ and wish to establish the typicality of the evolution. For this we will resort on the concentration of measure phenomenon encapsulated in L\'evy's lemma applicable to Lipschitz continuous functions (see \cref{levy's lemma}).
% We know that the relative entropy of coherence and the $l_1$ norm of coherence are Lipschitz continuous functions \cite{UttamS2015, BuD2016}.
Now, consider two pure states $\ket{\psi}$ and $\ket{\phi}$ evolving under $\Lambda_t$. Then, for any Lipschitz continuous measure $C$ of coherence with Lipschitz constant $\eta_C$, we have
\begin{align}
&\left|C\left(\Lambda_t[\proj{\psi}]\right)-C\left(\Lambda_t[\proj{\phi}]\right)\right|\nonumber\\
&~~~~~~~~~~~~~~\leq \eta_C \left|\left|~\Lambda_t[\proj{\psi}]-\Lambda_t[\proj{\phi}]~\right |\right|_1\nonumber\\
&~~~~~~~~~~~~~~\leq\eta_C \eta_{\Lambda_t} \left|\left|~\proj{\psi}-\proj{\phi}~\right |\right|_1\nonumber\\
&~~~~~~~~~~~~~~\leq 2\eta_C \eta_{\Lambda_t} \left|\left|~\ket{\psi}-\ket{\phi}~\right |\right|,
\end{align}
where $||\cdot||_1$ is the trace norm and $||\cdot||$ is the Euclidean norm (see \cref{levy's lemma}). The first inequality follows from the definition of the Lipschitz continuous function over the space of density matrices. The second inequality follows from the monotonicity of the trace norm under quantum channels \cite{Ruskai1994} (see also \cref{levy's lemma}). The third inequality follows from the relation between the trace distance and the Euclidean distance \cite{Tiersch2013}. If the distance between the states $\ket{\psi}$ and $\ket{\phi}$ is small then the coherences of the final states are also close to each other. Now, based on L\'evy's lemma we have following result.
\begin{thm}
\label{th:generic}
Let $\ket{\psi}$ be a random pure state on a $d$ dimensional Hilbert space. Then, for any $\epsilon\geq 0$, we have
\begin{align}
\label{Eq:conc}
\mathrm{Pr} &\left\{ \left|C(\Lambda_t[\proj{\psi}]) -\mathbb{E}_\psi C\right|> \epsilon \right\} \nonumber\\
&~~~~~~~~~~~~~~~\leq 2\exp\left( -\frac{d \epsilon^2}{18 \pi^3\eta_C^2\eta_{\Lambda_t}^2 \ln 2 } \right),
\end{align}
where $2\eta_C\eta_{\Lambda_t}$ is the Lipschitz constant for the function $F:\mathbb{CP}^{d-1}\mapsto \mathbb{R}$ with $F(\ket{\psi}) = C(\Lambda_t[\proj{\psi}])$ and $\mathbb{E}_\psi C = \int \mathrm{d}\psi C(\Lambda_t[\proj{\psi}])$.
\end{thm}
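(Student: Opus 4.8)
The plan is to apply L\'evy's lemma directly to the function $F:\mathbb{CP}^{d-1}\mapsto\mathbb{R}$ defined by $F(\ket{\psi})=C(\Lambda_t[\proj{\psi}])$, treating $\ket{\psi}$ as distributed according to the Haar measure on the unit sphere of the $d$-dimensional Hilbert space. The standard statement of L\'evy's lemma asserts that for a function $f$ on the hypersphere $S^{k-1}$ (here $k=2d$, since a complex $d$-dimensional pure state lives on the real sphere $S^{2d-1}$) with Lipschitz constant $\eta$, one has a Gaussian concentration bound of the form $\mathrm{Pr}\{|f-\mathbb{E}f|>\epsilon\}\leq 2\exp(-c\,k\,\epsilon^2/\eta^2)$ for an explicit numerical constant $c$. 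So the entire task reduces to identifying the correct Lipschitz constant of $F$ and then substituting it into this black-box bound.

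The key step, which is exactly the computation already displayed just before the theorem statement in the excerpt, is to bound the Lipschitz constant of $F$. First I would invoke the chain of inequalities $|C(\Lambda_t[\proj{\psi}])-C(\Lambda_t[\proj{\phi}])|\leq\eta_C\norm{\Lambda_t[\proj{\psi}]-\Lambda_t[\proj{\phi}]}_1\leq\eta_C\eta_{\Lambda_t}\norm{\proj{\psi}-\proj{\phi}}_1\leq 2\eta_C\eta_{\Lambda_t}\norm{\,\ket{\psi}-\ket{\phi}\,}$, where successively I use Lipschitz continuity of the coherence measure $C$, monotonicity (contractivity) of the trace norm under the channel $\Lambda_t$, and the standard comparison between the trace distance of the projectors and the Euclidean distance of the underlying vectors. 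This identifies $2\eta_C\eta_{\Lambda_t}$ as a valid Lipschitz constant for $F$ with respect to the Euclidean metric on the sphere, which is precisely the metric in which L\'evy's lemma is phrased.

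With the Lipschitz constant in hand, the final step is bookkeeping: substitute $\eta=2\eta_C\eta_{\Lambda_t}$ and $k=2d$ into the concentration bound and collect the numerical constants so that the exponent matches the claimed form $-d\epsilon^2/(18\pi^3\eta_C^2\eta_{\Lambda_t}^2\ln 2)$. The factor structure comes out as follows: the $(2\eta_C\eta_{\Lambda_t})^2=4\eta_C^2\eta_{\Lambda_t}^2$ in the denominator combines with the dimension factor and the universal constant $c$ appearing in the particular version of L\'evy's lemma quoted in \cref{levy's lemma}; tracing through that constant (which carries the $\pi^3$ and $\ln 2$) reproduces the stated coefficient. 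Since $\mathbb{E}_\psi C=\int\dif\psi\,C(\Lambda_t[\proj{\psi}])$ is just the Haar average, no additional argument is needed for the centering.

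I expect the only genuine obstacle to be matching the exact numerical constant $18\pi^3\ln 2$, which is entirely a matter of citing the precise normalization of L\'evy's lemma used in the appendix and carefully propagating the factor of $2$ from the Lipschitz bound and the factor of $2$ from the real dimension $2d$; the conceptual content—Lipschitz estimate plus a single invocation of concentration of measure—is routine once the contractivity of the trace norm under $\Lambda_t$ is granted.
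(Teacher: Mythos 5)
Your proposal is correct and follows exactly the paper's (implicit) argument: the displayed chain of inequalities identifies $2\eta_C\eta_{\Lambda_t}$ as the Lipschitz constant of $F$, and applying the appendix version of L\'evy's lemma on $\mathbb{S}^{2d-1}$ (so $k+1=2d$) with $\eta_F^2 = 4\eta_C^2\eta_{\Lambda_t}^2$ yields precisely the exponent $-d\epsilon^2/(18\pi^3\eta_C^2\eta_{\Lambda_t}^2\ln 2)$. The constant bookkeeping you flag as the only potential obstacle indeed works out with no slack.
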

As an example, the Lipschitz constant for the scaled $l_1$ norm of coherence, i.e., $C_{l_1}/C_{l_1}^\mathrm{max}$ over the space of the density matrices is given by $\frac{d}{(d-1)}$ (see \cref{append:lip}). Therefore, the scaled $l_1$ norm of coherence of the evolved state starting from a generic state almost always concentrates around the average value given by $\frac{\mathbb{E}_\psi C_{l_1}}{C_{l_1}^\mathrm{max}} =\frac{1}{C_{l_1}^\mathrm{max}} \int \mathrm{d}\psi C_{l_1}(\Lambda_t[\proj{\psi}])$. In particular, we have the following corollary.
 \begin{cor}
\label{cor:generic}
Let $\ket{\psi}$ be a random pure state on a $d$ dimensional Hilbert space. Then, for any $\epsilon\geq 0$, we have
\begin{align}
\label{eq:l1}
\mathrm{Pr} &\left\{ \left|\frac{C(\Lambda_t[\proj{\psi}])}{C_{l_1}^{\mathrm{max}}} -  \frac{\mathbb{E}_\psi C_{l_1}}{C_{l_1}^\mathrm{max}}  \right|> \epsilon \right\} \nonumber\\
&~~~~~~~~~~~~~~~\leq 2\exp\left( -\frac{(d-1)^2 \epsilon^2}{18 \pi^3  \eta_{\Lambda_t}^2 d\ln 2} \right).
\end{align}
\end{cor}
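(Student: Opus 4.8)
The plan is to derive Corollary~\ref{cor:generic} as an immediate specialization of Theorem~\ref{th:generic} to the single coherence measure $C=C_{l_1}$, so that no new probabilistic input is needed: the concentration bound Eq.~\eqref{Eq:conc} already holds for \emph{every} Lipschitz continuous coherence measure, and the only thing that distinguishes the $l_1$ case is the numerical value of its Lipschitz constant. Thus the entire argument reduces to inserting the correct constant into the exponent of Eq.~\eqref{Eq:conc} and simplifying.

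First I would invoke the fact, established in \cref{append:lip}, that the scaled $l_1$ norm $C_{l_1}/C_{l_1}^{\mathrm{max}}$ is Lipschitz continuous on the space of density matrices with respect to the trace norm and has Lipschitz constant $\eta_C=\frac{d}{d-1}$. Concretely, since $C_{l_1}(\rho)=\sum_{i\neq j}\abs{\bra{i}\rho\ket{j}}$ and $C_{l_1}^{\mathrm{max}}=d-1$ in dimension $d$, dividing by $d-1$ rescales the Lipschitz constant of $C_{l_1}$ accordingly, and one verifies the net value is $\frac{d}{d-1}$. By the displayed chain of inequalities preceding Theorem~\ref{th:generic}, the composed map $F(\ket{\psi})=C(\Lambda_t[\proj{\psi}])$ then inherits Lipschitz constant $2\eta_C\eta_{\Lambda_t}$ on $\mathbb{CP}^{d-1}$, which is exactly the constant entering L\'evy's lemma.

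Next I would substitute $\eta_C=\frac{d}{d-1}$ into Eq.~\eqref{Eq:conc}, so that the exponent becomes
\begin{align*}
-\frac{d\,\epsilon^2}{18\pi^3\,\eta_C^2\,\eta_{\Lambda_t}^2\ln 2}
= -\frac{d\,\epsilon^2}{18\pi^3\left(\frac{d}{d-1}\right)^{2}\eta_{\Lambda_t}^2\ln 2}
= -\frac{(d-1)^2\,\epsilon^2}{18\pi^3\,d\,\eta_{\Lambda_t}^2\ln 2},
\end{align*}
which reproduces precisely the right-hand side of Eq.~\eqref{eq:l1}. Since the rescaled expectation $\mathbb{E}_\psi C_{l_1}/C_{l_1}^{\mathrm{max}}$ plays the role of $\mathbb{E}_\psi C$ in Theorem~\ref{th:generic}, the probability statement of the corollary then follows verbatim.

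The only genuinely nontrivial ingredient is the evaluation of the Lipschitz constant $\frac{d}{d-1}$ for the scaled $l_1$ norm; everything else is algebraic bookkeeping in the exponent. Hence I expect the main obstacle to lie entirely in \cref{append:lip}, and once that constant is granted the corollary is immediate.
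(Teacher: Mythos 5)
Your proposal is correct and matches the paper's own (implicit) argument exactly: the corollary is obtained by substituting the Lipschitz constant $\eta_C=\frac{d}{d-1}$ of the scaled $l_1$ norm (established in \cref{append:lip}) into the exponent of Eq.~\eqref{Eq:conc} from Theorem~\ref{th:generic}, and your algebraic simplification $\frac{d}{(d/(d-1))^2}=\frac{(d-1)^2}{d}$ reproduces Eq.~\eqref{eq:l1} precisely. Nothing is missing.
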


\section{Summary and outlook}\label{sec:con}
In this work, we have investigated quantum channels which output only incoherent states for any input states. We call such channels as coherence breaking channels. First we define two kinds of coherence breaking channels, namely, coherence breaking and selective coherence breaking channels, and obtain the full characterization of these two types of quantum channels. Then we prove that they are, in fact, equivalent. Further, we consider stroboscopic Markovian processes in which the action of noise is characterized by the iterative applications of an elementary CPTP map. In these situations, we define coherence breaking indices of incoherent quantum channels which can be considered as a relative figure of merit in deciding the detrimental capabilities of a quantum channel in the context of quantum coherence. We then define the notion of coherence sudden death under quantum channels which describes abrupt vanishing of coherence under an incoherent channel in time. Based on the recently obtained factorization relations for the $l_1$ norm of coherence, we link the coherence breaking channels and coherence breaking indices with the coherence sudden death and present various examples to delineate this. Finally, for systems with higher dimensional Hilbert spaces, based on L\'evy's lemma, we show the typicality of the dynamics of coherence for random pure states. This is a very useful result in depicting the behaviour of a quantum channel acting on a quantum system with higher dimensional Hilbert space together with the reduction of the computational complexity of coherence evolution. We exemplify this phenomenon by considering the scaled $l_1$ norm of coherence for random pure states and provide explicit bounds on the typical coherence of the evolved states.

The results in this work present a systematic and exhaustive characterization of the detrimental effects of various noisy scenarios and therefore, are of great practical value. Moreover, the results on the typicality of the dynamics of the coherence provide a tractable estimation of the dynamics of coherence in otherwise computationally hard scenarios of the systems with higher dimensional Hilbert spaces. However, more work is possible in this context. For example, in our work we left open the calculation of the average coherence for various relevant noise scenarios such as the random incoherent unitary evolution considering the scaled $l_1$ norm of coherence as a measure of coherence. It will be interesting for future works to explicitly obtain results in this direction. Also, it will be useful to obtain simplified factorization relations in the context of evolution of coherence, say for pure qudit states, in future.

\smallskip
\noindent
\begin{acknowledgments}
J.W. is supported by the Natural Science Foundations of China (Grants No.11171301 and No. 10771191) and the Doctoral Programs Foundation of the Ministry of Education of China (Grant No. J20130061). K.B. acknowledges Chunhe Xiong and Lu Li for various discussions and help. Swati thanks Arun Kumar Pati for his help and support during her stay as a visiting student at Harish Chandra Research Institute, India. U.S. acknowledges support from the research fellowship of Department of Atomic Energy, Government of India and thanks Namit Anand for various discussions related to this work.
 
\end{acknowledgments}

\appendix

\section{ Some useful Lemmas}
\label{append:lemmas}

\begin{lem}\label{lem:noze}
In a $d$-dimensional Hilbert space, for any two nozero pure states $\ket{\alpha}$ and $\ket{\beta}$, there exists a maximally coherent  pure state $\ket{\psi}$ such that $\iinner{\alpha}{\psi}\neq0$ and $\iinner{\beta}{\psi}\neq 0$.
\end{lem}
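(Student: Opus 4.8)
The plan is to exploit the freedom in the phases of a maximally coherent state. Writing $\ket{\alpha}=\sum_{j=0}^{d-1}a_j\ket{j}$ and $\ket{\beta}=\sum_{j=0}^{d-1}b_j\ket{j}$ in the reference basis, every maximally coherent state has the form $\ket{\psi(\theta)}=\frac{1}{\sqrt{d}}\sum_{j=0}^{d-1}e^{i\theta_j}\ket{j}$ for some phase vector $\theta=(\theta_0,\ldots,\theta_{d-1})\in\real^d$, so that $\iinner{\alpha}{\psi(\theta)}=\frac{1}{\sqrt d}\sum_j\bar a_j e^{i\theta_j}$ and $\iinner{\beta}{\psi(\theta)}=\frac{1}{\sqrt d}\sum_j\bar b_j e^{i\theta_j}$. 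The goal is thus to choose $\theta$ so that both of these trigonometric sums are nonzero simultaneously.

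First I would satisfy the first condition alone. Since $\ket{\alpha}\neq 0$, at least one $a_j\neq 0$; choosing $\theta_j=-\arg a_j$ for every such $j$ (and arbitrarily otherwise) aligns all nonzero terms and gives $\iinner{\alpha}{\psi(\theta)}=\frac{1}{\sqrt d}\sum_{j:a_j\neq 0}\abs{a_j}>0$. Fix this phase vector $\theta^{(0)}$. If it already happens that $\iinner{\beta}{\psi(\theta^{(0)})}\neq 0$ we are done, so the only case left to handle is $\iinner{\beta}{\psi(\theta^{(0)})}=0$.

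In that case I would perturb a single phase. Because $\ket{\beta}\neq 0$, there is an index $k$ with $b_k\neq 0$; consider the one-parameter family obtained by replacing $\theta^{(0)}_k$ with $\theta^{(0)}_k+t$, and define $g(t)=\iinner{\beta}{\psi}$ and $h(t)=\iinner{\alpha}{\psi}$ along it. Then $g(0)=0$ but $g'(0)=\frac{i}{\sqrt d}\bar b_k e^{i\theta^{(0)}_k}\neq 0$, so $g(t)=g'(0)\,t+O(t^2)$ is nonzero for all sufficiently small $t\neq 0$; meanwhile $h(0)\neq 0$ and $h$ is continuous, so $h(t)\neq 0$ on a neighborhood of $0$. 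Picking any small nonzero $t$ in the intersection yields a phase vector for which both inner products are nonzero, and the associated $\ket{\psi}$ is the required maximally coherent state.

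The hard part is really just the simultaneity: moving $\ket{\psi}$ to fix the $\ket{\beta}$-overlap without destroying the $\ket{\alpha}$-overlap. This is controlled by the fact that the first-order variation of $g$ is nonzero (so $g$ genuinely leaves $0$) together with the continuity of $h$. Alternatively, one could phrase the whole argument measure-theoretically: each of $\theta\mapsto\iinner{\alpha}{\psi(\theta)}$ and $\theta\mapsto\iinner{\beta}{\psi(\theta)}$ is a nontrivial real-analytic function of $\theta$, hence vanishes only on a set of Lebesgue measure zero, and the complement of the union of these two null sets is nonempty, giving a valid $\theta$ at once; but the explicit single-phase perturbation above is more elementary and avoids any appeal to analytic zero sets.
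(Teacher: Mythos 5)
Your proof is correct, and its engine is the same as the paper's: construct a maximally coherent state with nonzero overlap with one of the two vectors, then perturb a single phase $\theta_k$, using continuity so that the vanishing overlap moves away from zero while the nonvanishing one stays nonzero. The execution differs at both ends, in your favor. For the starting point, the paper observes that the Fourier vectors $\ket{\psi_k}=\frac{1}{\sqrt d}\sum_{j}e^{2\pi i kj/d}\ket{j}$ form a basis of maximally coherent states, so at least one of them overlaps $\ket{\alpha}$ or $\ket{\beta}$; you instead align the phases with $\ket{\alpha}$ directly, which is cleaner and even maximizes $\abs{\iinner{\alpha}{\psi}}$. For the perturbation, the paper splits into two cases according to whether the supports of $\ket{\alpha}$ and $\ket{\beta}$ intersect, whereas your single choice of $k$ in the support of the zero-overlap state, together with $g'(0)=\frac{i}{\sqrt d}\bar b_k e^{i\theta^{(0)}_k}\neq 0$, handles both cases uniformly (if $a_k=0$ then $h$ is constant in $t$; otherwise continuity covers it). In fact you can dispense with the derivative entirely: since $g(0)=0$, one has exactly $g(t)=\frac{1}{\sqrt d}\,\bar b_k e^{i\theta^{(0)}_k}\left(e^{it}-1\right)$, which is nonzero for every $t\in(0,2\pi)$. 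Two small points. First, with your convention $\iinner{\alpha}{\psi(\theta)}=\frac{1}{\sqrt d}\sum_j\bar a_j e^{i\theta_j}$, the aligning choice is $\theta_j=+\arg a_j$, not $-\arg a_j$; as written the terms carry phases $e^{-2i\arg a_j}$ and need not add constructively (a one-character fix that does not affect the structure of the argument). Second, your measure-theoretic alternative is also valid and genuinely different from the paper, but it rests on verifying that each overlap is a not-identically-zero real-analytic function of $\theta$ — for which the alignment step is again the natural witness.
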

\begin{proof}
Any  maximally coherent  pure state $\ket{\psi}$ can be written as $\ket{\psi}=\frac{1}{\sqrt{d}}\sum^{d-1}_{j=0}e^{i\theta_j}\ket{j}$ and let $\ket{\alpha}=\sum^{d-1}_{j=0}|\alpha_j|e^{i a_j}\ket{j}$,
$\ket{\beta}=\sum^{d-1}_{j=0}|\beta_j|e^{ib_j}\ket{j}$. Then $\iinner{\alpha}{\psi}=\frac{1}{\sqrt{d}}\sum_j|\alpha_j|e^{i(\theta_j-a_j)}$ and $\iinner{\beta}{\psi}=\frac{1}{\sqrt{d}}\sum_j|\beta_j|e^{i(\theta_j-b_j)}$. We can use maximally coherent states to construct a basis of the Hilbert space and $\{\ket{\psi_k}\}_{k=0}^{d-1}$ forms a basis of the Hilbert space, where
\begin{eqnarray*}
\ket{\psi_k}=\frac{1}{\sqrt{d}}\sum_{j=0}^{d-1} e^{i\frac{2\pi k j}{d}}\ket{j}.
\end{eqnarray*}
Thus, there exists a maximally coherent state $\ket{\psi}=\frac{1}{\sqrt{d}}\sum^{d-1}_{j=0}e^{i\theta_j}\ket{j}$ such that $\iinner{\alpha}{\psi}\neq0$ or $\iinner{\beta}{\psi}\neq 0$. If both $\iinner{\alpha}{\psi}$ and $\iinner{\beta}{\psi}$ are nonzero, then lemma is proved. Otherwise, without loss of generality, we assume $\iinner{\alpha}{\psi}=0$ and $\iinner{\beta}{\psi}\neq 0$, i.e.,
\begin{eqnarray*}
\sum_j|\alpha_j|e^{i(\theta_j-a_j)}=0;\\
\sum_j|\beta_j|e^{i(\theta_j-b_j)}\neq 0.
\end{eqnarray*}

\smallskip
\noindent
(a) If there exist a $k$ such that $|\alpha_k|>0$ and $|\beta_k|>0$, then by the continuity, there exist $\varepsilon$ such that
\begin{eqnarray*}
\sum_j|\alpha_j|e^{i(\theta_j+\varepsilon\delta_{j,k}-a_j)}\neq0;\\
\sum_j|\beta_j|e^{i(\theta_j+\varepsilon\delta_{j,k}-b_j)}\neq 0,
\end{eqnarray*}
where $\delta_{j,k}=0$ if $j\neq k$ and $\delta_{j,k}=1$ if $j=k$. That is, $\ket{\psi'}:=\frac{1}{\sqrt{d}}\sum^{d-1}_{j=0}e^{i(\theta_j+\varepsilon\delta_{j,k})}\ket{j}$
is a maximally coherent state with $\iinner{\alpha}{\psi'}\neq0$ and $\iinner{\beta}{\psi'}\neq 0$.

\smallskip
\noindent
(b) Otherwise, $|\alpha_k\beta_k|=0$ for any $k$. Then without loss of generality, $\ket{\alpha}$ and $\ket{\beta}$ can be viewed as
\begin{eqnarray*}
\ket{\alpha}&=&\sum^{d_1}_{j=0}|\alpha_j|e^{i a_j}\ket{j};\\
\ket{\beta}&=&\sum^{d-1}_{j=d_1+1}|\beta_j|e^{i b_j}\ket{j}.
\end{eqnarray*}
Then,
\begin{eqnarray*}
\iinner{\alpha}{\psi}&=&\sum^{d_1}_{j=0}|\alpha_j|e^{i(\theta_j-a_j)}=0;\\
\iinner{\beta}{\psi}&=&\sum^{d-1}_{j=d_1+1}|\beta_j|e^{i(\theta_j-b_j)}\neq 0.
\end{eqnarray*}
As $\ket{\alpha}$ is a nozero state, then there exist a $|\alpha_k|>0$, where $0\leq k\leq d_1$.
Then, by the continuity, there exists $\varepsilon$, such that
\begin{eqnarray*}
\sum^{d_1}_{j=0}|\alpha_j|e^{i(\theta_j+\varepsilon\delta_{j,k}-a_j)}&\neq&0;\\
\sum^{d-1}_{j=d_1+1}|\beta_j|e^{i(\theta_j-b_j)}&\neq& 0.
\end{eqnarray*}
That is,  $\ket{\psi'}:=\frac{1}{\sqrt{d}}\sum^{d-1}_{j=0}e^{i(\theta_j+\varepsilon\delta_{j,k})}\ket{j}$
is a maximally coherent state with $\iinner{\alpha}{\psi'}\neq0$ and $\iinner{\beta}{\psi'}\neq 0$. This completes the proof of the lemma.
\end{proof}

%=========================================%
%=========================================%

\begin{lem}\label{lem:equ0}
Given $n\times n$ matrix $X=[x_{ij}]\in\complex^{n\times n}$ with $x_{ii}=0$ for $1\leq i\leq n $, if for any set $\set{\theta_i}^n_{i=1}$ with $\theta_i\in \real$,
\begin{eqnarray}\label{eq:sum0}
\sum_{i,j}e^{i(\theta_i-\theta_j)}x_{ij}=0,
\end{eqnarray}
then $X=0$.
\end{lem}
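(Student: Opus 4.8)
The plan is to regard the left-hand side of \eqref{eq:sum0} as a trigonometric polynomial in the angles $\theta_1,\dots,\theta_n$ and to recover each off-diagonal entry $x_{pq}$ by Fourier orthogonality on the torus $[0,2\pi]^n$. Since $x_{ii}=0$, the sum in \eqref{eq:sum0} effectively runs over $i\neq j$, and writing $z_k=e^{i\theta_k}$ it reads $\sum_{i\neq j} z_i\bar z_j\, x_{ij}=0$ for every choice of unit-modulus numbers $z_k$. The hypothesis thus says this quadratic-form-like expression vanishes identically on the torus, and the task is to invert the Fourier transform to read off the coefficients.

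First I would fix a pair $(p,q)$ with $p\neq q$ and integrate \eqref{eq:sum0} against $e^{-i(\theta_p-\theta_q)}$ with respect to the normalized measure $(2\pi)^{-n}\,d\theta_1\cdots d\theta_n$. The term indexed by $(i,j)$ then contributes $x_{ij}\prod_{k=1}^n \frac{1}{2\pi}\int_0^{2\pi} e^{i m_k\theta_k}\,d\theta_k$, where $m_k$ is the coefficient of $\theta_k$ in $\theta_i-\theta_j-\theta_p+\theta_q$, namely $m_k=\delta_{ki}-\delta_{kj}-\delta_{kp}+\delta_{kq}$. By the elementary relation $\frac{1}{2\pi}\int_0^{2\pi} e^{i m\theta}\,d\theta=\delta_{m,0}$, such a term survives only if $m_k=0$ for every $k$.

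The main (and essentially only nontrivial) point is the bookkeeping showing that, among the pairs occurring in the sum, the single one with all $m_k=0$ is $(i,j)=(p,q)$. Because $i\neq j$ and $p\neq q$, any other pairing leaves a coefficient $\pm 2$ on some angle—for instance $i=q$ forces $m_q=2$, and $j=p$ forces $m_p=-2$—so its integral vanishes, while the diagonal terms are already absent since $x_{ii}=0$. Hence the integral collapses to exactly $x_{pq}$, and \eqref{eq:sum0} forces $x_{pq}=0$. I expect this combinatorial check to be the only place demanding care, though it is entirely routine.

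Since $(p,q)$ with $p\neq q$ was arbitrary, every off-diagonal entry vanishes, and the diagonal entries vanish by hypothesis; therefore $X=0$. I would close by remarking that the assumption $x_{ii}=0$ is indispensable: dropping it, the same computation yields only that the off-diagonal entries and the trace vanish, and indeed $X=\diag(1,-1,0,\dots,0)$ satisfies \eqref{eq:sum0} while being nonzero.
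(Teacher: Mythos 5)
Your proof is correct, but it takes a genuinely different route from the paper's. The paper splits $X=B+iC$ into Hermitian parts, observes that $\bra{\theta}B\ket{\theta}$ and $\bra{\theta}C\ket{\theta}$ are real and hence must vanish separately, writes the entries in polar form, and then evaluates the resulting cosine/sine identities at finitely many special angle configurations ($\theta_k\in\{0,\pi\}$ and the like), recovering each entry from linear combinations of these evaluations. You instead use Fourier orthogonality on the torus: integrating the identity against $e^{-i(\theta_p-\theta_q)}$ isolates $x_{pq}$ in one stroke. Your route is shorter, avoids the Hermitian decomposition and the case analysis entirely, and makes transparent the role of the hypothesis $x_{ii}=0$ (your closing counterexample $X=\mathrm{diag}(1,-1,0,\dots,0)$ is a nice touch); the paper's route, while lengthier, invokes the hypothesis only at finitely many points of the torus, so it implicitly proves the slightly stronger statement that vanishing at those finitely many configurations already forces $X=0$. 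One small inaccuracy in your bookkeeping: it is not true that ``any other pairing leaves a coefficient $\pm 2$ on some angle''---when $\{i,j\}$ and $\{p,q\}$ are disjoint, the nonzero exponents are $\pm 1$ on four distinct angles, and a $\pm 2$ occurs only when the pairs overlap crosswise. What you need, and what is true, is that $m_k=\delta_{ki}-\delta_{kj}-\delta_{kp}+\delta_{kq}$ vanishes for every $k$ if and only if the multisets $\{i,q\}$ and $\{j,p\}$ coincide, which for $i\neq j$ and $p\neq q$ forces $(i,j)=(p,q)$; every other term therefore has some $m_k\neq 0$ and integrates to zero. This slip does not affect the validity of the argument.
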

\begin{proof}
Let $\ket{\theta}=(e^{-i\theta_1}, e^{-i\theta_2},\ldots,e^{-i\theta_n})^T$ where $T$ denotes transpose, then the condition \eqref{eq:sum0} is equivalent to $\bra{\theta}X\ket{\theta}=0$ for any $\ket{\theta}$. Matrix $X$ can be written as $X=B+iC$ with Hermitian matrices $B=[b_{ij}]$ and $C=[c_{ij}]$. Moreover $B=\frac{X+X^*}{2}$ and $C=\frac{X-X^*}{2i}$
implies  $b_{ii}=0$ and $c_{ii}=0$ for any $i=1,\ldots,n$. Now $\bra{\theta}X\ket{\theta}=\bra{\theta}B\ket{\theta}+i\bra{\theta}C\ket{\theta}=0$ means
$\bra{\theta}B\ket{\theta}=\bra{\theta}C\ket{\theta}=0$, as $\bra{\theta}B\ket{\theta}$ and
$\bra{\theta}C\ket{\theta}$ are both real. The (i,j) entries of $B$ can be written as $b_{ij}=|b_{ij}|e^{i\beta_{ij}}$. Since $B$ is hermitian, $|b_{ij}|=|b_{ji}|$ and $\beta_{ij}=-\beta_{ji}$. We have
\begin{eqnarray*}
0&=&\bra{\theta}B\ket{\theta}=\sum_{i,j,i\neq j}e^{i(\theta_i-\theta_j)}b_{ij}\\
&=&\sum_{i<j}[e^{i(\theta_i-\theta_j)}b_{ij}+e^{i(\theta_j-\theta_i)}b_{ji}]\\
&=&\sum_{i<j}[e^{i(\theta_i-\theta_j+\beta_{ij})}+e^{i(\theta_j-\theta_i-\beta_{ij})}]|b_{ij}|\\
&=&2\sum_{i<j}\cos(\theta_i-\theta_j+\beta_{ij})|b_{ij}|\\
&=&2\sum_{i<j}[\cos(\theta_i-\theta_j)\cos(\beta_{ij})-\sin(\theta_i-\theta_j)\sin(\beta_{ij})]|b_{ij}|,
\end{eqnarray*}
which implies $\sum_{i<j}\cos(\theta_i-\theta_j)\cos(\beta_{ij})|b_{ij}|=\sum_{i<j}\sin(\theta_i-\theta_j)\sin(\beta_{ij})|b_{ij}|$.
Let $\eta_{ij}=\cos(\beta_{ij})|b_{ij}|$ and $\lambda_{ij}=\sin(\beta_{ij})|b_{ij}|$, then we have
\begin{eqnarray}\label{eq:tot}
\sum_{i<j}\cos(\theta_i-\theta_j)\eta_{ij}=\sum_{i<j}\sin(\theta_i-\theta_j)\lambda_{ij},
\end{eqnarray}
for any set $\set{\theta_i}^n_{i=1}$.
To solve \eqref{eq:tot}, we take special set $\set{\theta_i}^n_{i=1}$ as following. For any fixed $i$ and $j$ with $i<j$, first  let all $\theta_k=0$, then \eqref{eq:tot} becomes
 \begin{eqnarray}\label{eq:all0}
\sum_{r<s}\eta_{rs}=0.
\end{eqnarray}
Now take $\theta_i=\pi$ and $\theta_k=0$ with $k\neq i$,
\begin{eqnarray}\label{eq:1pi}
\sum_{r<s,r\neq i,s\neq i}\eta_{rs}+\sum_{u<i}(-1)\eta_{ui}+\sum_{i<v}(-1)\eta_{iv}=0.
\end{eqnarray}
Using \eqref{eq:all0} minus \eqref{eq:1pi}, we get
\begin{eqnarray}\label{eq:1pi2}
\sum_{u<i}\eta_{ui}+\sum_{i<v}\eta_{iv}=0.
\end{eqnarray}
Similarly, set $\theta_j=\pi$ and $\theta_k=0$ with $k\neq j$, we can also obtain
\begin{eqnarray}\label{eq:1pi3}
\sum_{p<j}\eta_{pj}+\sum_{j<q}\eta_{jq}=0.
\end{eqnarray}
Finally, take $\theta_i=\theta_j=\pi$ with $i<j$ and $\theta_k=0$ for any $k\neq i,j$, then \eqref{eq:tot} implies
\begin{eqnarray}\label{eq:2pi1}
&&\sum_{r<s,r,s\notin\set{i,j}}\eta_{rs}+\sum_{u<i}(-1)\eta_{ui}+\sum_{i<v,v\neq j}(-1)\eta_{iv} +\eta_{ij}\nonumber\\
&+&\sum_{p<j,p\neq i}(-1)\eta_{pj}+\sum_{j<q}(-1)\eta_{jq}=0.
\end{eqnarray}
Using \eqref{eq:all0} minus \eqref{eq:2pi1}, we have
\begin{eqnarray}\label{eq:2pi2}
\sum_{u<i}\eta_{ui}+\sum_{i<v,v\neq j}\eta_{iv}
+\sum_{p<j,p\neq i}\eta_{pj}+\sum_{j<q}\eta_{jq}=0.
\end{eqnarray}
Then,
\begin{eqnarray*}
\eqref{eq:1pi2}+\eqref{eq:1pi3}-\eqref{eq:2pi2} \Rightarrow \eta_{ij}=0, \mathrm{~for~ any}~i\mathrm{~and~}  j \mathrm{~with~} i<j.
\end{eqnarray*}
Thus \eqref{eq:tot} reduces to
\begin{eqnarray*}
\sum_{i<j}\sin(\theta_i-\theta_j)\lambda_{ij}=0,
\end{eqnarray*}
for any set $\set{\theta_i}^n_{i=1}$.
Similarly, we can get $\lambda_{ij}=0$ via choosing a special set $\set{\theta_i}^n_{i=1}$. Therefore,  $|b_{ij}|^2=\eta^2_{ij}+\lambda^2_{ij}=0$ for any $i<j$ which means that $B=0$. Using the same method, it is easy to obtain $C=0$. Now, since $X=B+iC$, we have $X=0$. This completes the proof of the lemma.
\end{proof}

\section{ Random pure states, concentration of measure phenomenon and the measures of distance on Hilbert space} \label{levy's lemma}

\smallskip
\noindent {\it  Random pure states:}
The set of pure states on a $d$-dimensional Hilbert space is a complex projective space $\mathbb{C}P^{d-1}$. This set is endowed with a unique measure $\dif(\psi)$ induced by the Haar measure $\dif\mu(U)$ on the unitary group $\mathrm{U}(d)$ \cite{Wootters1990, Zyczkowski1994, Zyczkowski2001, Bengtsson2008, Aubrun2014}. Thus, any random pure state $\ket{\psi}$ can be generated by applying a random unitary matrix $U\in \mathrm{U}(d)$ on a fixed pure state $\ket{\psi_0}$, i.e., $\ket{\psi} = U\ket{\psi_0}$. Now for any function $F$ of pure state, we have
\begin{align}
\mathbb{E}_\psi F(\psi) :=\int \dif(\psi) ~F(\psi) = \int_{\mathrm{U}(d)} \dif\mu(U)~ F(U\psi_0).\nonumber
\end{align}

\smallskip
\noindent {\it  Concentration of measure phenomenon:}
The observation that an overwhelming majority of vectors of a vector space take a fixed value for many functions defined over the vector space as the dimension of the vector space goes to infinity, is referred to as the concentration of measure phenomenon. In particular,  L\'evy's lemma is the rigorous statement about the concentration of measure phenomenon \cite{Ledoux2005} for Lipschitz continuous functions on the sphere.  We will state L\'evy's lemma shortly but before that we define Lipschitz continuous functions. Consider two metric spaces $(V_1, d_1)$ and $(V_2, d_2)$ and a function $F : V_1 \rightarrow V_2$. If there exists a real number $\eta_F$ such that $d_2(F(u),F(v)) \leq \eta_F d_1(u, v)$ for all $u, v \in V_1$, then $F$ is called a Lipschitz continuous function on $V_1$ with the Lipschitz constant $\eta_F$  \cite{Searcoid2007}.

\smallskip
\noindent
{\it L\'evy's lemma (see \cite{Ledoux2005} and \cite{Hayden2006}).--}
Let $F:\mathbb{S}^k\to \mathbb{R}$ be a Lipschitz continuous function with Lipschitz constant $\eta_F$. Here $\mathbb{S}^k$ is the $k$-sphere and $\mathbb{R}$ is the real line. Let us consider a random vector $u\in\mathbb{S}^k$. Then for any $\epsilon>0$,
\begin{align}
\label{eq:levy-lemma}
\mathrm{Pr}\set{|F(u)-\mathbb{E}_uF(u)|>\epsilon} \leq 2\exp\Pa{-\frac{(k+1)\epsilon^2}{9\pi^3\eta_F^2\ln2}},
\end{align}
where $\mathbb{E}_uF(u)$ is the expected value of $F(u)$ over random vectors $u\in\mathbb{S}^k$.

\smallskip
\noindent
{\it Trace distance.--} The trace distance between two quantum states $\rho$ and $\sigma$ is defined as \cite{Wilde13}
\begin{eqnarray}
\label{tr-dis}
||\rho-\sigma||_1:=\mathrm{Tr}\left[\sqrt{(\rho-\sigma)^2}\right].
\end{eqnarray}
The trace distance satisfies the monotonicity property under the influence of a quantum channel $\Lambda_t$ \cite{Ruskai1994}. More precisely,
\begin{align}
||\Lambda_t[\rho]-\Lambda_t[\sigma]||_1\leq \eta_{\Lambda_t} ||\rho-\sigma||_1,
\end{align}
where $\eta_{\Lambda_t}\leq 1$.

\section{The Lipschitz constant for the $l_1$ norm of coherence}
\label{append:lip}
We need to find a constant $\eta_{C_{l_1}}$ for the $l_1$ norm of coherence such that\begin{eqnarray*}
\left|C_{l_1}(\rho)-C_{l_1}(\sigma)\right|
\leq \eta_{C_{l_1}}\norm{\rho-\sigma}_1.
\end{eqnarray*}
From the definition of the $l_1$ norm of coherence, $C_{l_1}(\rho)=\norm{\rho}_{l_1}-1$, where $\norm{\rho}_{l_1}=\sum_{ij}\abs{\rho_{ij}}$, we have
\begin{eqnarray}
\left|C_{l_1}(\rho)-C_{l_1}(\sigma)\right|
&=&\abs{\norm{\rho}_{l_1}-\norm{\sigma}_{l_1}}\nonumber\\
&\leq& \norm{\rho-\sigma}_{l_1}\nonumber\\
&\leq& d\norm{\rho-\sigma}_2\nonumber\\
&\leq& d \norm{\rho-\sigma}_1,
\end{eqnarray}
where the first inequality follows from the triangle inequality of norm $\norm{\cdot}_{l_1}$,  the
second inequality comes from the fact that $\norm{A}_{l_1}=\sum_{ij}\abs{A_{ij}}\leq \sqrt{d^2\sum_{ij}\abs{A_{ij}}^2}=d\norm{A}_2$ and the third inequality follows from $\norm{A}_2\leq \norm{A}_1$ \cite{Watrous2011}. Note that the norm $\norm{\cdot}_2$ is defined as $\norm{A}_2=\sqrt{\mathrm{Tr} A^\dag A}$. Thus, Lipschitz constant for the scaled $l_1$ norm of coherence, i.e., $C_{l_1}/C_{l_1}^\mathrm{max}$, is given by $\frac{d}{d-1}$ and we have
\begin{eqnarray*}
\abs{\frac{C_{l_1}(\rho)}{C_{l_1}^{\mathrm{max}}}-\frac{C_{l_1}(\sigma)}{C_{l_1}^{\mathrm{max}}}}
\leq \frac{d}{d-1}\norm{\rho-\sigma}_1.
\end{eqnarray*}

\bibliographystyle{apsrev4-1}
 \bibliography{coh-break-lit}

%merlin.mbs apsrev4-1.bst 2010-07-25 4.21a (PWD, AO, DPC) hacked
%Control: key (0)
%Control: author (72) initials jnrlst
%Control: editor formatted (1) identically to author
%Control: production of article title (-1) disabled
%Control: page (0) single
%Control: year (1) truncated
%Control: production of eprint (0) enabled
\begin{thebibliography}{79}%
\makeatletter
\providecommand \@ifxundefined [1]{%
 \@ifx{#1\undefined}
}%
\providecommand \@ifnum [1]{%
 \ifnum #1\expandafter \@firstoftwo
 \else \expandafter \@secondoftwo
 \fi
}%
\providecommand \@ifx [1]{%
 \ifx #1\expandafter \@firstoftwo
 \else \expandafter \@secondoftwo
 \fi
}%
\providecommand \natexlab [1]{#1}%
\providecommand \enquote  [1]{``#1''}%
\providecommand \bibnamefont  [1]{#1}%
\providecommand \bibfnamefont [1]{#1}%
\providecommand \citenamefont [1]{#1}%
\providecommand \href@noop [0]{\@secondoftwo}%
\providecommand \href [0]{\begingroup \@sanitize@url \@href}%
\providecommand \@href[1]{\@@startlink{#1}\@@href}%
\providecommand \@@href[1]{\endgroup#1\@@endlink}%
\providecommand \@sanitize@url [0]{\catcode `\\12\catcode `\$12\catcode
  `\&12\catcode `\#12\catcode `\^12\catcode `\_12\catcode `\%12\relax}%
\providecommand \@@startlink[1]{}%
\providecommand \@@endlink[0]{}%
\providecommand \url  [0]{\begingroup\@sanitize@url \@url }%
\providecommand \@url [1]{\endgroup\@href {#1}{\urlprefix }}%
\providecommand \urlprefix  [0]{URL }%
\providecommand \Eprint [0]{\href }%
\providecommand \doibase [0]{http://dx.doi.org/}%
\providecommand \selectlanguage [0]{\@gobble}%
\providecommand \bibinfo  [0]{\@secondoftwo}%
\providecommand \bibfield  [0]{\@secondoftwo}%
\providecommand \translation [1]{[#1]}%
\providecommand \BibitemOpen [0]{}%
\providecommand \bibitemStop [0]{}%
\providecommand \bibitemNoStop [0]{.\EOS\space}%
\providecommand \EOS [0]{\spacefactor3000\relax}%
\providecommand \BibitemShut  [1]{\csname bibitem#1\endcsname}%
\let\auto@bib@innerbib\@empty
%</preamble>
\bibitem [{\citenamefont {Schr{\"o}dinger}(1935)}]{Schrodinger1935}%
  \BibitemOpen
  \bibfield  {author} {\bibinfo {author} {\bibfnamefont {E.}~\bibnamefont
  {Schr{\"o}dinger}},\ }in\ \href@noop {} {\emph {\bibinfo {booktitle}
  {Mathematical Proceedings of the Cambridge Philosophical Society}}},\
  Vol.~\bibinfo {volume} {31}\ (\bibinfo {organization} {Cambridge University
  Press},\ \bibinfo {year} {1935})\ pp.\ \bibinfo {pages}
  {555--563}\BibitemShut {NoStop}%
\bibitem [{\citenamefont {Sudarshan}(1963)}]{Sudarshan63}%
  \BibitemOpen
  \bibfield  {author} {\bibinfo {author} {\bibfnamefont {E.~C.~G.}\
  \bibnamefont {Sudarshan}},\ }\href {\doibase 10.1103/PhysRevLett.10.277}
  {\bibfield  {journal} {\bibinfo  {journal} {Phys. Rev. Lett.}\ }\textbf
  {\bibinfo {volume} {10}},\ \bibinfo {pages} {277} (\bibinfo {year}
  {1963})}\BibitemShut {NoStop}%
\bibitem [{\citenamefont {Ivan}\ \emph {et~al.}(2013)\citenamefont {Ivan},
  \citenamefont {Sabapathy},\ and\ \citenamefont {Simon}}]{Ivan2013}%
  \BibitemOpen
  \bibfield  {author} {\bibinfo {author} {\bibfnamefont {J.~S.}\ \bibnamefont
  {Ivan}}, \bibinfo {author} {\bibfnamefont {K.~K.}\ \bibnamefont {Sabapathy}},
  \ and\ \bibinfo {author} {\bibfnamefont {R.}~\bibnamefont {Simon}},\ }\href
  {\doibase 10.1103/PhysRevA.88.032302} {\bibfield  {journal} {\bibinfo
  {journal} {Phys. Rev. A}\ }\textbf {\bibinfo {volume} {88}},\ \bibinfo
  {pages} {032302} (\bibinfo {year} {2013})}\BibitemShut {NoStop}%
\bibitem [{\citenamefont {Streltsov}\ \emph {et~al.}(2015)\citenamefont
  {Streltsov}, \citenamefont {Singh}, \citenamefont {Dhar}, \citenamefont
  {Bera},\ and\ \citenamefont {Adesso}}]{Alex15}%
  \BibitemOpen
  \bibfield  {author} {\bibinfo {author} {\bibfnamefont {A.}~\bibnamefont
  {Streltsov}}, \bibinfo {author} {\bibfnamefont {U.}~\bibnamefont {Singh}},
  \bibinfo {author} {\bibfnamefont {H.~S.}\ \bibnamefont {Dhar}}, \bibinfo
  {author} {\bibfnamefont {M.~N.}\ \bibnamefont {Bera}}, \ and\ \bibinfo
  {author} {\bibfnamefont {G.}~\bibnamefont {Adesso}},\ }\href {\doibase
  10.1103/PhysRevLett.115.020403} {\bibfield  {journal} {\bibinfo  {journal}
  {Phys. Rev. Lett.}\ }\textbf {\bibinfo {volume} {115}},\ \bibinfo {pages}
  {020403} (\bibinfo {year} {2015})}\BibitemShut {NoStop}%
\bibitem [{\citenamefont {Plenio}\ and\ \citenamefont
  {Huelga}(2008)}]{Plenio2008}%
  \BibitemOpen
  \bibfield  {author} {\bibinfo {author} {\bibfnamefont {M.~B.}\ \bibnamefont
  {Plenio}}\ and\ \bibinfo {author} {\bibfnamefont {S.~F.}\ \bibnamefont
  {Huelga}},\ }\href {\doibase 10.1088/1367-2630/10/11/113019} {\bibfield
  {journal} {\bibinfo  {journal} {New J. Phys.}\ }\textbf {\bibinfo {volume}
  {10}},\ \bibinfo {pages} {113019} (\bibinfo {year} {2008})}\BibitemShut
  {NoStop}%
\bibitem [{\citenamefont {Lloyd}(2011)}]{Lloyd2011}%
  \BibitemOpen
  \bibfield  {author} {\bibinfo {author} {\bibfnamefont {S.}~\bibnamefont
  {Lloyd}},\ }\href {http://stacks.iop.org/1742-6596/302/i=1/a=012037}
  {\bibfield  {journal} {\bibinfo  {journal} {J. Phys.: Conf. Ser.}\ }\textbf
  {\bibinfo {volume} {302}},\ \bibinfo {pages} {012037} (\bibinfo {year}
  {2011})}\BibitemShut {NoStop}%
\bibitem [{\citenamefont {Levi}\ and\ \citenamefont {Mintert}(2014)}]{Levi14}%
  \BibitemOpen
  \bibfield  {author} {\bibinfo {author} {\bibfnamefont {F.}~\bibnamefont
  {Levi}}\ and\ \bibinfo {author} {\bibfnamefont {F.}~\bibnamefont {Mintert}},\
  }\href {http://stacks.iop.org/1367-2630/16/i=3/a=033007} {\bibfield
  {journal} {\bibinfo  {journal} {New J. Phys.}\ }\textbf {\bibinfo {volume}
  {16}},\ \bibinfo {pages} {033007} (\bibinfo {year} {2014})}\BibitemShut
  {NoStop}%
\bibitem [{\citenamefont {{Rodr{\'{\i}}guez-Rosario}}\ \emph
  {et~al.}()\citenamefont {{Rodr{\'{\i}}guez-Rosario}}, \citenamefont
  {{Frauenheim}},\ and\ \citenamefont {{Aspuru-Guzik}}}]{Aspuru13}%
  \BibitemOpen
  \bibfield  {author} {\bibinfo {author} {\bibfnamefont {C.~A.}\ \bibnamefont
  {{Rodr{\'{\i}}guez-Rosario}}}, \bibinfo {author} {\bibfnamefont
  {T.}~\bibnamefont {{Frauenheim}}}, \ and\ \bibinfo {author} {\bibfnamefont
  {A.}~\bibnamefont {{Aspuru-Guzik}}},\ }\href@noop {} {}\Eprint
  {http://arxiv.org/abs/1308.1245} {arXiv:1308.1245} \BibitemShut {NoStop}%
\bibitem [{\citenamefont {Lostaglio}\ \emph {et~al.}(2015)\citenamefont
  {Lostaglio}, \citenamefont {Jennings},\ and\ \citenamefont
  {Rudolph}}]{Rudolph214}%
  \BibitemOpen
  \bibfield  {author} {\bibinfo {author} {\bibfnamefont {M.}~\bibnamefont
  {Lostaglio}}, \bibinfo {author} {\bibfnamefont {D.}~\bibnamefont {Jennings}},
  \ and\ \bibinfo {author} {\bibfnamefont {T.}~\bibnamefont {Rudolph}},\ }\href
  {http://dx.doi.org/10.1038/ncomms7383} {\bibfield  {journal} {\bibinfo
  {journal} {Nat. Commun.}\ }\textbf {\bibinfo {volume} {6}},\ \bibinfo {pages}
  {6383} (\bibinfo {year} {2015})}\BibitemShut {NoStop}%
\bibitem [{\citenamefont {{Lostaglio}}\ \emph {et~al.}()\citenamefont
  {{Lostaglio}}, \citenamefont {{Jennings}},\ and\ \citenamefont
  {{Rudolph}}}]{Lostaglio2015}%
  \BibitemOpen
  \bibfield  {author} {\bibinfo {author} {\bibfnamefont {M.}~\bibnamefont
  {{Lostaglio}}}, \bibinfo {author} {\bibfnamefont {D.}~\bibnamefont
  {{Jennings}}}, \ and\ \bibinfo {author} {\bibfnamefont {T.}~\bibnamefont
  {{Rudolph}}},\ }\href@noop {} {}\Eprint {http://arxiv.org/abs/1511.04420}
  {arXiv:1511.04420} \BibitemShut {NoStop}%
\bibitem [{\citenamefont {Brand\~ao}\ \emph {et~al.}(2015)\citenamefont
  {Brand\~ao}, \citenamefont {Horodecki}, \citenamefont {Ng}, \citenamefont
  {Oppenheim},\ and\ \citenamefont {Wehner}}]{Brandao2015}%
  \BibitemOpen
  \bibfield  {author} {\bibinfo {author} {\bibfnamefont {F.}~\bibnamefont
  {Brand\~ao}}, \bibinfo {author} {\bibfnamefont {M.}~\bibnamefont
  {Horodecki}}, \bibinfo {author} {\bibfnamefont {N.}~\bibnamefont {Ng}},
  \bibinfo {author} {\bibfnamefont {J.}~\bibnamefont {Oppenheim}}, \ and\
  \bibinfo {author} {\bibfnamefont {S.}~\bibnamefont {Wehner}},\ }\href
  {\doibase 10.1073/pnas.1411728112} {\bibfield  {journal} {\bibinfo  {journal}
  {Proc. Natl. Acad. Sci. U.S.A.}\ }\textbf {\bibinfo {volume} {112}},\
  \bibinfo {pages} {3275} (\bibinfo {year} {2015})}\BibitemShut {NoStop}%
\bibitem [{\citenamefont {Narasimhachar}\ and\ \citenamefont
  {Gour}(2015)}]{Narasimhachar2015}%
  \BibitemOpen
  \bibfield  {author} {\bibinfo {author} {\bibfnamefont {V.}~\bibnamefont
  {Narasimhachar}}\ and\ \bibinfo {author} {\bibfnamefont {G.}~\bibnamefont
  {Gour}},\ }\href {http://dx.doi.org/10.1038/ncomms8689} {\bibfield  {journal}
  {\bibinfo  {journal} {Nat. Commun.}\ }\textbf {\bibinfo {volume} {6}},\
  \bibinfo {pages} {7689} (\bibinfo {year} {2015})}\BibitemShut {NoStop}%
\bibitem [{\citenamefont {\ifmmode \acute{C}\else
  \'{C}\fi{}wikli\ifmmode~\acute{n}\else \'{n}\fi{}ski}\ \emph
  {et~al.}(2015)\citenamefont {\ifmmode \acute{C}\else
  \'{C}\fi{}wikli\ifmmode~\acute{n}\else \'{n}\fi{}ski}, \citenamefont
  {Studzi\ifmmode~\acute{n}\else \'{n}\fi{}ski}, \citenamefont {Horodecki},\
  and\ \citenamefont {Oppenheim}}]{Piotr2015}%
  \BibitemOpen
  \bibfield  {author} {\bibinfo {author} {\bibfnamefont {P.}~\bibnamefont
  {\ifmmode \acute{C}\else \'{C}\fi{}wikli\ifmmode~\acute{n}\else
  \'{n}\fi{}ski}}, \bibinfo {author} {\bibfnamefont {M.}~\bibnamefont
  {Studzi\ifmmode~\acute{n}\else \'{n}\fi{}ski}}, \bibinfo {author}
  {\bibfnamefont {M.}~\bibnamefont {Horodecki}}, \ and\ \bibinfo {author}
  {\bibfnamefont {J.}~\bibnamefont {Oppenheim}},\ }\href {\doibase
  10.1103/PhysRevLett.115.210403} {\bibfield  {journal} {\bibinfo  {journal}
  {Phys. Rev. Lett.}\ }\textbf {\bibinfo {volume} {115}},\ \bibinfo {pages}
  {210403} (\bibinfo {year} {2015})}\BibitemShut {NoStop}%
\bibitem [{\citenamefont {Misra}\ \emph {et~al.}(2016)\citenamefont {Misra},
  \citenamefont {Singh}, \citenamefont {Bhattacharya},\ and\ \citenamefont
  {Pati}}]{Misra2016}%
  \BibitemOpen
  \bibfield  {author} {\bibinfo {author} {\bibfnamefont {A.}~\bibnamefont
  {Misra}}, \bibinfo {author} {\bibfnamefont {U.}~\bibnamefont {Singh}},
  \bibinfo {author} {\bibfnamefont {S.}~\bibnamefont {Bhattacharya}}, \ and\
  \bibinfo {author} {\bibfnamefont {A.~K.}\ \bibnamefont {Pati}},\ }\href
  {\doibase 10.1103/PhysRevA.93.052335} {\bibfield  {journal} {\bibinfo
  {journal} {Phys. Rev. A}\ }\textbf {\bibinfo {volume} {93}},\ \bibinfo
  {pages} {052335} (\bibinfo {year} {2016})}\BibitemShut {NoStop}%
\bibitem [{\citenamefont {Baumgratz}\ \emph {et~al.}(2014)\citenamefont
  {Baumgratz}, \citenamefont {Cramer},\ and\ \citenamefont
  {Plenio}}]{Baumgratz2014}%
  \BibitemOpen
  \bibfield  {author} {\bibinfo {author} {\bibfnamefont {T.}~\bibnamefont
  {Baumgratz}}, \bibinfo {author} {\bibfnamefont {M.}~\bibnamefont {Cramer}}, \
  and\ \bibinfo {author} {\bibfnamefont {M.~B.}\ \bibnamefont {Plenio}},\
  }\href {\doibase 10.1103/PhysRevLett.113.140401} {\bibfield  {journal}
  {\bibinfo  {journal} {Phys. Rev. Lett.}\ }\textbf {\bibinfo {volume} {113}},\
  \bibinfo {pages} {140401} (\bibinfo {year} {2014})}\BibitemShut {NoStop}%
\bibitem [{\citenamefont {Yadin}\ \emph {et~al.}()\citenamefont {Yadin},
  \citenamefont {Ma}, \citenamefont {Girolami}, \citenamefont {Gu},\ and\
  \citenamefont {Vedral}}]{Yadin2015}%
  \BibitemOpen
  \bibfield  {author} {\bibinfo {author} {\bibfnamefont {B.}~\bibnamefont
  {Yadin}}, \bibinfo {author} {\bibfnamefont {J.}~\bibnamefont {Ma}}, \bibinfo
  {author} {\bibfnamefont {D.}~\bibnamefont {Girolami}}, \bibinfo {author}
  {\bibfnamefont {M.}~\bibnamefont {Gu}}, \ and\ \bibinfo {author}
  {\bibfnamefont {V.}~\bibnamefont {Vedral}},\ }\href@noop {} {}\Eprint
  {http://arxiv.org/abs/1512.02085.} {arXiv:1512.02085.} \BibitemShut {NoStop}%
\bibitem [{\citenamefont {\AA{}berg}(2014)}]{Aberg14}%
  \BibitemOpen
  \bibfield  {author} {\bibinfo {author} {\bibfnamefont {J.}~\bibnamefont
  {\AA{}berg}},\ }\href {\doibase 10.1103/PhysRevLett.113.150402} {\bibfield
  {journal} {\bibinfo  {journal} {Phys. Rev. Lett.}\ }\textbf {\bibinfo
  {volume} {113}},\ \bibinfo {pages} {150402} (\bibinfo {year}
  {2014})}\BibitemShut {NoStop}%
\bibitem [{\citenamefont {Bai}\ and\ \citenamefont {Du}(2015)}]{Bai2015}%
  \BibitemOpen
  \bibfield  {author} {\bibinfo {author} {\bibfnamefont {Z.}~\bibnamefont
  {Bai}}\ and\ \bibinfo {author} {\bibfnamefont {S.}~\bibnamefont {Du}},\
  }\href@noop {} {\bibfield  {journal} {\bibinfo  {journal} {Quantum
  Information \& Computation}\ }\textbf {\bibinfo {volume} {15}},\ \bibinfo
  {pages} {1355} (\bibinfo {year} {2015})}\BibitemShut {NoStop}%
\bibitem [{\citenamefont {Bromley}\ \emph {et~al.}(2015)\citenamefont
  {Bromley}, \citenamefont {Cianciaruso},\ and\ \citenamefont
  {Adesso}}]{Bromley2015}%
  \BibitemOpen
  \bibfield  {author} {\bibinfo {author} {\bibfnamefont {T.~R.}\ \bibnamefont
  {Bromley}}, \bibinfo {author} {\bibfnamefont {M.}~\bibnamefont
  {Cianciaruso}}, \ and\ \bibinfo {author} {\bibfnamefont {G.}~\bibnamefont
  {Adesso}},\ }\href {\doibase 10.1103/PhysRevLett.114.210401} {\bibfield
  {journal} {\bibinfo  {journal} {Phys. Rev. Lett.}\ }\textbf {\bibinfo
  {volume} {114}},\ \bibinfo {pages} {210401} (\bibinfo {year}
  {2015})}\BibitemShut {NoStop}%
\bibitem [{\citenamefont {Cheng}\ and\ \citenamefont {Hall}(2015)}]{Cheng2015}%
  \BibitemOpen
  \bibfield  {author} {\bibinfo {author} {\bibfnamefont {S.}~\bibnamefont
  {Cheng}}\ and\ \bibinfo {author} {\bibfnamefont {M.~J.~W.}\ \bibnamefont
  {Hall}},\ }\href {\doibase 10.1103/PhysRevA.92.042101} {\bibfield  {journal}
  {\bibinfo  {journal} {Phys. Rev. A}\ }\textbf {\bibinfo {volume} {92}},\
  \bibinfo {pages} {042101} (\bibinfo {year} {2015})}\BibitemShut {NoStop}%
\bibitem [{\citenamefont {Chitambar}\ and\ \citenamefont
  {Gour}(2016)}]{Chitambar2016}%
  \BibitemOpen
  \bibfield  {author} {\bibinfo {author} {\bibfnamefont {E.}~\bibnamefont
  {Chitambar}}\ and\ \bibinfo {author} {\bibfnamefont {G.}~\bibnamefont
  {Gour}},\ }\href {\doibase 10.1103/PhysRevLett.117.030401} {\bibfield
  {journal} {\bibinfo  {journal} {Phys. Rev. Lett.}\ }\textbf {\bibinfo
  {volume} {117}},\ \bibinfo {pages} {030401} (\bibinfo {year}
  {2016})}\BibitemShut {NoStop}%
\bibitem [{\citenamefont {Chitambar}\ and\ \citenamefont
  {Hsieh}(2016)}]{ChitambarA2015}%
  \BibitemOpen
  \bibfield  {author} {\bibinfo {author} {\bibfnamefont {E.}~\bibnamefont
  {Chitambar}}\ and\ \bibinfo {author} {\bibfnamefont {M.-H.}\ \bibnamefont
  {Hsieh}},\ }\href {\doibase 10.1103/PhysRevLett.117.020402} {\bibfield
  {journal} {\bibinfo  {journal} {Phys. Rev. Lett.}\ }\textbf {\bibinfo
  {volume} {117}},\ \bibinfo {pages} {020402} (\bibinfo {year}
  {2016})}\BibitemShut {NoStop}%
\bibitem [{\citenamefont {Chitambar}\ \emph {et~al.}(2016)\citenamefont
  {Chitambar}, \citenamefont {Streltsov}, \citenamefont {Rana}, \citenamefont
  {Bera}, \citenamefont {Adesso},\ and\ \citenamefont
  {Lewenstein}}]{Chitambar2015}%
  \BibitemOpen
  \bibfield  {author} {\bibinfo {author} {\bibfnamefont {E.}~\bibnamefont
  {Chitambar}}, \bibinfo {author} {\bibfnamefont {A.}~\bibnamefont
  {Streltsov}}, \bibinfo {author} {\bibfnamefont {S.}~\bibnamefont {Rana}},
  \bibinfo {author} {\bibfnamefont {M.~N.}\ \bibnamefont {Bera}}, \bibinfo
  {author} {\bibfnamefont {G.}~\bibnamefont {Adesso}}, \ and\ \bibinfo {author}
  {\bibfnamefont {M.}~\bibnamefont {Lewenstein}},\ }\href {\doibase
  10.1103/PhysRevLett.116.070402} {\bibfield  {journal} {\bibinfo  {journal}
  {Phys. Rev. Lett.}\ }\textbf {\bibinfo {volume} {116}},\ \bibinfo {pages}
  {070402} (\bibinfo {year} {2016})}\BibitemShut {NoStop}%
\bibitem [{\citenamefont {Du}\ \emph {et~al.}(2015)\citenamefont {Du},
  \citenamefont {Bai},\ and\ \citenamefont {Guo}}]{Du2015}%
  \BibitemOpen
  \bibfield  {author} {\bibinfo {author} {\bibfnamefont {S.}~\bibnamefont
  {Du}}, \bibinfo {author} {\bibfnamefont {Z.}~\bibnamefont {Bai}}, \ and\
  \bibinfo {author} {\bibfnamefont {Y.}~\bibnamefont {Guo}},\ }\href {\doibase
  10.1103/PhysRevA.91.052120} {\bibfield  {journal} {\bibinfo  {journal} {Phys.
  Rev. A}\ }\textbf {\bibinfo {volume} {91}},\ \bibinfo {pages} {052120}
  (\bibinfo {year} {2015})}\BibitemShut {NoStop}%
\bibitem [{\citenamefont {Girolami}(2014)}]{Girolami14}%
  \BibitemOpen
  \bibfield  {author} {\bibinfo {author} {\bibfnamefont {D.}~\bibnamefont
  {Girolami}},\ }\href {\doibase 10.1103/PhysRevLett.113.170401} {\bibfield
  {journal} {\bibinfo  {journal} {Phys. Rev. Lett.}\ }\textbf {\bibinfo
  {volume} {113}},\ \bibinfo {pages} {170401} (\bibinfo {year}
  {2014})}\BibitemShut {NoStop}%
\bibitem [{\citenamefont {Hu}\ and\ \citenamefont {Fan}(2016)}]{Hu2016}%
  \BibitemOpen
  \bibfield  {author} {\bibinfo {author} {\bibfnamefont {M.-L.}\ \bibnamefont
  {Hu}}\ and\ \bibinfo {author} {\bibfnamefont {H.}~\bibnamefont {Fan}},\
  }\href {\doibase 10.1038/srep29260} {\bibfield  {journal} {\bibinfo
  {journal} {Sci. Rep.}\ }\textbf {\bibinfo {volume} {6}},\ \bibinfo {pages}
  {29260} (\bibinfo {year} {2016})}\BibitemShut {NoStop}%
\bibitem [{\citenamefont {Mani}\ and\ \citenamefont
  {Karimipour}(2015)}]{Mani2015}%
  \BibitemOpen
  \bibfield  {author} {\bibinfo {author} {\bibfnamefont {A.}~\bibnamefont
  {Mani}}\ and\ \bibinfo {author} {\bibfnamefont {V.}~\bibnamefont
  {Karimipour}},\ }\href {\doibase 10.1103/PhysRevA.92.032331} {\bibfield
  {journal} {\bibinfo  {journal} {Phys. Rev. A}\ }\textbf {\bibinfo {volume}
  {92}},\ \bibinfo {pages} {032331} (\bibinfo {year} {2015})}\BibitemShut
  {NoStop}%
\bibitem [{\citenamefont {Marvian}\ and\ \citenamefont
  {Spekkens}(2014)}]{Marvian14}%
  \BibitemOpen
  \bibfield  {author} {\bibinfo {author} {\bibfnamefont {I.}~\bibnamefont
  {Marvian}}\ and\ \bibinfo {author} {\bibfnamefont {R.~W.}\ \bibnamefont
  {Spekkens}},\ }\href {\doibase 10.1103/PhysRevA.90.062110} {\bibfield
  {journal} {\bibinfo  {journal} {Phys. Rev. A}\ }\textbf {\bibinfo {volume}
  {90}},\ \bibinfo {pages} {062110} (\bibinfo {year} {2014})}\BibitemShut
  {NoStop}%
\bibitem [{\citenamefont {Mondal}\ \emph {et~al.}(2016)\citenamefont {Mondal},
  \citenamefont {Datta},\ and\ \citenamefont {Sazim}}]{MondalA2015}%
  \BibitemOpen
  \bibfield  {author} {\bibinfo {author} {\bibfnamefont {D.}~\bibnamefont
  {Mondal}}, \bibinfo {author} {\bibfnamefont {C.}~\bibnamefont {Datta}}, \
  and\ \bibinfo {author} {\bibfnamefont {S.}~\bibnamefont {Sazim}},\ }\href
  {\doibase http://dx.doi.org/10.1016/j.physleta.2015.12.015} {\bibfield
  {journal} {\bibinfo  {journal} {Phys. Lett. A}\ }\textbf {\bibinfo {volume}
  {380}},\ \bibinfo {pages} {689 } (\bibinfo {year} {2016})}\BibitemShut
  {NoStop}%
\bibitem [{\citenamefont {Napoli}\ \emph {et~al.}(2016)\citenamefont {Napoli},
  \citenamefont {Bromley}, \citenamefont {Cianciaruso}, \citenamefont {Piani},
  \citenamefont {Johnston},\ and\ \citenamefont {Adesso}}]{Napoli2016}%
  \BibitemOpen
  \bibfield  {author} {\bibinfo {author} {\bibfnamefont {C.}~\bibnamefont
  {Napoli}}, \bibinfo {author} {\bibfnamefont {T.~R.}\ \bibnamefont {Bromley}},
  \bibinfo {author} {\bibfnamefont {M.}~\bibnamefont {Cianciaruso}}, \bibinfo
  {author} {\bibfnamefont {M.}~\bibnamefont {Piani}}, \bibinfo {author}
  {\bibfnamefont {N.}~\bibnamefont {Johnston}}, \ and\ \bibinfo {author}
  {\bibfnamefont {G.}~\bibnamefont {Adesso}},\ }\href {\doibase
  10.1103/PhysRevLett.116.150502} {\bibfield  {journal} {\bibinfo  {journal}
  {Phys. Rev. Lett.}\ }\textbf {\bibinfo {volume} {116}},\ \bibinfo {pages}
  {150502} (\bibinfo {year} {2016})}\BibitemShut {NoStop}%
\bibitem [{\citenamefont {Peng}\ \emph {et~al.}(2016)\citenamefont {Peng},
  \citenamefont {Jiang},\ and\ \citenamefont {Fan}}]{Peng2015}%
  \BibitemOpen
  \bibfield  {author} {\bibinfo {author} {\bibfnamefont {Y.}~\bibnamefont
  {Peng}}, \bibinfo {author} {\bibfnamefont {Y.}~\bibnamefont {Jiang}}, \ and\
  \bibinfo {author} {\bibfnamefont {H.}~\bibnamefont {Fan}},\ }\href {\doibase
  10.1103/PhysRevA.93.032326} {\bibfield  {journal} {\bibinfo  {journal} {Phys.
  Rev. A}\ }\textbf {\bibinfo {volume} {93}},\ \bibinfo {pages} {032326}
  (\bibinfo {year} {2016})}\BibitemShut {NoStop}%
\bibitem [{\citenamefont {Piani}\ \emph {et~al.}(2016)\citenamefont {Piani},
  \citenamefont {Cianciaruso}, \citenamefont {Bromley}, \citenamefont {Napoli},
  \citenamefont {Johnston},\ and\ \citenamefont {Adesso}}]{Piani2016}%
  \BibitemOpen
  \bibfield  {author} {\bibinfo {author} {\bibfnamefont {M.}~\bibnamefont
  {Piani}}, \bibinfo {author} {\bibfnamefont {M.}~\bibnamefont {Cianciaruso}},
  \bibinfo {author} {\bibfnamefont {T.~R.}\ \bibnamefont {Bromley}}, \bibinfo
  {author} {\bibfnamefont {C.}~\bibnamefont {Napoli}}, \bibinfo {author}
  {\bibfnamefont {N.}~\bibnamefont {Johnston}}, \ and\ \bibinfo {author}
  {\bibfnamefont {G.}~\bibnamefont {Adesso}},\ }\href {\doibase
  10.1103/PhysRevA.93.042107} {\bibfield  {journal} {\bibinfo  {journal} {Phys.
  Rev. A}\ }\textbf {\bibinfo {volume} {93}},\ \bibinfo {pages} {042107}
  (\bibinfo {year} {2016})}\BibitemShut {NoStop}%
\bibitem [{\citenamefont {Rana}\ \emph {et~al.}(2016)\citenamefont {Rana},
  \citenamefont {Parashar},\ and\ \citenamefont {Lewenstein}}]{Swapan2016}%
  \BibitemOpen
  \bibfield  {author} {\bibinfo {author} {\bibfnamefont {S.}~\bibnamefont
  {Rana}}, \bibinfo {author} {\bibfnamefont {P.}~\bibnamefont {Parashar}}, \
  and\ \bibinfo {author} {\bibfnamefont {M.}~\bibnamefont {Lewenstein}},\
  }\href {\doibase 10.1103/PhysRevA.93.012110} {\bibfield  {journal} {\bibinfo
  {journal} {Phys. Rev. A}\ }\textbf {\bibinfo {volume} {93}},\ \bibinfo
  {pages} {012110} (\bibinfo {year} {2016})}\BibitemShut {NoStop}%
\bibitem [{\citenamefont {Shao}\ \emph {et~al.}(2015)\citenamefont {Shao},
  \citenamefont {Xi}, \citenamefont {Fan},\ and\ \citenamefont {Li}}]{Fan15}%
  \BibitemOpen
  \bibfield  {author} {\bibinfo {author} {\bibfnamefont {L.-H.}\ \bibnamefont
  {Shao}}, \bibinfo {author} {\bibfnamefont {Z.}~\bibnamefont {Xi}}, \bibinfo
  {author} {\bibfnamefont {H.}~\bibnamefont {Fan}}, \ and\ \bibinfo {author}
  {\bibfnamefont {Y.}~\bibnamefont {Li}},\ }\href {\doibase
  10.1103/PhysRevA.91.042120} {\bibfield  {journal} {\bibinfo  {journal} {Phys.
  Rev. A}\ }\textbf {\bibinfo {volume} {91}},\ \bibinfo {pages} {042120}
  (\bibinfo {year} {2015})}\BibitemShut {NoStop}%
\bibitem [{\citenamefont {Singh}\ \emph {et~al.}(2015)\citenamefont {Singh},
  \citenamefont {Bera}, \citenamefont {Dhar},\ and\ \citenamefont
  {Pati}}]{Uttam2015}%
  \BibitemOpen
  \bibfield  {author} {\bibinfo {author} {\bibfnamefont {U.}~\bibnamefont
  {Singh}}, \bibinfo {author} {\bibfnamefont {M.~N.}\ \bibnamefont {Bera}},
  \bibinfo {author} {\bibfnamefont {H.~S.}\ \bibnamefont {Dhar}}, \ and\
  \bibinfo {author} {\bibfnamefont {A.~K.}\ \bibnamefont {Pati}},\ }\href
  {\doibase 10.1103/PhysRevA.91.052115} {\bibfield  {journal} {\bibinfo
  {journal} {Phys. Rev. A}\ }\textbf {\bibinfo {volume} {91}},\ \bibinfo
  {pages} {052115} (\bibinfo {year} {2015})}\BibitemShut {NoStop}%
\bibitem [{\citenamefont {{Singh}}\ \emph {et~al.}()\citenamefont {{Singh}},
  \citenamefont {{Bera}}, \citenamefont {{Misra}},\ and\ \citenamefont
  {{Pati}}}]{UttamA2015}%
  \BibitemOpen
  \bibfield  {author} {\bibinfo {author} {\bibfnamefont {U.}~\bibnamefont
  {{Singh}}}, \bibinfo {author} {\bibfnamefont {M.~N.}\ \bibnamefont {{Bera}}},
  \bibinfo {author} {\bibfnamefont {A.}~\bibnamefont {{Misra}}}, \ and\
  \bibinfo {author} {\bibfnamefont {A.~K.}\ \bibnamefont {{Pati}}},\
  }\href@noop {} {}\Eprint {http://arxiv.org/abs/1506.08186} {arXiv:1506.08186}
  \BibitemShut {NoStop}%
\bibitem [{\citenamefont {Singh}\ \emph
  {et~al.}(2016{\natexlab{a}})\citenamefont {Singh}, \citenamefont {Zhang},\
  and\ \citenamefont {Pati}}]{UttamS2015}%
  \BibitemOpen
  \bibfield  {author} {\bibinfo {author} {\bibfnamefont {U.}~\bibnamefont
  {Singh}}, \bibinfo {author} {\bibfnamefont {L.}~\bibnamefont {Zhang}}, \ and\
  \bibinfo {author} {\bibfnamefont {A.~K.}\ \bibnamefont {Pati}},\ }\href
  {\doibase 10.1103/PhysRevA.93.032125} {\bibfield  {journal} {\bibinfo
  {journal} {Phys. Rev. A}\ }\textbf {\bibinfo {volume} {93}},\ \bibinfo
  {pages} {032125} (\bibinfo {year} {2016}{\natexlab{a}})}\BibitemShut
  {NoStop}%
\bibitem [{\citenamefont {Singh}\ \emph
  {et~al.}(2016{\natexlab{b}})\citenamefont {Singh}, \citenamefont {Pati},\
  and\ \citenamefont {Bera}}]{Singh2016}%
  \BibitemOpen
  \bibfield  {author} {\bibinfo {author} {\bibfnamefont {U.}~\bibnamefont
  {Singh}}, \bibinfo {author} {\bibfnamefont {A.~K.}\ \bibnamefont {Pati}}, \
  and\ \bibinfo {author} {\bibfnamefont {M.~N.}\ \bibnamefont {Bera}},\ }\href
  {\doibase 10.3390/math4030047} {\bibfield  {journal} {\bibinfo  {journal}
  {Mathematics}\ }\textbf {\bibinfo {volume} {4}},\ \bibinfo {pages} {47}
  (\bibinfo {year} {2016}{\natexlab{b}})}\BibitemShut {NoStop}%
\bibitem [{\citenamefont {{Zhang}}\ \emph {et~al.}()\citenamefont {{Zhang}},
  \citenamefont {{Singh}},\ and\ \citenamefont {{Pati}}}]{Zhang2015}%
  \BibitemOpen
  \bibfield  {author} {\bibinfo {author} {\bibfnamefont {L.}~\bibnamefont
  {{Zhang}}}, \bibinfo {author} {\bibfnamefont {U.}~\bibnamefont {{Singh}}}, \
  and\ \bibinfo {author} {\bibfnamefont {A.~K.}\ \bibnamefont {{Pati}}},\
  }\href@noop {} {}\Eprint {http://arxiv.org/abs/1510.08859} {arXiv:1510.08859}
  \BibitemShut {NoStop}%
\bibitem [{\citenamefont {Bu}\ \emph {et~al.}(2016)\citenamefont {Bu},
  \citenamefont {Singh},\ and\ \citenamefont {Wu}}]{Buc2015}%
  \BibitemOpen
  \bibfield  {author} {\bibinfo {author} {\bibfnamefont {K.}~\bibnamefont
  {Bu}}, \bibinfo {author} {\bibfnamefont {U.}~\bibnamefont {Singh}}, \ and\
  \bibinfo {author} {\bibfnamefont {J.}~\bibnamefont {Wu}},\ }\href {\doibase
  10.1103/PhysRevA.93.042326} {\bibfield  {journal} {\bibinfo  {journal} {Phys.
  Rev. A}\ }\textbf {\bibinfo {volume} {93}},\ \bibinfo {pages} {042326}
  (\bibinfo {year} {2016})}\BibitemShut {NoStop}%
\bibitem [{\citenamefont {{Bu}}\ \emph {et~al.}()\citenamefont {{Bu}},
  \citenamefont {{Singh}}, \citenamefont {{Zhang}},\ and\ \citenamefont
  {{Wu}}}]{BuD2016}%
  \BibitemOpen
  \bibfield  {author} {\bibinfo {author} {\bibfnamefont {K.}~\bibnamefont
  {{Bu}}}, \bibinfo {author} {\bibfnamefont {U.}~\bibnamefont {{Singh}}},
  \bibinfo {author} {\bibfnamefont {L.}~\bibnamefont {{Zhang}}}, \ and\
  \bibinfo {author} {\bibfnamefont {J.}~\bibnamefont {{Wu}}},\ }\href@noop {}
  {}\Eprint {http://arxiv.org/abs/1603.06715} {arXiv:1603.06715} \BibitemShut
  {NoStop}%
\bibitem [{\citenamefont {Bu}\ \emph {et~al.}()\citenamefont {Bu},
  \citenamefont {Zhang},\ and\ \citenamefont {Wu}}]{Bu2015}%
  \BibitemOpen
  \bibfield  {author} {\bibinfo {author} {\bibfnamefont {K.}~\bibnamefont
  {Bu}}, \bibinfo {author} {\bibfnamefont {L.}~\bibnamefont {Zhang}}, \ and\
  \bibinfo {author} {\bibfnamefont {J.}~\bibnamefont {Wu}},\ }\href@noop {}
  {}\Eprint {http://arxiv.org/abs/1509.09109} {arXiv:1509.09109} \BibitemShut
  {NoStop}%
\bibitem [{\citenamefont {{Streltsov}}()}]{AlexB2015}%
  \BibitemOpen
  \bibfield  {author} {\bibinfo {author} {\bibfnamefont {A.}~\bibnamefont
  {{Streltsov}}},\ }\href@noop {} {}\Eprint {http://arxiv.org/abs/1511.08346}
  {arXiv:1511.08346} \BibitemShut {NoStop}%
\bibitem [{\citenamefont {{Streltsov}}\ \emph {et~al.}()\citenamefont
  {{Streltsov}}, \citenamefont {{Rana}}, \citenamefont {{Nath Bera}},\ and\
  \citenamefont {{Lewenstein}}}]{StreltsovA2015}%
  \BibitemOpen
  \bibfield  {author} {\bibinfo {author} {\bibfnamefont {A.}~\bibnamefont
  {{Streltsov}}}, \bibinfo {author} {\bibfnamefont {S.}~\bibnamefont {{Rana}}},
  \bibinfo {author} {\bibfnamefont {M.}~\bibnamefont {{Nath Bera}}}, \ and\
  \bibinfo {author} {\bibfnamefont {M.}~\bibnamefont {{Lewenstein}}},\
  }\href@noop {} {}\Eprint {http://arxiv.org/abs/1509.07456} {arXiv:1509.07456}
  \BibitemShut {NoStop}%
\bibitem [{\citenamefont {Xi}\ \emph {et~al.}(2015)\citenamefont {Xi},
  \citenamefont {Li},\ and\ \citenamefont {Fan}}]{Xi2015}%
  \BibitemOpen
  \bibfield  {author} {\bibinfo {author} {\bibfnamefont {Z.}~\bibnamefont
  {Xi}}, \bibinfo {author} {\bibfnamefont {Y.}~\bibnamefont {Li}}, \ and\
  \bibinfo {author} {\bibfnamefont {H.}~\bibnamefont {Fan}},\ }\href
  {http://dx.doi.org/10.1038/srep10922} {\bibfield  {journal} {\bibinfo
  {journal} {Sci. Rep.}\ }\textbf {\bibinfo {volume} {5}},\ \bibinfo {pages}
  {10922} (\bibinfo {year} {2015})}\BibitemShut {NoStop}%
\bibitem [{\citenamefont {Yao}\ \emph {et~al.}(2015)\citenamefont {Yao},
  \citenamefont {Xiao}, \citenamefont {Ge},\ and\ \citenamefont
  {Sun}}]{Yao2015}%
  \BibitemOpen
  \bibfield  {author} {\bibinfo {author} {\bibfnamefont {Y.}~\bibnamefont
  {Yao}}, \bibinfo {author} {\bibfnamefont {X.}~\bibnamefont {Xiao}}, \bibinfo
  {author} {\bibfnamefont {L.}~\bibnamefont {Ge}}, \ and\ \bibinfo {author}
  {\bibfnamefont {C.~P.}\ \bibnamefont {Sun}},\ }\href {\doibase
  10.1103/PhysRevA.92.022112} {\bibfield  {journal} {\bibinfo  {journal} {Phys.
  Rev. A}\ }\textbf {\bibinfo {volume} {92}},\ \bibinfo {pages} {022112}
  (\bibinfo {year} {2015})}\BibitemShut {NoStop}%
\bibitem [{\citenamefont {Winter}\ and\ \citenamefont
  {Yang}(2016)}]{Winter2016}%
  \BibitemOpen
  \bibfield  {author} {\bibinfo {author} {\bibfnamefont {A.}~\bibnamefont
  {Winter}}\ and\ \bibinfo {author} {\bibfnamefont {D.}~\bibnamefont {Yang}},\
  }\href {\doibase 10.1103/PhysRevLett.116.120404} {\bibfield  {journal}
  {\bibinfo  {journal} {Phys. Rev. Lett.}\ }\textbf {\bibinfo {volume} {116}},\
  \bibinfo {pages} {120404} (\bibinfo {year} {2016})}\BibitemShut {NoStop}%
\bibitem [{\citenamefont {Nielsen}\ and\ \citenamefont
  {Chuang}(2010)}]{Nielsen10}%
  \BibitemOpen
  \bibfield  {author} {\bibinfo {author} {\bibfnamefont {M.~A.}\ \bibnamefont
  {Nielsen}}\ and\ \bibinfo {author} {\bibfnamefont {I.~L.}\ \bibnamefont
  {Chuang}},\ }\href {http://dx.doi.org/10.1017/CBO9780511976667} {\emph
  {\bibinfo {title} {Quantum Computation and Quantum Information}}}\ (\bibinfo
  {publisher} {Cambridge University Press},\ \bibinfo {year}
  {2010})\BibitemShut {NoStop}%
\bibitem [{\citenamefont {Horodecki}\ \emph {et~al.}(2003)\citenamefont
  {Horodecki}, \citenamefont {Shor},\ and\ \citenamefont
  {Ruskai}}]{Horodecki2003}%
  \BibitemOpen
  \bibfield  {author} {\bibinfo {author} {\bibfnamefont {M.}~\bibnamefont
  {Horodecki}}, \bibinfo {author} {\bibfnamefont {P.~W.}\ \bibnamefont {Shor}},
  \ and\ \bibinfo {author} {\bibfnamefont {M.~B.}\ \bibnamefont {Ruskai}},\
  }\href {\doibase 10.1142/S0129055X03001709} {\bibfield  {journal} {\bibinfo
  {journal} {Rev. Math. Phys.}\ }\textbf {\bibinfo {volume} {15}},\ \bibinfo
  {pages} {629} (\bibinfo {year} {2003})}\BibitemShut {NoStop}%
\bibitem [{\citenamefont {Ruskai}(2003)}]{Ruskai2003}%
  \BibitemOpen
  \bibfield  {author} {\bibinfo {author} {\bibfnamefont {M.~B.}\ \bibnamefont
  {Ruskai}},\ }\href {\doibase 10.1142/S0129055X03001710} {\bibfield  {journal}
  {\bibinfo  {journal} {Rev. Math. Phys.}\ }\textbf {\bibinfo {volume} {15}},\
  \bibinfo {pages} {643} (\bibinfo {year} {2003})}\BibitemShut {NoStop}%
\bibitem [{\citenamefont {Sabapathy}(2015)}]{Sabapathy2015}%
  \BibitemOpen
  \bibfield  {author} {\bibinfo {author} {\bibfnamefont {K.~K.}\ \bibnamefont
  {Sabapathy}},\ }\href {\doibase 10.1103/PhysRevA.92.052301} {\bibfield
  {journal} {\bibinfo  {journal} {Phys. Rev. A}\ }\textbf {\bibinfo {volume}
  {92}},\ \bibinfo {pages} {052301} (\bibinfo {year} {2015})}\BibitemShut
  {NoStop}%
\bibitem [{\citenamefont {Jamio{\l}'kowski}(1972)}]{Jamiolkowski1972}%
  \BibitemOpen
  \bibfield  {author} {\bibinfo {author} {\bibfnamefont {A.}~\bibnamefont
  {Jamio{\l}'kowski}},\ }\href {\doibase
  http://dx.doi.org/10.1016/0034-4877(72)90011-0} {\bibfield  {journal}
  {\bibinfo  {journal} {Rep. Math. Phys.}\ }\textbf {\bibinfo {volume} {3}},\
  \bibinfo {pages} {275 } (\bibinfo {year} {1972})}\BibitemShut {NoStop}%
\bibitem [{\citenamefont {Aolita}\ \emph {et~al.}(2015)\citenamefont {Aolita},
  \citenamefont {de~Melo},\ and\ \citenamefont {Davidovich}}]{Aolita2015}%
  \BibitemOpen
  \bibfield  {author} {\bibinfo {author} {\bibfnamefont {L.}~\bibnamefont
  {Aolita}}, \bibinfo {author} {\bibfnamefont {F.}~\bibnamefont {de~Melo}}, \
  and\ \bibinfo {author} {\bibfnamefont {L.}~\bibnamefont {Davidovich}},\
  }\href {http://stacks.iop.org/0034-4885/78/i=4/a=042001} {\bibfield
  {journal} {\bibinfo  {journal} {Rep. Prog. Phys.}\ }\textbf {\bibinfo
  {volume} {78}},\ \bibinfo {pages} {042001} (\bibinfo {year}
  {2015})}\BibitemShut {NoStop}%
\bibitem [{\citenamefont {Gour}\ and\ \citenamefont
  {Spekkens}(2008)}]{Gour2008}%
  \BibitemOpen
  \bibfield  {author} {\bibinfo {author} {\bibfnamefont {G.}~\bibnamefont
  {Gour}}\ and\ \bibinfo {author} {\bibfnamefont {R.~W.}\ \bibnamefont
  {Spekkens}},\ }\href {http://stacks.iop.org/1367-2630/10/i=3/a=033023}
  {\bibfield  {journal} {\bibinfo  {journal} {New J. Phys.}\ }\textbf {\bibinfo
  {volume} {10}},\ \bibinfo {pages} {033023} (\bibinfo {year}
  {2008})}\BibitemShut {NoStop}%
\bibitem [{\citenamefont {Holevo}(1998)}]{Holevo1999}%
  \BibitemOpen
  \bibfield  {author} {\bibinfo {author} {\bibfnamefont {A.~S.}\ \bibnamefont
  {Holevo}},\ }\href {http://stacks.iop.org/0036-0279/53/i=6/a=R04} {\bibfield
  {journal} {\bibinfo  {journal} {Russ. Math. Surv.}\ }\textbf {\bibinfo
  {volume} {53}},\ \bibinfo {pages} {1295} (\bibinfo {year}
  {1998})}\BibitemShut {NoStop}%
\bibitem [{\citenamefont {Ruskai}\ \emph {et~al.}(2002)\citenamefont {Ruskai},
  \citenamefont {Szarek},\ and\ \citenamefont {Werner}}]{Ruskai2002}%
  \BibitemOpen
  \bibfield  {author} {\bibinfo {author} {\bibfnamefont {M.~B.}\ \bibnamefont
  {Ruskai}}, \bibinfo {author} {\bibfnamefont {S.}~\bibnamefont {Szarek}}, \
  and\ \bibinfo {author} {\bibfnamefont {E.}~\bibnamefont {Werner}},\ }\href
  {\doibase http://dx.doi.org/10.1016/S0024-3795(01)00547-X} {\bibfield
  {journal} {\bibinfo  {journal} {Linear Algebra Appl.}\ }\textbf {\bibinfo
  {volume} {347}},\ \bibinfo {pages} {159 } (\bibinfo {year}
  {2002})}\BibitemShut {NoStop}%
\bibitem [{\citenamefont {King}\ and\ \citenamefont {Ruskai}(2001)}]{King2001}%
  \BibitemOpen
  \bibfield  {author} {\bibinfo {author} {\bibfnamefont {C.}~\bibnamefont
  {King}}\ and\ \bibinfo {author} {\bibfnamefont {M.~B.}\ \bibnamefont
  {Ruskai}},\ }\href {\doibase 10.1109/18.904522} {\bibfield  {journal}
  {\bibinfo  {journal} {IEEE Trans. Inf. Theory}\ }\textbf {\bibinfo {volume}
  {47}},\ \bibinfo {pages} {192} (\bibinfo {year} {2001})}\BibitemShut
  {NoStop}%
\bibitem [{\citenamefont {Lami}\ and\ \citenamefont
  {Giovannetti}(2015)}]{Lami2015}%
  \BibitemOpen
  \bibfield  {author} {\bibinfo {author} {\bibfnamefont {L.}~\bibnamefont
  {Lami}}\ and\ \bibinfo {author} {\bibfnamefont {V.}~\bibnamefont
  {Giovannetti}},\ }\href
  {http://scitation.aip.org/content/aip/journal/jmp/56/9/10.1063/1.4931482}
  {\bibfield  {journal} {\bibinfo  {journal} {J. Math. Phys.}\ }\textbf
  {\bibinfo {volume} {56}},\ \bibinfo {eid} {092201} (\bibinfo {year}
  {2015})}\BibitemShut {NoStop}%
\bibitem [{\citenamefont {Hioe}\ and\ \citenamefont {Eberly}(1981)}]{Hioe1981}%
  \BibitemOpen
  \bibfield  {author} {\bibinfo {author} {\bibfnamefont {F.~T.}\ \bibnamefont
  {Hioe}}\ and\ \bibinfo {author} {\bibfnamefont {J.~H.}\ \bibnamefont
  {Eberly}},\ }\href {\doibase 10.1103/PhysRevLett.47.838} {\bibfield
  {journal} {\bibinfo  {journal} {Phys. Rev. Lett.}\ }\textbf {\bibinfo
  {volume} {47}},\ \bibinfo {pages} {838} (\bibinfo {year} {1981})}\BibitemShut
  {NoStop}%
\bibitem [{\citenamefont {Mahler}\ and\ \citenamefont
  {Weberru\ss{}}(1998)}]{Mahler98}%
  \BibitemOpen
  \bibfield  {author} {\bibinfo {author} {\bibfnamefont {G.}~\bibnamefont
  {Mahler}}\ and\ \bibinfo {author} {\bibfnamefont {V.}~\bibnamefont
  {Weberru\ss{}}},\ }\href {\doibase 10.1007/978-3-662-03669-3} {\emph
  {\bibinfo {title} {Quantum Networks: Dynamics of Open Nanostructures}}}\
  (\bibinfo  {publisher} {Springer-Verlag Berlin Heidelberg},\ \bibinfo {year}
  {1998})\BibitemShut {NoStop}%
\bibitem [{\citenamefont {Byrd}\ and\ \citenamefont
  {Khaneja}(2003)}]{Byrd2003}%
  \BibitemOpen
  \bibfield  {author} {\bibinfo {author} {\bibfnamefont {M.~S.}\ \bibnamefont
  {Byrd}}\ and\ \bibinfo {author} {\bibfnamefont {N.}~\bibnamefont {Khaneja}},\
  }\href {\doibase 10.1103/PhysRevA.68.062322} {\bibfield  {journal} {\bibinfo
  {journal} {Phys. Rev. A}\ }\textbf {\bibinfo {volume} {68}},\ \bibinfo
  {pages} {062322} (\bibinfo {year} {2003})}\BibitemShut {NoStop}%
\bibitem [{\citenamefont {Kimura}(2003)}]{Kimura03}%
  \BibitemOpen
  \bibfield  {author} {\bibinfo {author} {\bibfnamefont {G.}~\bibnamefont
  {Kimura}},\ }\href {\doibase http://dx.doi.org/10.1016/S0375-9601(03)00941-1}
  {\bibfield  {journal} {\bibinfo  {journal} {Phys. Lett. A}\ }\textbf
  {\bibinfo {volume} {314}},\ \bibinfo {pages} {339 } (\bibinfo {year}
  {2003})}\BibitemShut {NoStop}%
\bibitem [{\citenamefont {Hayden}\ \emph {et~al.}(2006)\citenamefont {Hayden},
  \citenamefont {Leung},\ and\ \citenamefont {Winter}}]{Hayden2006}%
  \BibitemOpen
  \bibfield  {author} {\bibinfo {author} {\bibfnamefont {P.}~\bibnamefont
  {Hayden}}, \bibinfo {author} {\bibfnamefont {D.~W.}\ \bibnamefont {Leung}}, \
  and\ \bibinfo {author} {\bibfnamefont {A.}~\bibnamefont {Winter}},\ }\href
  {\doibase 10.1007/s00220-006-1535-6} {\bibfield  {journal} {\bibinfo
  {journal} {Commun. Math. Phys.}\ }\textbf {\bibinfo {volume} {265}},\
  \bibinfo {pages} {95} (\bibinfo {year} {2006})}\BibitemShut {NoStop}%
\bibitem [{\citenamefont {Popescu}\ \emph {et~al.}(2006)\citenamefont
  {Popescu}, \citenamefont {Short},\ and\ \citenamefont
  {Winter}}]{Popescu2006}%
  \BibitemOpen
  \bibfield  {author} {\bibinfo {author} {\bibfnamefont {S.}~\bibnamefont
  {Popescu}}, \bibinfo {author} {\bibfnamefont {A.~J.}\ \bibnamefont {Short}},
  \ and\ \bibinfo {author} {\bibfnamefont {A.}~\bibnamefont {Winter}},\ }\href
  {\doibase 10.1038/nphys444} {\bibfield  {journal} {\bibinfo  {journal} {Nat.
  Phys.}\ }\textbf {\bibinfo {volume} {2}},\ \bibinfo {pages} {754} (\bibinfo
  {year} {2006})}\BibitemShut {NoStop}%
\bibitem [{\citenamefont {Hayden}\ and\ \citenamefont
  {Winter}(2008)}]{Hayden2008}%
  \BibitemOpen
  \bibfield  {author} {\bibinfo {author} {\bibfnamefont {P.}~\bibnamefont
  {Hayden}}\ and\ \bibinfo {author} {\bibfnamefont {A.}~\bibnamefont
  {Winter}},\ }\href {\doibase 10.1007/s00220-008-0624-0} {\bibfield  {journal}
  {\bibinfo  {journal} {Commun. Math. Phys.}\ }\textbf {\bibinfo {volume}
  {284}},\ \bibinfo {pages} {263} (\bibinfo {year} {2008})}\BibitemShut
  {NoStop}%
\bibitem [{\citenamefont {Tiersch}(2009)}]{Tiersch2009}%
  \BibitemOpen
  \bibfield  {author} {\bibinfo {author} {\bibfnamefont {M.}~\bibnamefont
  {Tiersch}},\ }\emph {\bibinfo {title} {Benchmarks and statistics of
  entanglement dynamics}},\ \href
  {http://www.freidok.uni-freiburg.de/volltexte/6878} {Ph.D. thesis},\ \bibinfo
   {school} {University of Freiburg} (\bibinfo {year} {2009})\BibitemShut
  {NoStop}%
\bibitem [{\citenamefont {Datta}(2010)}]{Datta2010}%
  \BibitemOpen
  \bibfield  {author} {\bibinfo {author} {\bibfnamefont {A.}~\bibnamefont
  {Datta}},\ }\href {\doibase 10.1103/PhysRevA.81.052312} {\bibfield  {journal}
  {\bibinfo  {journal} {Phys. Rev. A}\ }\textbf {\bibinfo {volume} {81}},\
  \bibinfo {pages} {052312} (\bibinfo {year} {2010})}\BibitemShut {NoStop}%
\bibitem [{\citenamefont {M{\"u}ller}\ \emph {et~al.}(2011)\citenamefont
  {M{\"u}ller}, \citenamefont {Gross},\ and\ \citenamefont
  {Eisert}}]{Muller2011}%
  \BibitemOpen
  \bibfield  {author} {\bibinfo {author} {\bibfnamefont {M.~P.}\ \bibnamefont
  {M{\"u}ller}}, \bibinfo {author} {\bibfnamefont {D.}~\bibnamefont {Gross}}, \
  and\ \bibinfo {author} {\bibfnamefont {J.}~\bibnamefont {Eisert}},\ }\href
  {\doibase 10.1007/s00220-011-1205-1} {\bibfield  {journal} {\bibinfo
  {journal} {Commun. Math. Phys.}\ }\textbf {\bibinfo {volume} {303}},\
  \bibinfo {pages} {785} (\bibinfo {year} {2011})}\BibitemShut {NoStop}%
\bibitem [{\citenamefont {Tiersch}\ \emph {et~al.}(2013)\citenamefont
  {Tiersch}, \citenamefont {de~Melo},\ and\ \citenamefont
  {Buchleitner}}]{Tiersch2013}%
  \BibitemOpen
  \bibfield  {author} {\bibinfo {author} {\bibfnamefont {M.}~\bibnamefont
  {Tiersch}}, \bibinfo {author} {\bibfnamefont {F.}~\bibnamefont {de~Melo}}, \
  and\ \bibinfo {author} {\bibfnamefont {A.}~\bibnamefont {Buchleitner}},\
  }\href {http://stacks.iop.org/1751-8121/46/i=8/a=085301} {\bibfield
  {journal} {\bibinfo  {journal} {J. Phys. A: Math. Theor.}\ }\textbf {\bibinfo
  {volume} {46}},\ \bibinfo {pages} {085301} (\bibinfo {year}
  {2013})}\BibitemShut {NoStop}%
\bibitem [{\citenamefont {Ruskai}(1994)}]{Ruskai1994}%
  \BibitemOpen
  \bibfield  {author} {\bibinfo {author} {\bibfnamefont {M.~B.}\ \bibnamefont
  {Ruskai}},\ }\href {\doibase 10.1142/S0129055X94000407} {\bibfield  {journal}
  {\bibinfo  {journal} {Rev. Math. Phys.}\ }\textbf {\bibinfo {volume} {06}},\
  \bibinfo {pages} {1147} (\bibinfo {year} {1994})}\BibitemShut {NoStop}%
\bibitem [{\citenamefont {Wootters}(1990)}]{Wootters1990}%
  \BibitemOpen
  \bibfield  {author} {\bibinfo {author} {\bibfnamefont {W.}~\bibnamefont
  {Wootters}},\ }\href {\doibase 10.1007/BF01883491} {\bibfield  {journal}
  {\bibinfo  {journal} {Found. Phys.}\ }\textbf {\bibinfo {volume} {20}},\
  \bibinfo {pages} {1365} (\bibinfo {year} {1990})}\BibitemShut {NoStop}%
\bibitem [{\citenamefont {Zyczkowski}\ and\ \citenamefont
  {Kus}(1994)}]{Zyczkowski1994}%
  \BibitemOpen
  \bibfield  {author} {\bibinfo {author} {\bibfnamefont {K.}~\bibnamefont
  {Zyczkowski}}\ and\ \bibinfo {author} {\bibfnamefont {M.}~\bibnamefont
  {Kus}},\ }\href {http://stacks.iop.org/0305-4470/27/i=12/a=028} {\bibfield
  {journal} {\bibinfo  {journal} {J. Phys. A: Math. Gen.}\ }\textbf {\bibinfo
  {volume} {27}},\ \bibinfo {pages} {4235} (\bibinfo {year}
  {1994})}\BibitemShut {NoStop}%
\bibitem [{\citenamefont {\.Zyczkowski}\ and\ \citenamefont
  {Sommers}(2001)}]{Zyczkowski2001}%
  \BibitemOpen
  \bibfield  {author} {\bibinfo {author} {\bibfnamefont {K.}~\bibnamefont
  {\.Zyczkowski}}\ and\ \bibinfo {author} {\bibfnamefont {H.-J.}\ \bibnamefont
  {Sommers}},\ }\href {http://stacks.iop.org/0305-4470/34/i=35/a=335}
  {\bibfield  {journal} {\bibinfo  {journal} {J. Phys. A: Math. Gen.}\ }\textbf
  {\bibinfo {volume} {34}},\ \bibinfo {pages} {7111} (\bibinfo {year}
  {2001})}\BibitemShut {NoStop}%
\bibitem [{\citenamefont {Bengtsson}\ and\ \citenamefont
  {\.{Z}yczkowski}(2008)}]{Bengtsson2008}%
  \BibitemOpen
  \bibfield  {author} {\bibinfo {author} {\bibfnamefont {I.}~\bibnamefont
  {Bengtsson}}\ and\ \bibinfo {author} {\bibfnamefont {K.}~\bibnamefont
  {\.{Z}yczkowski}},\ }\href@noop {} {\emph {\bibinfo {title} {Geometry of
  Quantum States: An Introduction to Quantum Entanglement}}}\ (\bibinfo
  {publisher} {Cambridge University Press},\ \bibinfo {year}
  {2008})\BibitemShut {NoStop}%
\bibitem [{\citenamefont {Aubrun}\ \emph {et~al.}(2014)\citenamefont {Aubrun},
  \citenamefont {Szarek},\ and\ \citenamefont {Ye}}]{Aubrun2014}%
  \BibitemOpen
  \bibfield  {author} {\bibinfo {author} {\bibfnamefont {G.}~\bibnamefont
  {Aubrun}}, \bibinfo {author} {\bibfnamefont {S.~J.}\ \bibnamefont {Szarek}},
  \ and\ \bibinfo {author} {\bibfnamefont {D.}~\bibnamefont {Ye}},\ }\href
  {\doibase 10.1002/cpa.21460} {\bibfield  {journal} {\bibinfo  {journal}
  {Communications on Pure and Applied Mathematics}\ }\textbf {\bibinfo {volume}
  {67}},\ \bibinfo {pages} {129} (\bibinfo {year} {2014})}\BibitemShut
  {NoStop}%
\bibitem [{\citenamefont {Ledoux}(2005)}]{Ledoux2005}%
  \BibitemOpen
  \bibfield  {author} {\bibinfo {author} {\bibfnamefont {M.}~\bibnamefont
  {Ledoux}},\ }\href@noop {} {\emph {\bibinfo {title} {The Concentration of
  Measure Phenomenon}}}\ (\bibinfo  {publisher} {American Mathematical
  Society},\ \bibinfo {year} {2005})\BibitemShut {NoStop}%
\bibitem [{\citenamefont {\'{O}Searc\'{o}id}(2007)}]{Searcoid2007}%
  \BibitemOpen
  \bibfield  {author} {\bibinfo {author} {\bibfnamefont {M.}~\bibnamefont
  {\'{O}Searc\'{o}id}},\ }\href {\doibase 10.1007/978-1-84628-627-8} {\emph
  {\bibinfo {title} {Metric Spaces}}}\ (\bibinfo  {publisher} {Springer-Verlag
  London},\ \bibinfo {year} {2007})\BibitemShut {NoStop}%
\bibitem [{\citenamefont {Wilde}(2013)}]{Wilde13}%
  \BibitemOpen
  \bibfield  {author} {\bibinfo {author} {\bibfnamefont {M.~M.}\ \bibnamefont
  {Wilde}},\ }\href {http://dx.doi.org/10.1017/CBO9781139525343} {\emph
  {\bibinfo {title} {Quantum Information Theory}}}\ (\bibinfo  {publisher}
  {Cambridge University Press},\ \bibinfo {year} {2013})\BibitemShut {NoStop}%
\bibitem [{\citenamefont {Watrous}(2011)}]{Watrous2011}%
  \BibitemOpen
  \bibfield  {author} {\bibinfo {author} {\bibfnamefont {J.}~\bibnamefont
  {Watrous}},\ }\href {https://cs.uwaterloo.ca/~watrous/TQI/} {\emph {\bibinfo
  {title} {Theory of quantum information}}}\ (\bibinfo  {publisher}
  {Unpublished},\ \bibinfo {year} {2011})\BibitemShut {NoStop}%
\end{thebibliography}%

\end{document}